\renewcommand\bra[1]{{\langle{#1}|}}
\renewcommand\ket[1]{%
  \@ifnextchar\bra{\k@t{#1}\!}{\k@t{#1}}%
}
\newcommand\k@t[1]{{|{#1}\rangle}}
\newcommand{\singleket}[1]{\k@t{#1}}
\newcommand{\supplementary}{Appendix}
\newcommand\tr{\mathrm{Tr}}
\newcommand{\cc}[1]{{#1}^\ast}
\newcommand{\abs}[1]{\left\vert{} #1 \right\vert}
\newcommand{\norm}[1]{\left\Vert{} #1 \right\Vert}
\newcommand{\ii}{\mathrm{i}}
\newcommand{\adj}[1]{{#1}^\dagger}
\renewcommand{\vec}[1]{\boldsymbol{#1}}
\newcommand{\mat}[1]{\boldsymbol{#1}}
\newcommand{\BC}{{\mathbb{C}}}
\newcommand{\CO}{{\mathcal{O}}}
\newcommand{\Hermitian}{\mathbb{H}}
\newcommand{\Sphere}{\mathcal{S}}
\DeclareMathOperator*{\argmin}{argmin}
\newtheorem{theorem}{Theorem}
\newtheorem{corollary}[theorem]{Corollary}
\newtheorem{lemma}[theorem]{Lemma}
\newtheorem{proposition}[theorem]{Proposition}
\newtheorem{protocol}[theorem]{Protocol}
\newcommand{\Cologne}{Institute for Theoretical Physics, University of Cologne, Germany}
\newcommand{\Bristol}{Quantum Engineering and Technology Laboratories, School of Physics and Department of Electrical and Electronic Engineering, University of Bristol, UK}
\newcommand{\Imperial}{Department of Physics, Imperial College London, London, UK}
\newcommand{\Kepler}{Institute for Integrated Circuits, Johannes Kepler University Linz, Altenbergerstrasse 69, 4040 Linz, Austria}
\newcommand{\NobuB}{Currently at: Department of Communications Engineering, Graduate School of Engineering, Tohoku University, Sendai, Japan}
\newcommand{\NobuA}{NTT Basic Research Laboratories, NTT Corporation, Atsugi, Japan}
\newcommand{\hashi}{  NTT Device Technology Laboratories, NTT Corporation, Atsugi, Japan}
\begin{document}
 
\title{Rapid characterisation of linear-optical networks via PhaseLift}
\author{D.\ Suess}
\affiliation{\Cologne}
\author{N.\ Maraviglia}
\affiliation{\Bristol}
\author{R.\ Kueng}
\affiliation{\Kepler}
\author{A.\ Ma\"inos}
\affiliation{\Bristol}
\author{C.\ Sparrow}
\affiliation{\Bristol}
\affiliation{\Imperial}
\author{T.\ Hashimoto}
\affiliation{\hashi}
\author{N.\ Matsuda}
\affiliation{\NobuA}
\affiliation{\NobuB}
\author{D.\ Gross}
\affiliation{\Cologne}
\author{A.\ Laing}
\affiliation{\Bristol}

\begin{abstract}
  Linear-optical circuits are elementary building blocks for classical and quantum information processing with light.
  In particular, due to its monolithic structure, integrated photonics offers great phase-stability and can rely on the large scale manufacturability provided by the semiconductor industry.
  New devices, based on such optical circuits, hold the promise of faster and energy-efficient computations in machine learning applications and even implementing quantum algorithms intractable for classical computers.
  However, this technological revolution requires accurate and scalable certification protocols for  devices that can be comprised of thousands of optical modes. 
  Here, we present a novel technique to reconstruct the transfer matrix of linear optical networks that is based on the recent advances in low-rank matrix recovery and convex optimisation problems known as PhaseLift algorithms. Conveniently, our characterisation protocol can be performed with a coherent classical light source and photodiodes. 
  We prove that this method is robust to noise and scales efficiently with the number of modes.
  We experimentally tested the proposed characterisation protocol on a programmable integrated interferometer designed for quantum information processing. 
  %
  We compared the transfer matrix reconstruction obtained with our method against the one provided by a more demanding reconstruction scheme based on two-photon quantum interference.
  %
  For 5-dimensional random unitaries, the average circuit fidelity between the matrices obtained from the two reconstructions is $0.993$.
\end{abstract}

\maketitle
\section*{Introduction}
\label{sec:introduction}

\subsection*{Motivation}

Linear optical networks (LON) are fundamental to the processing of quantum and classical information with light. 
Passive and reconfigurable linear optical circuits have been proposed and demonstrated for many applications
including telecommunications~\cite{Miller2015-sortingBeams}, as processing units for machine learning~\cite{Vandoorne2014-reservoir,Shen2017-deepLearning,Lin2018-DiffractiveNN,Roques2020-Ising,Abel2019-nuromorphic}, and as platform for quantum computation and simulation~\cite{Wang2019-bosonsampling,Asavanant2019-ClusterState,Sparrow2018-molecule}.
With the continuing development of programmable large-scale integrated photonic platforms~\cite{Wang2018-16D,Taballione2019-8x8SiN,Seok2016-switch,Perez2017-fieldProgrammable,Chung2018-PhasedArray}, practical and reliable techniques for characterising and validating the operation of these devices are crucial.
In this work, we present a new protocol for characterising linear optical devices with low experimental resources by expressing the relation between measured intensities and linear properties of LONs as a phase retrieval problem \cite{walther_question_1963}.


A phase-stable LON is characterised by its complex transfer matrix $\mat{M}$. 
The amplitudes of the output light modes, $\beta_j$, depend on the amplitudes of the input modes, $\alpha_k$, via 
\begin{equation}
  \beta_j = \sum_k M_{jk} \,\alpha_k.
  \label{eq:coherent_transfer_matrix}
\end{equation}
Arguably, determining $\mat{M}$ experimentally is a crucial step to validate and verify an existing LON.

To characterise phase-stable LON, several protocols that feature alternative reconstruction and optimisation algorithms have been demonstrated  \cite{laing2012-superstable,rahimi2013-DirectCharacterisation,Heilmann2015-characterisationImprovement,Spagnolo2017-genetic,Tillmann2016-reconstruction}. With the exception of the work presented in \cite{rahimi2013-DirectCharacterisation}, further developed in \cite{Heilmann2015-characterisationImprovement}, all these schemes rely upon non-classical two-photon interference measurements. 
Instead, one of the great advantages of the protocol discussed in  \cite{rahimi2013-DirectCharacterisation} is that it can be performed with a classical coherent light source and power-meters. However, this benefit can be hindered by the lack of statistical inference of the matrix elements from an over-complete set of data that would instead compensate for experimental sources of errors \cite{Tillmann2016-reconstruction}.  
Linear optical circuits are also used in the context of quantum computation to implement quantum gates. Characterising these quantum transformations requires the use of quantum process tomography \cite{OBrien2004-ProcessTomography,Rahimi2011-thomography}, even if implemented by a linear optical systems. In particular, an $n$ modes linear optical system can be treated  as an $n-$dimensional qudit channel for a single photon state \cite{Varga2018-QuditChannel}. Resorting to these approaches generally requires the capability of preparing and detecting quantum states of light and the acquisition of larger datasets. In return, they can provide additional information on the noise affecting the quantum system due, for example, to incoherent scattering. 

In this work, we demonstrate a reconstruction procedure based on efficient optimisation algorithms designed to be resilient to experimental imperfection and that can be performed with classical instrumentation, i.e. a coherent light source and power-meters. 
%

\subsection*{Background: the PhaseLift algorithm}
\label{sec:phase_retrieval}
In classical optical experiments, the standard measurable quantities are the intensities, or the power, of the output modes
\begin{equation}
  I_j(\vec{\alpha})
  = \left| \beta_j \right|^2 + \epsilon_j
  = \left| \sum_k M_{jk} \, \alpha_k \right|^2 + \epsilon_j,
  \label{eq:intensities}
\end{equation}
for certain coherent input patterns with amplitudes $\vec{\alpha}=\{\alpha_1, \dots, \alpha_n\}$.
Here, $\epsilon_j$ describes noise due to statistical fluctuations or systematic errors.
Although the output states~\eqref{eq:coherent_transfer_matrix} are linear in $\mat{M}$, the resulting intensity measurements~\eqref{eq:intensities} are quadratic in $\mat{M}$ and oblivious to the phases of $\beta_j$.
In particular, the problem of reconstructing the matrix $\mat{M}$ from such a set of data is ill-posed since all the measured intensities are invariant under the multiplication of any row of the matrix by an arbitrary phase-factor $e^{i \phi_j}$. 

The crucial observation in this paper is that measurements~\eqref{eq:intensities} closely resemble the model of the \textit{phase retrieval problem}, i.e.\ the problem of recovering a complex vector $\vec{x} \in \BC^n$ from $m$ scalar measurements of the form
\begin{equation}
  y^{(l)} = \left| \langle \vec{x}, \vec\alpha^{(l)} \rangle \right|^2 + \epsilon^{(l)}
  \quad l=1,\ldots,m.
  \label{eq:phase_retrieval_measurements}
\end{equation}
Here, $\vec\alpha^{(l)} \in \BC^n$ denote measurement vectors and $\epsilon^{(l)}$ the additive measurement errors.
One practical solution to the phase retrieval problem~\cite{Balan2009-FrameCoefficients} -- and, by extension, for recovering transfer matrices -- is based on its connection to the field of low-rank matrix recovery~\cite{Ahmed2014-ConvexDeconvolution,Candes2009-MatrixCompletion,Candes2011-OracleBounds,recht2010-LinearEqautionSolutions,Gross2011-LowRankRecovering,chen2013-MatrixCompletion}. 

Note that the measurements \eqref{eq:phase_retrieval_measurements} are quadratic in the target vector $\vec{x} \in \mathbb{C}^n$, but \emph{linear} in its outer product $\ket{\vec{x}} \bra{\vec{x}} \in \mathbb{C}^{n \times n}$:
\begin{equation}
  \left| \langle \vec{x}, \vec\alpha^{(l)} \rangle \right|^2 + \epsilon^{(l)}
  = \tr \left( (\ket{\vec\alpha^{(l)}}\bra{\vec\alpha^{(l)}}) (\ket{\vec{x}}\bra{\vec{x}}) \right)+ \epsilon^{(l)}. \label{eq:matrix-measurements}
\end{equation}
This ``lifts'' the phase retrieval problem to the problem of recovering  $\mat{X}=\ket{\vec{x}} \bra{\vec{x}}$ from linear measurements.
This target matrix has rank one, $\mathrm{rank}(\mat{X})=1$, and is also positive semidefinite (psd), $\mat{X}\geq 0$. 
The connection to low-rank matrix recovery is now apparent. We need to find the lowest-rank matrix $\mat{X}\geq 0$ that is  compatible with the measurement data. This can be done with an algorithm known as \emph{PhaseLift} \cite{Candes2013_Phaselift}:
\begin{align}
\underset{\mat{Z} \in \mathbb{C}^{n \times n}}{\text{minimize}} & \quad \tr ( \mat{Z}) \\
\text{subject to} & \quad \sum_{l=1}^m \left| \tr \left( \ket{\vec\alpha^{(l)}} \bra{\vec\alpha^{(l)}} \, \mat{Z} \right) - y^{(l)} \right|^2 \leq \eta, \nonumber \\
& \quad \mat{Z} \geq 0.  \nonumber
\end{align}
Here, $\eta \geq \sum_{l=1}^m (\epsilon^{(l)})^2$ is an upper bound on the noise strength and the trace, $\tr (\mat{X})$, penalizes rank among psd matrices \cite{Ahmed2014-ConvexDeconvolution,Candes2009-MatrixCompletion,Candes2011-OracleBounds,recht2010-LinearEqautionSolutions,Gross2011-LowRankRecovering,chen2013-MatrixCompletion}. 
In this work, we will use a variant of PhaseLift  that does not require any prior knowledge about the noise strength \cite{kabanava_stable_2016}. 
Instead, 
we can directly minimize a simple loss function over the set of psd matrices:
\begin{align}
  \underset{\mat{Z}\in \mathbb{C}^{n \times n}}{\textrm{minimize}} & \quad \sum_{l=1}^m \left| \tr \left( \ket{\vec\alpha^{(l)}} \bra{\vec\alpha^{(l)}} \, \mat{Z} \right) - y^{(l)} \right| \label{eq:PhaseLift} \\
  \textrm{subject to} &\quad  \mat{Z} \geq 0. \nonumber
\end{align}
Here, we have chosen the $\ell_1$-loss function which is known to be exceptionally robust with respect to noise corruptions $\epsilon^{(1)},\ldots,\epsilon^{(m)}$ \cite{kabanava_stable_2016}. 
 The more commonly used least-squares loss function would also produce qualitatively similar results. 
 
The minimizer $\mat{Z}^\sharp$ of Algorithm~\eqref{eq:PhaseLift}
is a psd matrix and must be factorized to recover the estimated signal vector $\vec{x}^\sharp \in \mathbb{C}^n$. After applying an eigenvalue decomposition to $\mat{Z}^\sharp$, we set $\vec{x}^\sharp$ to be the eigenvector associated with the largest eigenvalue $\lambda$, re-scaled to length $\norm{\vec{x}^\sharp}=\sqrt{\lambda}$ \cite{Candes2013_Phaselift}. Note that $\vec{x}^\sharp$ is only recovered up to an arbitrary phase factor $e^{i\phi}$ -- an unavoidable ambiguity for the phase retrieval problem \eqref{eq:phase_retrieval_measurements}.


The PhaseLift algorithm (and its variants) belongs to a subclass of convex optimisation problems called semidefinite programs. Indeed, Algorithm \eqref{eq:PhaseLift} minimizes a convex loss function over the convex set of psd matrices. Such optimisation problems have no local optima (or saddle points) except for the global optimum that is essentially unique. 
Dimensions $n$ of many thousands can be handled by scalable semidefinite programming algorithms~\cite{BM03:Nonlinear-Programming,BBV16:Low-Rank-Approach,YTF+19:Scalable-SDP}.

Reliability and (to some extent) scalability are key advantages of phase retrieval via PhaseLift over alternative optimisation approaches that do not rely on lifting, see e.g.\ \cite{YUTC17:Sketchy-Decisions}.
Minimizing a loss function directly over vectors $\vec{x} \in \mathbb{C}^n$
results in an optimisation problem that is lower-dimensional, but not convex. 

Phase retrieval via PhaseLift is not only a compelling heuristic, it is also supported by rigorous theory. By and large, the theoretical
guarantees require stochastic generative models for the measurement vectors, i.e.\ each $\ket{\vec{\alpha}^{(l)}}$ in \cref{eq:phase_retrieval_measurements} is sampled independently from a suitable measurement ensemble.
A prominent example is the uniform (Gaussian) measurement ensemble \cite{Candes2013_Phaselift,Kueng2017-MatrixRecovery,kabanava_stable_2016}. 
However, ensembles that feature less randomness~\cite{Gross2015-Derandomization,Kueng2017-MatrixRecovery,kueng_low_2016} or additional structure tailored to specific applications~\cite{Candes2015-DiffractionPattern,Gross2017-ImprovedGuarantees,voroninski_quantum_2013,Kueng2015-MatrixRicovery} have also been investigated. 
The strongest theoretical performance guarantees assume the following form:

\begin{theorem} \label{thm:phaselift-intro}
Suppose that each phaseless measurement \cref{eq:phase_retrieval_measurements} $\vec{\alpha}^{(l)}$ is chosen uniformly at random from a suitable ensemble. Then, an order of $m = \mathrm{const} \times n$ measurements suffice to recover any complex vector $\vec{x} \in \mathbb{C}^n$ via PhaseLift.
More precisely, the solution $\mat{Z}^\sharp$ to \cref{eq:PhaseLift} obeys $  \norm{\mat{Z}^\sharp -  \ket{\vec{x}}\bra{\vec{x}}}_2 \propto \epsilon^{\text{tot}}/m$, where $\epsilon^{\mathrm{tot}}= \sum_{j=1}^m |\epsilon_j|$ is the total noise corruption in the measurement process.
\end{theorem}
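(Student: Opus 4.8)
The plan is to follow the standard two-part template for stable recovery from rank-one measurements: a deterministic argument that transfers the measurement-space error to the signal itself, and a probabilistic \emph{restricted lower isometry} bound established via Mendelson's small-ball method. First I would collect the lifted measurements into the linear sampling operator $\mathcal{A}$ acting on $n\times n$ Hermitian matrices via $(\mathcal{A}(\mat{Z}))_l = \tr(\mat{A}_l \mat{Z})$, with $\mat{A}_l = \ket{\vec\alpha^{(l)}}\bra{\vec\alpha^{(l)}}$, so that $y^{(l)} = \tr(\mat{A}_l \mat{X}) + \epsilon^{(l)}$ where $\mat{X} = \ket{\vec{x}}\bra{\vec{x}}$. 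Writing $\mat{H} = \mat{Z}^\sharp - \mat{X}$ for the reconstruction error, feasibility of $\mat{X}$ (it is psd) together with optimality of the minimizer $\mat{Z}^\sharp$ gives $\sum_l \abs{\tr(\mat{A}_l \mat{Z}^\sharp) - y^{(l)}} \leq \sum_l \abs{\tr(\mat{A}_l \mat{X}) - y^{(l)}} = \sum_l \abs{\epsilon^{(l)}} = \epsilon^{\mathrm{tot}}$. A triangle inequality then yields the measurement-space estimate $\sum_l \abs{\tr(\mat{A}_l \mat{H})} \leq 2\epsilon^{\mathrm{tot}}$.

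To convert this into a bound on $\norm{\mat{H}}_2$ (the Hilbert--Schmidt norm) I would exploit the geometry of the psd constraint. Since $\mat{Z}^\sharp \geq 0$ and $\mat{X}$ has rank one, $\mat{H} = \mat{Z}^\sharp - \mat{X} \geq -\mat{X}$, so $\mat{H}$ lies in the descent (tangent) cone $E$ of the psd cone at the rank-one matrix $\mat{X}$; in particular the negative part of $\mat{H}$ has rank at most one. The crux is the restricted lower isometry bound: with high probability over the ensemble there is a constant $\tau > 0$ such that $\tfrac{1}{m}\sum_l \abs{\tr(\mat{A}_l \mat{H})} \geq \tau \norm{\mat{H}}_2$ for all $\mat{H} \in E$. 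Granting this, combining with the previous paragraph gives $\tau \norm{\mat{H}}_2 \leq \tfrac{1}{m}\sum_l \abs{\tr(\mat{A}_l \mat{H})} \leq 2\epsilon^{\mathrm{tot}}/m$, i.e.\ $\norm{\mat{Z}^\sharp - \ket{\vec{x}}\bra{\vec{x}}}_2 \leq (2/\tau)\,\epsilon^{\mathrm{tot}}/m$, which is exactly the asserted scaling $\propto \epsilon^{\mathrm{tot}}/m$.

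The main obstacle is establishing the lower isometry uniformly over $E$ from only $m = \mathrm{const}\times n$ samples, for which I would invoke Mendelson's small-ball method. This reduces the task to two estimates. The first is an anti-concentration (small-ball) bound $\inf_{\mat{H}}\mathbb{P}\!\left[\abs{\tr(\mat{A}\,\mat{H})} \geq \theta\norm{\mat{H}}_2\right] \geq p$ for fixed constants $\theta,p>0$; because $\mat{A}=\ket{\vec\alpha}\bra{\vec\alpha}$ is rank one, this is a Paley--Zygmund estimate on the quadratic form $\abs{\langle \vec\alpha, \mat{H}\vec\alpha\rangle}$ and holds precisely when the ensemble's second and fourth moments are comparable to the complex-Gaussian case---this is the content of the \quotes{suitable ensemble} hypothesis. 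The second is a bound on the mean empirical (Rademacher) width of $E$ intersected with the Hilbert--Schmidt unit sphere; since $E$ is the tangent cone of the psd cone at a \emph{rank-one} point, its statistical dimension is $\CO(n)$, so the width term scales like $\sqrt{n/m}$ and is dominated once $m \gtrsim n$---this is where the linear sample complexity enters. The delicate points are the anti-concentration step, which is sensitive to the design and needs extra care for structured or derandomised ensembles, and making the process bound hold uniformly over the non-convex cone, which I would handle by a chaining or matrix-concentration argument; the $\ell_1$ loss is what lets the small-ball machinery apply directly while simultaneously conferring robustness against sparse gross corruptions.
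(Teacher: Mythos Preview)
Your deterministic opener (optimality of $\mat{Z}^\sharp$ plus the triangle inequality, giving $\sum_l|\tr(\mat A_l \mat H)|\leq 2\epsilon^{\mathrm{tot}}$) matches the paper exactly, and Mendelson's small-ball method with a Paley--Zygmund anti-concentration step is indeed the engine the paper uses. The gap is the set over which you run the lower isometry. The cone $E$ you propose---whether read as the tangent cone of the psd cone at a rank-one point or as $\{\mat H:\mathrm{rank}(\mat H_-)\leq 1\}$---contains the \emph{entire} psd cone, since $\mat H\geq 0$ trivially has $\mat H_-=0$. For psd $\mat Z$ with $\|\mat Z\|_2=1$ one has $\sup_{\mat Z}\tr(\mat Z\mat H)=\|\mat H_+\|_2$, and with $\mat H=m^{-1/2}\sum_k\eta_k\ket{\vec\alpha_k}\bra{\vec\alpha_k}$ one gets $\mathbb{E}\|\mat H\|_2^2=\tfrac{1}{m}\sum_k\mathbb{E}\|\vec\alpha_k\|_{\ell_2}^4\sim n^2$, so the mean empirical width of $E\cap\mathcal{S}^{n^2-1}$ is $\Theta(n)$, not $\CO(\sqrt{n})$. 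Mendelson on $E$ therefore only yields $m\gtrsim n^2$. The ``statistical dimension $\CO(n)$'' figure you quote belongs to the descent cone of the \emph{nuclear norm} (equivalently of $\tr(\cdot)$) at a rank-one point; that cone is relevant when one minimizes $\tr(\mat Z)$ subject to data constraints, but here the objective is the $\ell_1$ loss and the only structural information about $\mat H=\mat Z^\sharp-\mat X$ is $\mat H\geq -\mat X$, which does not by itself force small trace norm.

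The paper closes this gap by a two-case split. It runs Mendelson not on $E$ but on the signal-independent set $D=\{\mat Z:\|\mat Z\|_2=1,\ \|\mat Z\|_1\leq 3\}$; since $D\subset 3\mathcal{B}_1$, one has $\sup_{\mat Z\in D}\tr(\mat Z\mat H)\leq 3\|\mat H\|_\infty$, and a matrix Bernstein bound gives $\mathbb{E}\|\mat H\|_\infty=\CO(\sqrt{n})$, so $m\gtrsim n$ suffices for the lower isometry on $D$. This covers the case where the normalized error $\mat H/\|\mat H\|_2$ has trace norm at most $3$. The complementary case---$\|\mat H\|_1$ large compared to $\|\mat H\|_2$---is handled by a separate deterministic step: von Neumann's trace inequality plus positive semidefiniteness of both $\mat Z^\sharp$ and $\mat X$ give $\|\mat H\|_1\leq 3\,\tr(\mat H)=3(\|\mat Z^\sharp\|_1-\|\mat X\|_1)$, and a \emph{second} probabilistic ingredient, the approximate-POVM condition $\|\tfrac{1}{\nu m}\sum_k\mat A_k-\mathbb{I}\|_\infty\leq\tfrac{1}{6}$ (again from matrix Bernstein), lets one replace $\tr(\mat H)$ by $\tfrac{1}{\nu m}\|\mathcal{A}(\mat H)\|_{\ell_1}$ up to an absorbable error. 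It is precisely this case distinction---lower isometry on a trace-norm ball plus trace control via the empirical identity---that your sketch is missing.
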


The implication of such a result is twofold. First, it ensures that the number of measurements must scale linearly in the problem dimension $n$. Unfortunately, these theoretical results are ill-equipped to produce the exact proportionality constant. It is known that roughly $m=4n-4$ measurements are necessary ($\mathrm{const} \geq 4$) to solve the phase retrieval problem unambiguously \cite{heinosaari_quantum_2013}.
%
Second, the reconstruction is stable with respect to noise in the measurements  \cref{eq:phase_retrieval_measurements}. Accurate phaseless measurements produce accurate solutions $\vec{x}^\sharp \in \mathbb{C}^n$ to the phase retrieval problem. 

The main theoretical contribution of this work is a recovery guarantee -- similar to \cref{thm:phaselift-intro} -- for a novel measurement ensemble: the \emph{randomly erased complex Rademacher} (RECR) ensemble. 
Each measurement vector 
has random coefficients $\alpha_j$ that can take five distinct values: $0$ and $\pm1,\pm i$. We refer to \cref{eq:recr_definition} below for details.

\section*{LON Phaselift reconstruction}%
\label{sec:characterization}

\begin{figure*}[bt!] 
    \includegraphics[width=0.90\textwidth]{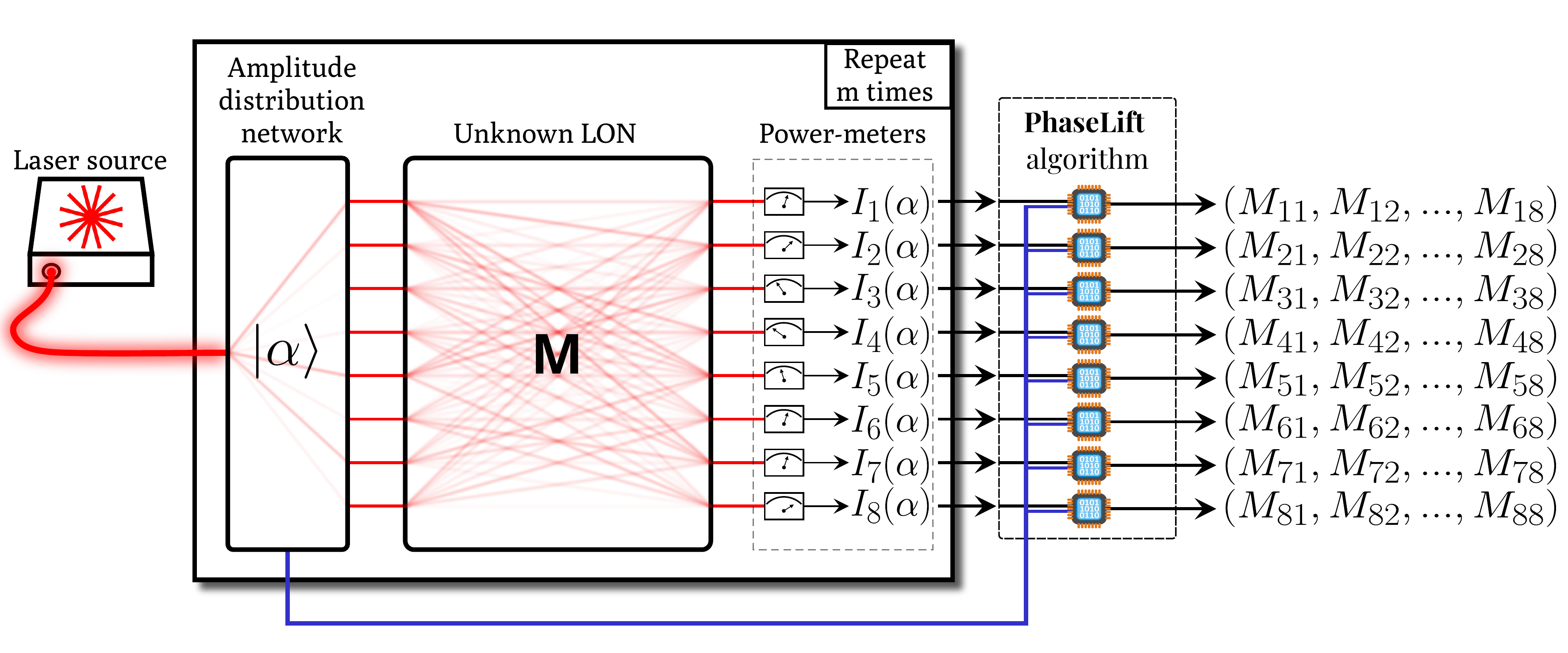} 
  \caption{%
    Schematic of the PhaseLift characterisation protocol. (see \cref{prot:characterization}).
    A sequence of input patterns $\singleket{\boldsymbol{\alpha}}$ are randomly sampled from the uniform or RECR ensembles.
    Such input vectors are implemented by a trusted distribution network that coherently distributes the amplitude of a laser source proportionally to the complex components of $\singleket{\boldsymbol{\alpha}}$. 
    The modulated light is injected into the input modes of the unknown linear optical network (LON) and the power at each output port is measured.
    For each output mode, the list of $\singleket{\boldsymbol{\alpha}}$ and the measurement outcomes of the power-meter are passed to the convex optimisation algorithm named PhaseLift that retrieves the matrix row associated to that specific output mode.
    Row by row, the unknown transfer matrix $\mat M$ that characterises the LON is then obtained up to a multiplicative phase factor for each row vector.
  }\label{fig:phaselift_protocol}
\end{figure*}

Let us now turn to connecting the two problems introduced in the last section, namely determining the transfer matrix $\mat{M}$ of a linear optical device on the one hand and the phase retrieval problem on the other hand.
Note that the measured intensity at detector $j$, as given by \cref{eq:intensities}, exclusively provides us with information about the $j$-th row vector of $\mat{M}$, $1 \leq j \leq n$:
\begin{equation}
  I_j(\vec{\alpha})
  = \left| \sum_{k=1}^n M_{jk} \alpha_k \right|^2 + \epsilon_j
  = \left\vert  \langle \vec{M}_j, \vec\alpha \rangle  \right\vert^2 + \epsilon_j. 
  \label{eq:intensities_as_overlap}
\end{equation}
Here, we have defined $\vec{M}_j$ as the (complex conjugated) row vectors of $\mat{M}$.
Since the measured intensities in \cref{eq:intensities_as_overlap} exactly resemble the measurement model of the phase retrieval problem in \cref{eq:phase_retrieval_measurements}, we can use the ideas mentioned in the introduction to reconstruct the transfer matrix. In particular, each projective measurement associated with the vector $\ket{\vec{\alpha}^{(l)}}$ corresponds to a power reading in a single output mode $j$ while light amplitudes proportional to the components of $\vec{\alpha}^{(l)}$ are injected into the input modes of LON.  
Therefore, we propose the following protocol, diagrammatically represented in \cref{fig:phaselift_protocol}.
\begin{protocol}{(for recovering the transfer matrix $\mat M$)}%
  \label{prot:characterization}
  \begin{enumerate}
    \item Sample $m$ random input states $\ket{\vec \alpha^{(l)}}$ from an appropriate ensemble.
    \item Measure the $m \times n$ intensities $I_1(\vec\alpha^{(l)}), \ldots, I_n (\vec \alpha^{(l)})$ with $l=1,\ldots,m$.
    \item Use PhaseLift~\eqref{eq:PhaseLift} to recover each $\vec{M}_j$ individually.
  \end{enumerate}
\end{protocol}
This protocol is able to reproduce transfer matrices without unitary assumptions and is suitable for non-squared  matrices too. In principle, with sufficiently precise measurements, this technique permits to quantify the degree of deviation from ideally unitary transformations.
The availability of a detector at each output mode facilitates a rapid reconstruction of the matrix since the same sequence of input vector $\vec{\alpha}^{(l)}$ can be used to independently recover multiple  rows of the matrix. 

Note that, to measure the intensities $I_j(\vec\alpha^{(l)})$, coherent light with amplitude proportional to $\alpha^{(l)}_k$ needs to be simultaneously injected in each input port $k$. Similarly to \cite{rahimi2013-DirectCharacterisation},  this procedure can be  performed with a single laser connected to the LON by means of a programmable phase-stable amplitude distribution network.  Additionally, in our reconstruction, a previous characterisation of the distribution network is usually required. However, the resilience to experimental errors of our method, based on the recovery guarantee characteristic of the Phaselift algorithm, can compensate for potential errors introduced by the preparation of the input vectors themselves. 

Two important questions remain:
(i) from which ensemble should we sample the input coherent states and (ii) how many such inputs are sufficient for a successful reconstruction?
In this section we provide two different answers to these questions.
First, we show that the established uniform measurement ensemble~\cite{Kueng2017-MatrixRecovery} allows for reconstructing $\mat M$ from an asymptotically optimal number of measurements.
Second, we show that, although it only requires a simplified light distribution network, the  RECR ensemble performs nearly as well as the uniform ensemble.  

\subsection*{Measurement ensembles}

\paragraph*{Uniform ensemble.}
The uniform sampling scheme consists of choosing $\vec \alpha$ uniformly from the complex unit sphere.
Up to normalisation, this is equivalent to choosing the real and imaginary part of the components of $\vec\alpha^{(l)}$ to be centred Gaussian random variables with variance $\tfrac{1}{2}$ each.
Fixing the norm of the input vectors to a constant is convenient for our particular application as it amounts to using the same input power for all the configurations $\vec\alpha^{(l)}$ and, therefore, simplifies the preparation procedure via unitary distribution networks.
%
Strong analytic reconstruction guarantees exist for phase retrieval with this measurement ensemble~\cite{candes_solving_2012, tropp_convex_2015, Kueng2017-MatrixRecovery}.
We provide a specific formulation for the problem at hand and a simplified proof strategy in the \supplementary.

\paragraph*{RECR ensemble.}
The uniform sampling scheme places high demands on the experimental implementation since it necessitates the ability to prepare any multi-mode coherent input state $\ket{\vec \alpha}$ with $\vec\alpha$ from the complex unit sphere.
Therefore, we propose an alternative measurement ensemble that lends itself to implementations in linear optics:
For $p \in (0,1)$, we define a \emph{randomly erased complex Rademacher} (RECR) random variable $a$ to be distributed according to
\begin{equation}
  a \sim
  \begin{cases}
    \pm 1, \pm \ii & \textrm{each with prob. } p/4 \\
    0 & \textrm{with prob. } 1 - p.
  \end{cases}
  \label{eq:recr_definition}
\end{equation}
For the RECR measurement model, we sample the components $\alpha_k$ of the input state $\ket{\vec \alpha}$ according to \cref{eq:recr_definition}, but we can additionally choose to normalise the total intensity, $\norm{\vec\alpha} = 1$.
Notably, a programmable optical circuit able to generate light amplitudes proportional to RECR vectors requires to set as little as four alternative phases and two intensity levels at each input mode.
We envisage that photonic devices can be optimised to implement such discrete input configurations; even phase-shifting strategies that are not tunable across a continuous range of phase shifts can be advantageously used to this scope \cite{Henriksson2018-DigitalMEMS,Dhingra2019-PhasechangingShifting}. 

\paragraph*{Performance guarantees.}

One important theoretical contribution of this work is to provide a rigorous proof of convergence for the proposed reconstruction scheme, which we outline now.
We refer the reader to the \supplementary\ for an exact formulation and the proofs.
\begin{theorem}[Informal version]%
  \label{thm:performance_guarantee}
  Suppose that $m \geq Cn$ input states have been chosen from either the uniform or the RECR ensemble.
  Then, with high probability, any transfer matrix $\mat{M}$ can be reconstructed via \cref{prot:characterization}.
\end{theorem}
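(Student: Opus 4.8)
The plan is to exploit the fact that \cref{prot:characterization} treats each output mode separately. By \cref{eq:intensities_as_overlap}, reconstructing the $j$-th row $\vec{M}_j$ is precisely an instance of the phase retrieval problem \cref{eq:phase_retrieval_measurements} with target vector $\vec{x} = \vec{M}_j$ and measurement vectors $\vec{\alpha}^{(l)}$ drawn from the chosen ensemble. I would therefore first prove a single-vector recovery guarantee for \cref{eq:PhaseLift}, and then lift it to the full matrix by a union bound over the $n$ rows. Since the per-row failure probability will be exponentially small in $m$ and $m \geq Cn$, the extra factor of $n$ from the union bound is harmless and can be absorbed into the constant $C$.

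For the single-vector guarantee I would work in the lifted picture, where the target $\mat{X} = \ket{\vec{x}}\bra{\vec{x}}$ is a rank-one psd matrix and the measurement map $\mathcal{A}\colon \Hermitian \to \mathbb{R}^m$, $(\mathcal{A}(\mat{Z}))_l = \tr(\ket{\vec{\alpha}^{(l)}}\bra{\vec{\alpha}^{(l)}}\mat{Z})$, is linear. Following the null-space / descent-cone framework for $\ell_1$-loss minimization over the psd cone \cite{kabanava_stable_2016}, recovery and its stability follow once one establishes a uniform lower bound of the form $\tfrac{1}{m}\norm{\mathcal{A}(\mat{Z})}_1 \geq \kappa \norm{\mat{Z}}_2$ for every Hermitian $\mat{Z}$ lying in the cone of feasible descent directions of the psd constraint at $\mat{X}$ (a set of effectively low-rank matrices). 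Applied to the residual $\mat{Z}^\sharp - \mat{X}$, the same bound yields the noise dependence $\norm{\mat{Z}^\sharp - \ket{\vec{x}}\bra{\vec{x}}}_2 \lesssim \epsilon^{\mathrm{tot}}/m$ claimed in \cref{thm:phaselift-intro}.

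The analytic core is to verify this lower bound for both ensembles, for which I would use Mendelson's small-ball method. This reduces the task to two ingredients: (i) an anti-concentration estimate $\Pr\!\big[\,\abs{\tr(\ket{\vec{\alpha}}\bra{\vec{\alpha}}\mat{Z})} \geq t\norm{\mat{Z}}_2\,\big] \geq q$ holding uniformly over the relevant Hermitian matrices, with constants $t,q>0$; and (ii) a bound on the Rademacher complexity of the set, which for the low-rank cone reduces to controlling the operator norm of the random matrix $\tfrac{1}{m}\sum_l \varepsilon_l \ket{\vec{\alpha}^{(l)}}\bra{\vec{\alpha}^{(l)}}$ via matrix Bernstein. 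For the uniform ensemble both ingredients are standard and I would adapt the known Gaussian computation \cite{candes_solving_2012,tropp_convex_2015,Kueng2017-MatrixRecovery}. For the RECR ensemble they require fresh moment computations: I would check that $\mathbb{E}[\ket{\vec{\alpha}}\bra{\vec{\alpha}}] \propto \mathbb{I}$ and that the fourth-moment (frame) operator acts comparably to the Gaussian one on $\Hermitian$, up to the constant $p$ and benign diagonal corrections; boundedness of the RECR entries makes the matrix-Bernstein step in (ii) immediate.

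The hard part will be ingredient (i) for the RECR ensemble. Because the entries are discrete and erased with probability $1-p$, the quadratic form $\abs{\langle \vec{\alpha}, \vec{u}\rangle}^2$ can in principle vanish or cluster with non-negligible probability, so the small-ball constant $q$ must be shown to stay bounded away from zero uniformly over all unit-norm $\mat{Z}$ in the cone. I would control this with a Paley--Zygmund argument driven by the second and fourth moments of $\tr(\ket{\vec{\alpha}}\bra{\vec{\alpha}}\mat{Z})$; the fourth moment is finite and comparable to the square of the second precisely because a $p$-fraction of coordinates survives on average and the four phases $\pm 1,\pm\ii$ genuinely probe the complex structure. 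This last point is essential: a purely real $\pm 1$ ensemble would be blind to the imaginary part of $\mat{Z}$ and the lower bound would fail, whereas the four-value RECR alphabet renders $\mathcal{A}$ injective on all of $\Hermitian$ and delivers the required $q$.
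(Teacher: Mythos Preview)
Your proposal is essentially correct and shares the paper's technical core: both establish the single-vector PhaseLift guarantee via Mendelson's small-ball method, bounding the marginal tail function through a Paley--Zygmund argument on second and fourth moments and the empirical width through a matrix-Bernstein-type operator-norm estimate. Two tactical differences deserve mention. First, you pass from one row to the full matrix by a union bound over the $n$ rows; the paper instead observes that the single-vector guarantee is already \emph{universal} (one random draw of $\{\vec\alpha^{(l)}\}$ recovers every $\vec{x}\in\mathbb{C}^n$ simultaneously), so no union bound is needed---your route still works but is less economical. Second, your parenthetical that the feasible descent cone of the psd constraint at a rank-one point is ``a set of effectively low-rank matrices'' is not accurate: that tangent cone is large, and the lower bound $\tfrac{1}{m}\norm{\mathcal{A}(\mat Z)}_{\ell_1}\geq\kappa\norm{\mat Z}_2$ cannot be established on all of it directly. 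The paper instead proves the lower bound only on the genuinely low-trace-norm set $D=\Sphere^{n^2-1}\cap 3\mathcal{B}_1$ and supplements it with a separate approximate-POVM condition $\norm{\tfrac{1}{\nu m}\sum_k \mat A_k - \mathbb{I}}_\infty\leq\tfrac{1}{6}$; a two-case split (dominant leading eigenvalue versus heavy tail of $\mat Z-\ket{\vec x}\bra{\vec x}$) then combines these to control arbitrary psd deviations. The framework of \cite{kabanava_stable_2016} that you invoke does provide an alternative way to close this gap, so your sketch can be completed, just not quite as written.
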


This statement is to be understood as a theoretical performance guarantee in terms of an upper bound on the reconstruction error
\(
  \min_{\vec{\mu}}\left\| \mat{M}^\sharp -  \vec{D} (\vec{\mu}) \mat{M} \right\|_2.
\)
Here, $\mat{M}^\sharp$ is the reconstruction and $\mat{D}(\vec\mu) = \mathrm{diag}(\mu_1, \ldots, \mu_n)$ with $\abs{\mu_j} = 1$ are the row-phases of $\mat{M}$ that cannot be  recovered from the measurements~\eqref{eq:intensities}.
Our proofs do not give a tight bound for the constant $C$.
This is why we run numerical simulations in the following section in order to determine a practical value of $C$ which is conjectured to be 4.

\section*{Numerical analysis}
\begin{figure*}[hbt]
    \includegraphics[width=0.95\textwidth]{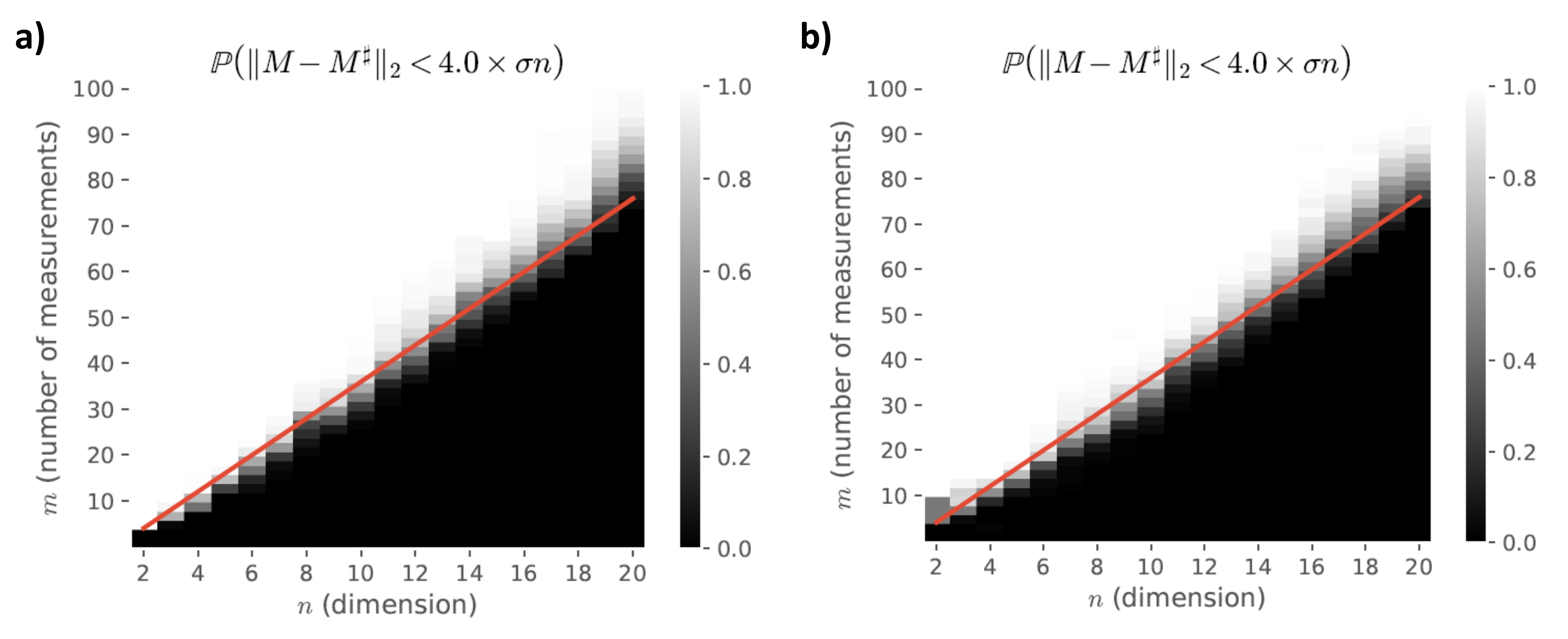}
  \caption{%
    Simulated probability for correctly recovering transfer matrices $\mat{M}$ using the two different sampling schemes under noisy measurements with $\sigma = 0.05$.
    The $x$-axis labels the problem dimension $n$, while the $y$-axis depicts different values for the number of (random) measurements $m$. 
    For each pair $(n,m)$, the (approximate) probability of correct transfer matrix reconstruction appears color-coded from black (zero) to white (one). Each probability is approximated by testing the protocol for 97 (Haar) random choices for $\mat{M}$, as well as the identity, the swap matrix and the discrete Fourier transform.
    For both uniform (a) and RECR (b) sampling,
    the probability of successful reconstruction undergoes a sharp phase transition just above $m=4n-4$ (red line).
  }\label{fig:simplot}\label{sfig:simplot.gaussian}\label{sfig:simplot.recr}
\end{figure*}

Firstly, we investigate the applicability of the PhaseLift characterisation protocol via numerical simulations.
The results depicted in \cref{fig:simplot} aim to visualise the performance guarantees from \cref{thm:performance_guarantee}:
For each given dimension $n$, we choose 100 target unitaries.
Each of these is reconstructed by means of \cref{prot:characterization} with a varying number of measurements $m$.
The input vectors are sampled from the uniform ensemble in \cref{sfig:simplot.gaussian}(a) and from the RECR ensemble in \cref{sfig:simplot.recr}(b).
For the measurement noise $\epsilon_j$ from \cref{eq:intensities_as_overlap}, we assume independent, centred Gaussian noise with standard deviation $\sigma = 0.05$.
The density plots show the fraction of successfully recovered unitaries.
Here, the criterion for success is whether the distance of the reconstruction $\mat{M}^\sharp$ measured in Frobenius norm is smaller than the threshold  $4 \sigma n$ in accordance with the error bound from \cref{cor:noisy_performance_guarantee} in the \supplementary. The two plots highlight that a sharp phase transition occurs just above the (red) line $m=4n-4$. The probability for correctly recovering $\mat{M}$ from $m$ uniform (left) and RECR (right) measurements jumps from zero (black color) to almost one (white color).
%
This demonstrates the high efficiency of \cref{prot:characterization} 
with respect to the number of measurements.
Not only does the number of measurements scale linearly in the system size but the proportionality constant is small as well. Hence, the PhaseLift algorithm is a practical candidate for characterising large-scale LONs.

\section*{Experimental Results}%
\label{sec:experiment}

\begin{figure*}[tbhp]
  \centering 
  \includegraphics[width=\textwidth]{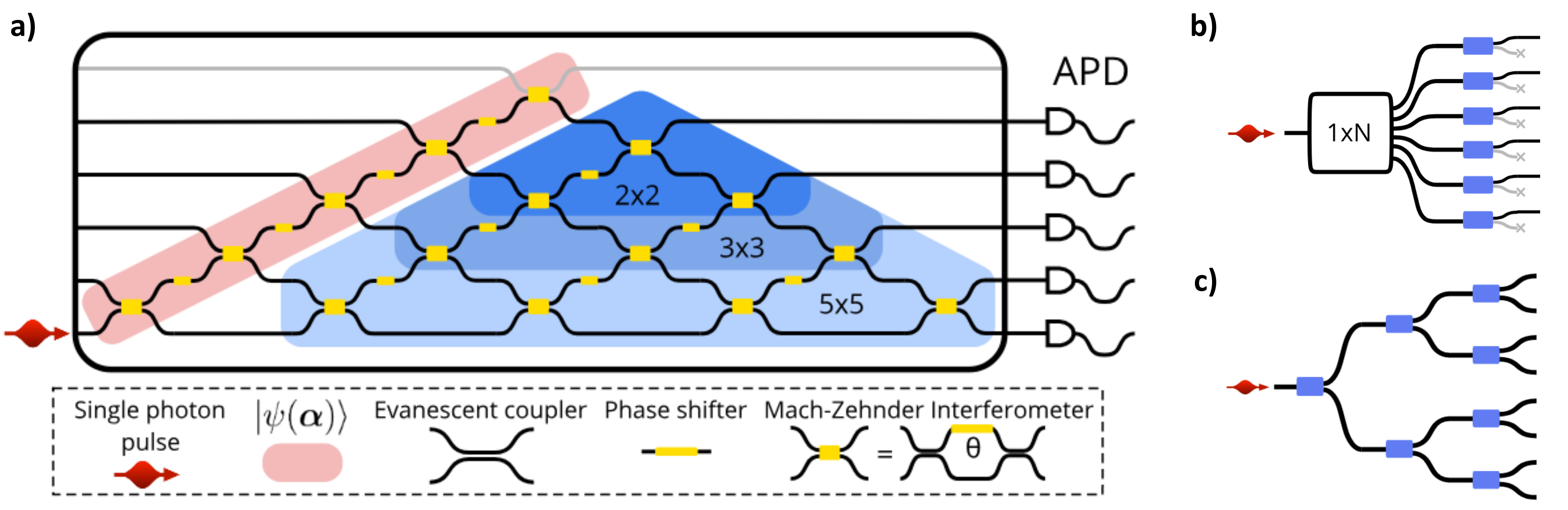}
  \caption{ \label{fig:experimental.schematic}
     a) Chip schematics of our experiment. Heralded single photons are injected into the bottom waveguide of a six-mode integrated interferometer.
     A diagonal sequence of Mach-Zehnder interferometers is used to prepare single-photon states $\singleket{\psi(\vec \alpha)} = \sum_k \alpha_k a_k^{\dag}   |\mathfrak{0}\rangle$ over the bottom five modes of the device. The remainder of the device is used to implement 2-, 3- and 5-dimensional unitary transformations which are to be characterised. 
    Each output port is coupled to an avalanche single photon detector (APD).
    b) and c) Alternative optical wiring for integrated devices to implement the amplitude distribution network. Blue squares represent generic tunable component for phase shifting and amplitude modulation.   b) broadcast and modulate, c) tree-like structure.
  }  
\end{figure*}




To experimentally verify our algorithm, we reconstructed multiple transfer matrices implemented by a reconfigurable integrated LON that has already been tested for quantum information processing \cite{Carolan2015-Universal}.
The device is comprised of 30 evanescent couplers and 30  thermo-optic phase-shifters acting on the fundamental optical modes of six single-mode waveguides. 
The schematic of the LON is shown in \cref{fig:experimental.schematic}(a). By injecting light into the bottom waveguide of the device, an initial sequence of five integrated Mach-Zehnder interferometers and five additional phase-shifters act as a distribution network to prepare the input vectors $\ket{\vec{\alpha}}$.
The remaining triangular mesh of components is then sufficient to implement the unitary transfer matrices $\mat{M}$ that we analysed~\cite{Reck1994}.
We note, however, that although the  design of the distribution network we used is sufficient to perform the PhaseLift reconstruction, it is not optimal since it does not minimise the average number of components the light goes through. 
For optimal performance, we suggest a distribution network design based on a tree-like connectivity or on a broadcast and modulate approach, see \cref{fig:experimental.schematic}(b,c) for schematic examples.

The LON was configured to implement several 2-, 3- and 5-dimensional $\mat{M}$, including identity and Fourier transformations, as well as uniformly (Haar) random unitaries. To test the quality of the PhaseLift reconstruction, we also performed an alternative reconstruction procedure for the same matrices based on a different physical principle: the interference signal in the second order correlation function as opposed to first order correlations \cite{Hong1987-HOMdip}. Indeed, in this second reconstruction algorithm, the phases of the matrix elements were inferred from two-photon interference measurements. In the \supplementary, we report more details on this method based on the work in \cite{laing2012-superstable,Tillmann2016-reconstruction} and performed by using photon pairs generated by a spontaneous down-conversion source. 

\begin{figure*}[tbp]
  \centering
  \includegraphics[width=\textwidth]{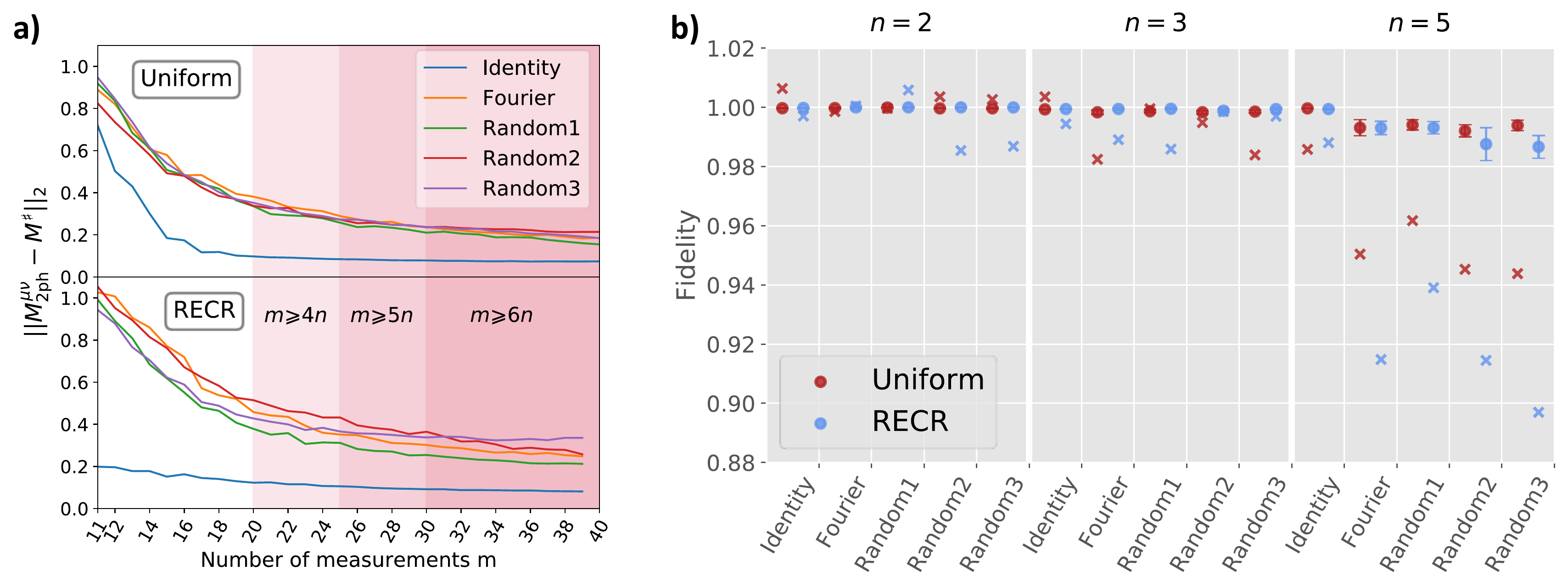}
  \caption{%
     \label{fig:experimental.phase_transition}
     \label{fig:experimental.overview}
     a) Experimental reconstruction distance, expressed with the Frobenius norm, between the PhaseLift reconstruction and the two-photon reference as a function of the number of input states $m$. Whenever possible, the distance is averaged over multiple combinations of the $m$ input vectors out of the measured available ones.   
     Performance of the uniform and RECR ensembles are similar. For non sparse unitaries, we witness convergence in line with the aforementioned conjecture m=4n. Their accuracy improves fast with the number of inputs before $m\sim 4n$ when it slows down.
     b) Circuit fidelity comparison for different dimensions of the target matrices. To both PhaseLift and two-photon reference reconstructions we apply the polar decomposition in order to compare unitary matrices. 
     For each matrix and sampling ensemble, we subsample 100 times $m = 6n$ input vectors out of the measured ones together with the corresponding measured intensities. 
     Error bars indicate the standard deviation of the distribution obtained from this re-sampling.     
     However, since for $n=2$ there are only six distinct RECR vectors up to a global phase, there is only one reconstruction.
     The crosses show the average fidelity that is observed without applying the polar decomposition to the  reconstructions. 
     }
\end{figure*}

Since the properties of the photonic components are generally mode-dependent, that is dependent on the wavelength, temporal envelope, polarisation, etc.\ of the light, we decided to perform the PhaseLift algorithm adopting the closest light source to the bi-photon states used for the two-photon reference reconstruction. Therefore, we used single photons generated by the same down-conversion source heralded via the detection of the co-generated twin photon in the idler mode. The opportunity of using single photons as alternative to coherent states is based on the equivalence, under linear optical transformations, between the probability of detecting a single photon at alternative output modes, and the relative intensity of corresponding coherent states. In particular, assuming $|\vec{\alpha}|=1$, the input vector $\ket{\vec{\alpha}}$ maps to 
\begin{equation}
  \ket{\psi(\vec{\alpha})} = \sum_k \alpha_k a_k^{\dag} \ket{\mathfrak{0}},
\end{equation}
where $\ket{\mathfrak{0}}$ denotes the vacuum state and $ a_k^{\dag}$ is the bosonic creation operator in the mode $k$.

The probability of measuring the photon at detector $j$ is then given by
\begin{equation}
  \mathbb{P}(j|\vec{\alpha}) = \left| \sum_k M_{jk} \alpha_k \right|^2.
  \label{eq:experiment.probabilities}
\end{equation}
By taking into account the statistical fluctuations introduced by a finite-sample frequentist estimate of the probability, this is analogous to the noisy intensity measurements described in \eqref{eq:intensities}.

For both reconstructions, the photons from the free space source were coupled into and out of the photonic chip via pre-packaged polarisation maintaining optical fibres and all output modes were simultaneously measured by an array of single photon avalanche photodiodes.
For all transfer matrices $\mat M$ of the same size, the PhaseLift data was collected for the same set of randomly chosen input vectors. The overall number of input vectors recorded is summarised in table \cref{tab:measurements1}.
Further technical details are reported in the \supplementary. 


\begin{table}[bt]
  \begin{tabular}{|l | r r r|}
  \hline
    Dimension $n$ & 2 & 3 & 5 \\
    \hline
    Gaussian & 20 & 30 & 40 \\
    RECR & 6 & 31 & 39 \\
    \hline
  \end{tabular}
  \caption{%
    \label{tab:measurements1}
    Total number of distinct input vectors used during the experiment.
  }
\end{table}


Since the reconstruction obtained from two-photon interference is also oblivious of the column phases of the matrix, to compare the two-photon and the PhaseLift reconstructions we report the Frobenius distance between the two matrices after optimise the row phases as well as the column phases: 
\begin{equation}\label{eq:overfitting}
  \min_{\vec{\mu},\vec{\nu}}\left\| \mat{M}^\sharp -  \vec{D} (\vec{\mu}) \mat{M}_{\text{2photon}}\vec{D} (\vec{\nu}) \right\|_2.
\end{equation}
For short, we will label $\mat{M}_{\text{2ph}}^{\mu \nu} = \vec{D} (\vec{\mu^*}) \mat{M}_{\text{2photon}}\vec{D} (\vec{\nu^*})$ the optimal two-photon reconstruction obtained using the corrective phase $\vec{\mu^*} $ and $\vec{\nu^*} $ that minimise \cref{eq:overfitting}.

In \cref{fig:experimental.phase_transition}(a) we show the results from the 5-dimensional PhaseLift reconstructions. The distance from the two-photon reference is reported as a function of the number of input vectors $m$ used for the PhaseLift reconstruction. 
The large set of input vectors used for the characterisation measurements allows us to average the reconstruction distance over multiple combinations of input vectors. 

From the plot we observe that the performance of the uniform and RECR ensembles are qualitatively similar but with a slightly better agreement with the reference shown by the uniform ensemble. From the reconstruction distance at $m=4n-4$, we obtain an indication of the noise level affecting the reconstruction system $\sigma_\text{Uniform}\sim0.025$ and $\sigma_\text{RECR}\sim0.035$.
The experiment also clarifies how, in a real case scenario, 
the improvement of the reconstruction
continues after reaching the suggested number of input vectors, although at a lower pace. Indeed, while the theoretical performance  guarantee  is   valid for both systematic and stochastic source of noise, the PhaseLift algorithm permits to use larger datasets to reduce the noise on the reconstructed matrix due to stochastic errors.  

Choosing to fix $m=6n$, in \cref{fig:experimental.overview}(b) we compare the PhaseLift and the reference reconstruction for all tested dimensions by means of the circuit fidelity \cite{Carolan2015-Universal}, here defined as:
\begin{equation}\label{eq:fidelity}
    \mathcal{F}\left(\mat{A},\mat{B}\right) = \frac{
        \tr\left( \left| \mat{A}^\dag . \mat{B} \right|^2 \right)}{n},
\end{equation}
where $n$ is the dimension of the square transfer matrices $\mat{A}$ and $\mat{B}$, and the absolute squaring operation of the matrix product is computed element-wise. 
When computed between two unitary matrices, such fidelity has a clear procedural meaning. It is the probability of projecting a photon prepared according to a column of $\mat{B}$ onto the corresponding column of $\mat{A}$, averaged over the $n$ columns. When comparing unitary matrices, the fidelity \cref{eq:fidelity} is upper-bounded by 1, however, the particular definition has inconsistent behaviour for matrices with non-normalised columns. Therefore, the data reported with dots in  \cref{fig:experimental.overview}(b) refer to the fidelity between the unitary approximation of $\mat{M}^\sharp$ and $\mat{M}_{\text{2ph}}^{\mu \nu}$ obtained by means of polar decomposition that provides us with the closest unitary matrix to a square matrix $\mat{A} $ as defined by unitarily invariant norms \cite{Fan1955-PolarInequality}. Error bars represent the standard deviation observed by choosing alternative sub-samples of $6n$ input vectors from the measured ones. The lack of more than 6 independent 2-dimensional RECR vectors forced us to only use a fixed set of input vectors for this configuration.
For 2-, 3-, and 5-dimensional matrices, the average fidelity of the three Haar random unitaries was 0.9997 (0.99999), 0.9985 (0.9993), 0.993 (0.989)   when using the uniform (RECR) ensemble.  
As a comparison, we report with crosses on the same graph the average fidelity obtained between the non-unitary original reconstructed matrices $\mat{M}^\sharp$ and $\mat{M}_{\text{2ph}}^{\mu \nu}$. Without applying the polar decomposition, fidelity above 1 is sometimes observed for lower dimensional matrices while the noise in the reconstruction strongly penalises the 5-dimensional cases. Interestingly, an average fidelity above 0.988 (0.98) is also observed if the columns of the 5-dimensional PhaseLift reconstructed matrices are normalised without imposing their orthogonality. 

%
%

\section*{Conclusions}%
\label{sec:conclusion}
In this work, we introduce a practical solution to the problem of characterising linear optical devices based on recent advances in phase retrieval and low-rank matrix recovery.
The PhaseLift reconstruction outlined in \cref{prot:characterization} can be used to reconstruct transfer matrices using only intensity measurements and classical coherent light as input, by modulating its amplitude according to complex vectors chosen at random from an appropriate ensemble.
Not only do we provide numerical and experimental evidence that the number of illumination settings required for this approach scales linearly with the number of modes of the device, but we also support these findings with rigorous theory: PhaseLift ensures stability with respect to additive noise corruptions. This theoretical support extends, in particular, to 
the RECR ensemble, which holds a great potential for applications in linear optical devices with tailored amplitude network designs.

Due to the additional experimental overhead associated with the calibration of a phase-stable amplitude distribution network, the PhaseLift reconstruction is particularly suited for integrated devices that can be re-programmed to implement a family of different transformations. 
We demonstrated the successful implementation of the PhaseLift characterisation protocol on a universally reconfigurable six waveguide device.
The results from this experiment show that, even without an ad hoc optimisation of the distribution network, the performance of the RECR ensemble for the PhaseLift reconstruction is close to the more conventional uniform distribution.  


\subsection*{Acknowledgements}
We are thankful to Jacques Carolan for his work on setting up the optical chip.
We acknowledge the experimental support from Patrick Yard in the collection of two-photon interference data. 
We acknowledge support from the Engineering and Physical Sciences Research Council (\mbox{EPSRC}) Hub in Quantum Computing and Simulation (EP/T001062/1), and the U.S. Army Research Office (ARO) grant W911NF-14-1-0133 and Germany’s Excellence Strategy - Cluster of Excellence \emph{Matter and Light for Quantum Computing} (ML4Q) EXC2004/1. Fellowship support from EPSRC is acknowledged by A.L. (EP/N003470/1).
The authors are grateful for those who provide support for the following software packages: NumPy~\cite{Walt_2011_Numpy}, Cvxpy~\cite{Diamond_2016_Cvxpy}, IPython~\cite{Perez_2007_Ipython}, matplotlib~\cite{Hunter_2007_Matplotlib}, Seaborn~\cite{Waskom_2017_Seaborn}, and Pandas~\cite{Mckinney_2010_Data}.

\subsection*{Author contribution}
The project was conceived and managed by A.L. and D.G.
The mathematical demonstrations of PhaseLift reconstruction properties with uniform and RECR ensembles were developed by D.S., R.K. and D.G.
The experiment was designed by C.S., N.Mar., D.S. and A.L.
The original code to perform the PhaseLift reconstruction was developed by D.S. who also produced the numerical simulations presented in the text.
Experimental data were collected by N.Mar. and A.M.
Two-photon reconstruction was done by N.Mar. and A.M. while the analysis of the PhaseLift datasets and their presentation involved N.Mar, A.M., D.S. and C.S. 
The photonic chip was fabricated by N.Mat. and T.H.
The manuscript was prepared by D.S., R.K., N.Mar., A.M. and C.S. 
All authors edited the paper and contributed to its final revision.

\bibliographystyle{IEEEtran}
\bibliography{references_we_use}

\onecolumngrid

\appendix
\newpage
 
\section{Experimental Setup}%
\label{sec:experimental_details}

\subsection{Photon source}
We used a Titanium:Sapphire laser (Coherent Chameleon) to generate to generate 140fs long pulses centred at $808$nm with a repetition rate of $80$MHz.
A half-wave plate and a polarising beamsplitter (PBS) are used to attenuate the power.
Next, a $\beta$-barium borate (BBO) crystal is used to perform second harmonic generation.
Dichroic mirrors remove the remaining 808nm light and a 0.5mm thick Bismuth Triborate BiB$_3$O$_6$ (BiBO) crystal is used to perform spontaneous parametric down-conversion (SPDC) from the up-converted 404nm pulse.
Down-converted photons are emitted in a cone at opening angle $\theta = 6^{\circ}$ and pass through a 3nm interference filter centered at 808nm.
Light is collected from opposite points on the SPDC emission cone and coupled into polarisation maintaining fibres (PMF).

When either pair is connected directly to the detectors, the ratio between the coincidence detection rate and the single detection rate is $\sim12\%$.
Taking into account the detector efficiency, this translates to a heralding efficiency of around 24$\%$.

\subsection{Integrated Circuit}
The silica-on-silicon integrated photonic chip was fabricated by the Nippon Telegraph and Telephone company (NTT) in Japan.
Flame hydrolysis deposition followed by photolithographic and reactive ion etching was used to fabricate germanium doped silica (SiO2-GeO2) waveguides with dimensions 3.5$\mu$m $\times$ 3.5$\mu$m with a silica cladding onto a silicon substrate.
Thin-film Tantalum Nitride (Ta$_2$N) thermo-optic heaters were then fabricated on top of the circuit with dimensions 1.5mm $\times$ 50$\mu$m.
The circuit is formed of a cascaded array of 30 directional couplers (each with a length of 500$\mu$m) and 30 phase shifters designed to perform a universally reconfigurable transfer matrix on six waveguide modes.

The coupling losses have been estimated as $\sim 9\%$ per facet and the directional couplers at $<2.3\%$.
The average loss fibre-to-fibre was measured to be $\sim 42\%$.
The device is actively cooled via a Peltier cooling unit.

Thermo-optic modulators are driven by electronic heater driver boards which can deliver up to 20V with 4.9mV resolution and current up to 100mA.
These are then interfaced with a computer to set all the heaters to implement a given transfer matrix.

\subsection{Photon Detectors}
The detection system uses 6 SPADs (Perkin Elmer SPCM-AQRH-14), each with efficiencies $50-60\%$, a dark count rate of $\sim 100$Hz, timing jitter of $\sim 350$ps and a dead time of 32ns.
A coincidence counting card time-tagging all simultaneous channels in a time window usually set to be around 2ns is used to register detection events.
For each channel it is possible to set a specific time delay that is used by the counting card to compensate for the discrepancy in the signals arrival time introduced in the experiment by optical fibres, detectors, electronics and coaxial cables.

The detector efficiencies were estimated as follows.
Light was injected into the top mode of the circuit and counts were collected for 100 Haar-random unitary configurations of the circuit.
The set of relative efficiencies that minimised the sum of the total variation distances of the measured distributions to their targets was then used.
Experimental counts are adjusted by these estimated relative efficiencies.

\section{Two-photon Reconstruction}
Since our goal is to test the PhaseLift characterisation technique, and not the performance of the optical chip, we compare the PhaseLift reconstructions against the reconstruction obtained with a different experimental technique.
These reference reconstructions are obtained with a variant of the methods described in \cite{laing2012-superstable,Tillmann2016-reconstruction}.
For each matrix,  $n\times n$  single photon data is recorded by routing the heralded single photons alternatively into each of the input mode $k$ of the transfer matrix and recording the coincidence between each detector and the heralding signal. These provide us with the information about the modulus squared, $\rho_{jk}^2=\abs{M_{jk}}^2$, of the matrix elements. In particular, for each input mode, these counts are corrected with the detector efficiency calibration and divided by their cumulative sum to produce a normalised column of the matrix. 2-photon interference data is then collected to determine the phases of these matrix elements. 

We estimate the phase of each component by following a similar approach to~\cite{laing2012-superstable} that is based on the measurement of several HOM-dips~\cite{Hong1987-HOMdip}.
To perform the HOM experiments,  MZIs of the amplitude distribution stage, marked red in \cref{fig:experimental.schematic}, are set to perform an identity transformation and both fibres carrying the photon pairs generated by the SPDC source are connected to input ports of the chip.
For each pairwise combination of the input modes of our matrices, we record the twofold coincidences among our detectors while changing the time delay of a photon relative to the other via a motorised translation stage. The only exception regards the input combination into modes 4 an 5 that is not available in our setup. 
For each pair of input modes, prior to the matrix characterisation, we record a reference signal while implementing a balanced beam splitter on the device. As detailed later, this provides us with information about the position in the motorised scan when the two photons are indistinguishable as well as optical properties of the wavepackets.  

In each two photon interference measurement, the coincidence counts as a function of delay are fit to the function:
\begin{equation}\label{eq:dip_fitting}
  f(\tau)=\left(1-c_1\exp[-((\tau - c_2)/c_3)^2]\ \mathrm{sinc}[(\tau - c_2)/c_6]\right)(c_4\tau + c_5 + c_7 \tau^2) 
\end{equation}
where $\tau$ is the time delay of the photon and $\{c_i\}$ fit parameters.
The first term approximates the temporal envelope of a Gaussian photon subject to a top-hat filter and the second term adjusts for the decoupling resulting from the movement of the translation stage.
The coefficient $c_1$ of this fit constitutes a bare visibility of the two photon interference later normalised by knowing the value of the reference dip.

The coefficients $c_2,\ c_3,\ c_6$ recorded from the reference dips are used as starting point to fit the signal obtained while characterising the matrices. When the estimated visibility of a signal is comparable with the noise level, we choose to constrain the parameters $c_2,\ c_3,\ c_6$ of the fit to those of the reference to prevent overfitting.
Initial values for the coefficients $c_1,\ c_4,\ c_5,\ c_7$ are originated by a few arithmetic combinations of the minimum, maximum and average   of the dataset together with the values at the extremes of the scanning interval of the translation stage. 
Before proceeding to the determination of the matrix elements, the visibilities obtained by fitting \cref{eq:dip_fitting} are divided by the reference visibility to account for the partial-distinguishability of the photons generated by our source that in the different measurements ranged between from 0.965 and 0.98.

Following~\cite{laing2012-superstable}, we first determine the absolute values of the phases of all matrix elements with the following algorithm.
\begin{enumerate}
\item We assume that all the elements of the first row and of the first column of the matrix are real and positive. 
\item By using all the input combinations of the form $(1,k)$ with $1<k\leq n$ and all the outputs combinations $(1,j)$ with $1<j\leq n$, we set the absolute values of the phase of the elements  $M_{jk}$ to be 
\begin{equation}
\left|\phi_{jk}\right|= \arccos\left( - \frac{|M_{11}M_{jk}|^2+|M_{1k}M_{j1}|^2}{2 |M_{11}M_{jk}M_{1k}M_{j1}|} V_{1k,1j} \right)\in \left[0,\pi \right],
\end{equation}
where $V_{1k,1j}$ is the visibility observed injecting photons into the ports 1 and $k$ and detecting photons at the output ports 1 and $j$.
\end{enumerate}

The sign of the phases is then attributed with the following routines:
\begin{enumerate}
\item We impose that the phase of the element $M_{22}$ is between 0 and $\pi$: $\phi_{22}=|\phi_{22}|$
\item For elements $M_{j2}$ with $2<j\leq n$, we attribute the following sign to the phase as determined by using the visibilities from the input combination $(1,2)$ with outputs combinations $(2,j)$. In particular, once defined 
\begin{equation}
    \eta_{j2}=  \arccos\left( - \frac{|M_{21}M_{j2}|^2+|M_{22}M_{j1}|^2}{2 |M_{21}M_{j2}M_{22}M_{j1}|} V_{12,1j} \right),
\end{equation}
we use $\eta_{j2}$ to set the sign of $\phi_{j2}$ as 
\begin{equation}
    \text{sign}\left[ \phi_{j2} \right] =
    \text{sign}\Big[\big|\eta_{j2}-\arccos\big(\cos(\phi_{22}+|\phi_{j2}|)\big)     \big|-
    \big|\eta_{j2}-\arccos\big(\cos(\phi_{22}-|\phi_{j2}|)\big) \big|
    \Big]
\end{equation}
\item We then attribute the sign to each element phase $\phi_{jk}$ with $1<j\leq n$ and $2<k\leq n$ by using  the visibilities from the input combination $(2,k)$ with outputs combinations $(1,j)$. In details, we set
\begin{equation}
    \eta_{jk}=  \arccos\left( - \frac{|M_{22}M_{jk}|^2+|M_{2k}M_{j2}|^2}{2 |M_{22}M_{jk}M_{2k}M_{j2}|} V_{2k,1j} \right),
\end{equation}
and determined $\phi_{jk}$ as
\begin{equation}
     \phi_{jk}   =|\phi_{jk}|\cdot \text{sign}\Big[\big|\eta_{jk}-\arccos\big(\cos(\phi_{j2}+|\phi_{jk}|)\big)     \big|-
    \big|\eta_{jk}-\arccos\big(\cos(\phi_{j2}-|\phi_{jk}|)\big) \big|
    \Big]
\end{equation}
\end{enumerate}

The final matrix reconstruction, $M_{\text{2photon}}$, is obtained as the result of an optimisation protocol based on all the available observed data. The starting point of the optimisation is the transfer matrix obtained with the algorithm described so far. The cost function we chose to optimise is
\begin{equation}
\textit{F}=\sqrt{ \sum_{x=(a,b,c,d)} \left|V_x-\frac{\left|M_{ac}M_{bd}+M_{ad}M_{bc}\right|^2}{\left|M_{ac}M_{bd}\right|^2+\left|M_{ad}M_{bc}\right|^2}\right|^2  w_x } + \lambda \sqrt{0.5 \left\| M^\dag.M - \mathbb{I} \right\|_2^2 + 0.5 \left\| M.M^\dag - \mathbb{I} \right\|_2^2 } ,
\end{equation}
where $x$ is an index that runs over all the experimentally measured visibilities $V_x$, with $(a,b)$ being the input modes and $(c,d)$ being the output modes. $w_x$ are weights proportional to the number of counts contributing to the estimation of the different HOM interference signals; higher count rates, strongly dependent on the $|M_{2,2}M_{j,k}|^2+|M_{2,k}M_{j,2}|^2$, lead to better resolved HOM fringes. $\mathbb{I}$ is the identity matrix of the appropriate size.
$\lambda$ is an empirical factor to weight the term of the cost function that enforces unitarity constraints against the part of the cost function that is based on experimental observation. We note that for the 5-dimensional matrices, $x$ runs over 90 experimental observations while for 2- and 3- dimensional matrices the same number reduces to 1 and 9, respectively.

During the optimisation of $\textit{F}$, the matrix elements are decomposed as $M_{jk}=\rho_{jk}\cdot\gamma_{j}\cdot\delta_{k}\cdot e^{i\phi_{jk}}$ where $\rho_{jk}$ result from experimental single photon measurements and $\gamma$, $\delta$, and $\phi$ are real valued variables. We performed 20 sequential optimisation steps alternatively optimising  the phase degrees of freedom $\phi_{jk}\in(-\pi,\pi)$ or the losses degrees of freedom $\gamma_{j}$ and $\delta_{k}$.
We note that some of the degrees of freedom of the matrix are redundant because of the symmetry of the cost function: the list includes the phases of the first row and first column elements and $\gamma_1$ .

\subsection{PhaseLift reconstruction}%
\label{sub:experimental_details.data}

\begin{table}
  \begin{tabular}{l | r r r}
    Dimension $n$ & 2 & 3 & 5 \\
    Gaussian & 20 & 30 & 40 \\
    RECR & 6 & 31 & 39 \\
  \end{tabular}
  \caption{%
    \label{tab:measurements}
    Total number of preparation vectors taken during experiment.
  }
\end{table}

As mentioned in the main text, we estimate the intensity measurements from single photon counting rates.
After correcting for detector efficiency, all counting rates are scaled by a constant such that the resulting intensities obey $ \sum_j I_j(\alpha^{(l)}) = 1$.
The number of different input vector we injected into each characterised transfer matrix is reported in \cref{tab:measurements}.
We provide a ready for use implementation of the PhaseLift convex program~\eqref{eq:PhaseLift} as well as related algorithms in the open source library \textsc{pypllon}~\cite{Suess_2017_Pypllon},

In an ideal experiment, $M^\sharp$ would be unitary and, therefore, every row would have unit norm.
However, due to loss in the characterised circuit as well as detector inefficiencies, the norm of each row is smaller than one.
Since we cannot distinguish the two sources of loss in our current experimental setup, we cannot characterise the absolute photon loss in the circuit, but only the relative losses of the rows.

Raw data as well as the analysis scripts are available at \url{https://github.com/dseuss/phaselift-paper}.

\section{Recovery guarantee for phase retrieval via PhaseLift}%
\label{sec:guarantee_phase_retrieval}

In this section, we provide the necessary background and convergence proofs for phase retrieval via the PhaseLift algorithm in a self-contained manner.
This approach is designed to recover arbitrary vectors $\vec x \in \mathbb{C}^n$ from noisy, phaseless measurements of the form
\begin{equation}
y_k = \left| \langle \vec\alpha_k, \vec x \rangle \right|^2 + \epsilon_k \label{eq:guarantee.measurements}
\end{equation}
via solving the convex optimisation
\begin{align}
  \underset{\mat{Z} \in \Hermitian_n}{\textrm{minimize}} & \quad \sum_{l=1}^m \left| \tr \left( (|\vec\alpha^{(l)} \rangle \langle \vec\alpha^{(l)} |) \mat{Z} \right) - y^{(l)} \right| \label{eq:guarantee.phaselift} \\
   \textrm{subject to} &\quad  \mat{Z} \geq 0. \nonumber
\end{align}
Here, $\Hermitian_n$ denotes the set of hermitian $n \times n$ matrices.
We focus on measurements $\vec\alpha_1,\ldots,\vec\alpha_m \in \mathbb{C}^n$ chosen independently from a distribution that obeys three conditions:
\begin{itemize}
\item \emph{Isotropy on $\mathbb{C}^n$:}
\begin{equation}
  \mathbb{E} \left[ | \langle \vec\alpha, \vec z \rangle |^2 \right] = C_I \| \vec z \|_{\ell_2}^2 \quad \forall \vec z \in \mathbb{C}^n. \label{eq:tight_frame}
\end{equation}

\item \emph{Sub-Isotropy on $\Hermitian_n$}:
\begin{align}
  \mathbb{E} \left[ \langle \vec\alpha | \mat Z | \vec\alpha \rangle^2 \right] \geq C_{SI} \| \mat Z \|_2^2 \quad \forall \mat Z \in \Hermitian_n \label{eq:sub_isotropy}
\end{align}

\item \emph{Sub-Gaussian tail behaviour:} For every normalized $\vec z \in \mathbb{C}^n$ ($\| \vec z \|_{\ell_2}=1$), $| \langle \vec\alpha, \vec z \rangle|$ is sub-Gaussian in the sense that its moments obey
\begin{equation}
  \mathbb{E} \left[ | \langle \vec\alpha, \vec z \rangle|^{2N} \right] \leq C_{SG} N! \quad N \in \mathbb{N}.
\label{eq:subexponential}
\end{equation}
\end{itemize}

\begin{proposition}%
  \label{prop:gauss+recr_requirements}
  The following measurement ensembles fulfil the three properties above:
  \begin{enumerate}

    \item \emph{Gaussian sampling scheme:} $\vec\alpha \in \mathbb{C}^n$ is chosen from the standard complex normal distribution $\mathcal{N}(0,\tfrac{1}{2}\mathbb{I})+ i \mathcal{N}(0,\tfrac{1}{2}\mathbb{I})$. In this case
  \begin{equation*}
    C_I = C_{SI} = C_{SG} = 1.
  \end{equation*}

  \item \emph{Uniform sampling scheme:} $\vec\alpha \in \mathbb{C}^n$ is a vector chosen uniformly from the complex unit sphere with radius $\sqrt{n}$. In this case
  \begin{equation*}
    C_I = 1, \; C_{SI} = \frac{n}{n+1}, \; C_{SG} = \prod_{k=1}^{N-1} \frac{n}{n+k} \leq 1.
  \end{equation*}

  \item (unnormalised) \emph{Randomly Erased Complex Rademacher (RECR) sampling scheme:} with respect to a fixed (arbitrary) basis, the coefficients of $\vec\alpha \in \mathbb{C}^n$ are chosen from the following distribution:
  \begin{equation}
  \alpha_i \sim
  \begin{cases}
  +1 & \textrm{with prob. } p/4 \\
  +\ii & \textrm{with prob. } p/4 \\
  0 & \textrm{with prob. } 1-p \\
  -\ii & \textrm{with prob. } p/4 \\
  -1 & \textrm{with prob. } p/4 \\
  \end{cases}
  \label{eq:definition_recr}
  \end{equation}
  The constants depend on the erasure probability $1-p \in [0,1]$:
  \begin{equation*}
  C_I = p,\; C_{SI} = p \min \left\{p,1-p \right\}, \; C_{SG} = \mathrm{e}^{\frac{3}{2}}.
  \end{equation*}
  \end{enumerate}
\end{proposition}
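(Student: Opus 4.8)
The plan is to reduce all three requirements to moment computations of the linear form $\langle\vec\alpha,\vec z\rangle$ and of the quadratic form $\bra{\vec\alpha}\mat Z\ket{\vec\alpha}$, handled ensemble by ensemble. For the \emph{Gaussian} scheme the starting point is that, for fixed $\vec z$, the variable $\langle\vec\alpha,\vec z\rangle$ is a centred complex Gaussian with variance $\norm{\vec z}^2$, so the standard complex moment formula gives $\mathbb{E}[\abs{\langle\vec\alpha,\vec z\rangle}^{2N}] = N!\,\norm{\vec z}^{2N}$; taking $N=1$ yields $C_I=1$ and the general $N$ yields $C_{SG}=1$. For sub-isotropy I would expand $\mathbb{E}[\bar\alpha_j\alpha_k\bar\alpha_l\alpha_m]$ by the complex Wick/Isserlis rule: only the two conjugate--nonconjugate pairings survive (the pairing producing $\mathbb{E}[\alpha_i^2]=0$ drops out), giving $\mathbb{E}[\bra{\vec\alpha}\mat Z\ket{\vec\alpha}^2]=(\tr\mat Z)^2+\norm{\mat Z}_2^2\ge\norm{\mat Z}_2^2$ and hence $C_{SI}=1$.

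For the \emph{uniform} scheme I would write $\vec\alpha=\sqrt n\,\vec u$ with $\vec u$ uniform on the unit sphere and invoke the averaging identity $\mathbb{E}[(\ket{\vec u}\bra{\vec u})^{\otimes k}]=\binom{n+k-1}{k}^{-1}P_{\mathrm{sym}}^{(k)}$. The cases $k=1$ ($P_{\mathrm{sym}}^{(1)}=\mathbb{I}$) and $k=2$ ($P_{\mathrm{sym}}^{(2)}=\tfrac12(\mathbb{I}+\mathrm{SWAP})$) give $C_I=1$ and $\mathbb{E}[\bra{\vec\alpha}\mat Z\ket{\vec\alpha}^2]=\tfrac{n}{n+1}\big((\tr\mat Z)^2+\norm{\mat Z}_2^2\big)$, hence $C_{SI}=n/(n+1)$. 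For the moment condition I would use that $\abs{\langle\vec u,\vec z\rangle}^2$ follows a $\mathrm{Beta}(1,n-1)$ law for normalised $\vec z$, so that $\mathbb{E}[\abs{\langle\vec\alpha,\vec z\rangle}^{2N}]=n^N\,\tfrac{N!\,(n-1)!}{(n+N-1)!}=N!\prod_{k=1}^{N-1}\tfrac{n}{n+k}$, which is exactly the stated $C_{SG}$.

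For the \emph{RECR} scheme, isotropy and sub-isotropy are again direct. Independence together with $\mathbb{E}[\alpha_i]=\mathbb{E}[\alpha_i^2]=0$ and $\mathbb{E}[\abs{\alpha_i}^2]=\mathbb{E}[\abs{\alpha_i}^4]=p$ gives $C_I=p$, and the same fourth-moment bookkeeping as above, now with explicit $p$-weights for the all-four-equal, the two-diagonal, and the swap index patterns (the remaining patterns vanishing through $\mathbb{E}[\alpha_i^2]=0$), yields $\mathbb{E}[\bra{\vec\alpha}\mat Z\ket{\vec\alpha}^2]=p^2(\tr\mat Z)^2+p^2\norm{\mat Z}_2^2+p(1-2p)\sum_i Z_{ii}^2$. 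Dropping $(\tr\mat Z)^2\ge0$ and then bounding the (real) diagonal term by $0$ when $p\le\tfrac12$ and by $\norm{\mat Z}_2^2$ when $p>\tfrac12$, using $0\le\sum_i Z_{ii}^2\le\norm{\mat Z}_2^2$, collapses the estimate to $C_{SI}=p\min\{p,1-p\}$.

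The hard part will be the RECR sub-Gaussian bound. The plan is to first control every one-dimensional projection: for any phase $\theta$, the quantity $\mathrm{Re}(\ee^{-\ii\theta}\langle\vec\alpha,\vec z\rangle)$ is a sum of independent, centred, bounded contributions (each bounded by $\abs{z_i}$ since $\abs{\alpha_i}\le1$) with total variance $\tfrac p2$, so a Hoeffding/Bennett-type estimate gives a sub-Gaussian generating bound $\mathbb{E}[\exp(t\,\mathrm{Re}(\ee^{-\ii\theta}\langle\vec\alpha,\vec z\rangle))]\le\exp(c\,t^2)$ with explicit $c$. I would then recover the even moments of the modulus by rotational averaging, using $\tfrac{1}{2\pi}\int_0^{2\pi}\exp(t\,\mathrm{Re}(\ee^{-\ii\theta}w))\,\dd\theta=I_0(t\abs{w})$ together with a Hubbard--Stratonovich representation of $\exp(s\abs{w}^2)$, and optimise over the auxiliary parameters. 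The genuinely delicate point is making the resulting constant both dimension-independent and explicit: the crude tail estimates are lossy when a single coordinate $\abs{z_i}\approx1$ dominates, so the inequalities must be organised so that the worst case over $\vec z$ stays controlled, and it is the accumulated numerical factors — rather than any one estimate — that finally collapse to $C_{SG}=\ee^{3/2}$ (a bound that is far from tight, since the true moment ratio never exceeds one). I expect tracking this constant through the whole chain to be the main obstacle.
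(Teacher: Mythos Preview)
Your treatment of the Gaussian and uniform ensembles, and of isotropy and sub-isotropy for the RECR ensemble, matches the paper's proof closely. The only cosmetic differences are that the paper uses an eigenvalue decomposition of $\mat Z$ (together with independence of orthogonal Gaussian projections) where you invoke Wick's rule, and it reads off the $2N$-th uniform moment directly from the symmetric-subspace identity rather than naming the Beta law; the resulting formulas coincide line by line.

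The genuine divergence is in the RECR sub-Gaussian bound. The paper does not pass through real-part moment generating functions, Bessel functions, or a Hubbard--Stratonovich representation at all; instead it controls the exponential moment $\mathbb{E}\bigl[\exp\bigl(|\langle\vec\alpha,\vec z\rangle|^{2}\bigr)\bigr]$ directly. Expanding $|\langle\vec\alpha,\vec z\rangle|^{2}=\sum_{k}|\alpha_{k}|^{2}|z_{k}|^{2}+\sum_{k\neq l}\bar\alpha_{k}\alpha_{l}\bar z_{k}z_{l}$, the diagonal contribution is bounded deterministically by $\exp\bigl(\|\vec z\|_{\ell_2}^{2}\bigr)$ via $|\alpha_{k}|\le1$. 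For each off-diagonal pair the product $\bar\alpha_{k}\alpha_{l}$ is itself a RECR variable (with erasure probability $1-p^{2}$), which the paper writes as a Rademacher sign times a factor of modulus at most one; the elementary estimate $\cosh(x)\le\exp(|x|^{2}/2)$ then yields $\mathbb{E}\bigl[\exp(\bar\alpha_{k}\alpha_{l}\bar z_{k}z_{l})\bigr]\le\exp\bigl(\tfrac{1}{2}|z_{k}|^{2}|z_{l}|^{2}\bigr)$, and the product over pairs contributes $\exp\bigl(\tfrac{1}{2}\|\vec z\|_{\ell_2}^{4}\bigr)$. Altogether $\mathbb{E}\bigl[\exp(|\langle\vec\alpha,\vec z\rangle|^{2})\bigr]\le\ee^{3/2}$; Markov's inequality converts this into a sub-exponential tail for $|\langle\vec\alpha,\vec z\rangle|^{2}$, and the layer-cake integral for moments immediately gives $\mathbb{E}\bigl[|\langle\vec\alpha,\vec z\rangle|^{2N}\bigr]\le\ee^{3/2}N!$ with no optimisation over auxiliary parameters. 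Your proposed route could in principle reach a comparable conclusion, but it is considerably longer, and the constant-tracking you identify as the main obstacle is exactly what the paper's direct exponential-moment argument sidesteps.
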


The proof techniques developed here do not apply to the normalized RECR scheme presented in the main text.
We comment on potential extension to this case in \cref{sec:normalized_recr}.\\

We now turn to the problem of proving recovery guarantees for PhaseLift with inputs sampled from distributions satisfying the conditions stated above.
The following statement is a substantial generalisation of existing results regarding phase retrieval from Gaussian and uniform measurements~\cite{candes_solving_2012,demanet_stable_2014}:

\begin{theorem}[Theorem 1.3 in~\cite{candes_solving_2012}]%
  \label{thm:phaselift_noisy}
  Suppose that $m = Cn$ vectors $\vec\alpha_1,\ldots,\vec\alpha_m \in \mathbb{C}^n$ have been chosen independently at random from an ensemble that obeys the three properties \eqref{eq:tight_frame}, \eqref{eq:sub_isotropy} and \eqref{eq:subexponential}.
  Then, with probability at least $1 - 3\mathrm{e}^{-\gamma m}$,  $m$ noisy measurements of the form~\eqref{eq:guarantee.measurements} suffice to reconstruct any $\vec{x} \in \BC^n$ via PhaseLift -- the convex optimisation problem~\eqref{eq:guarantee.phaselift}.
  This reconstruction is stable in the sense that the minimizer $\mat{X}^\sharp$ of \cref{eq:guarantee.phaselift} is guaranteed to obey
  \begin{equation}
    \left\| \vec{X}^\sharp - |\vec{x} \rangle \! \langle \vec{x}| \right\|_2 \leq \frac{C'  \| \epsilon \|_{\ell_1}}{m}. \label{eq:noisy_recovery_bound}
  \end{equation}
Here, $\| \cdot \|_2$ denotes the Hilbert-Schmitt norm $\| \mat{Z} \|_2^2 = \tr \left( \mat{Z} \mat{Z}^\dagger \right)$, while $C,C'$ and $\gamma$ represent constants of sufficient size.
\end{theorem}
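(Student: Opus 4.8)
The plan is to recast \cref{eq:guarantee.phaselift} as low-rank recovery for the linear sampling operator $\mathcal{A}:\Hermitian_n \to \mathbb{R}^m$, $(\mathcal{A}(\mat Z))_l = \tr\big((\ket{\vec\alpha^{(l)}}\bra{\vec\alpha^{(l)}})\mat Z\big)$, so that the data read $\vec y = \mathcal A(\mat X) + \epsilon$ with $\mat X = \ket{\vec x}\bra{\vec x}$. Writing $\mat\Delta = \mat X^\sharp - \mat X$ for the error of the minimizer, feasibility of $\mat X$ together with optimality of $\mat X^\sharp$ gives, by the triangle inequality, $\norm{\mathcal A(\mat\Delta)}_{\ell_1} \le 2\norm{\epsilon}_{\ell_1}$. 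The theorem therefore reduces to a matching lower bound $\tfrac1m\norm{\mathcal A(\mat\Delta)}_{\ell_1} \ge c\,\norm{\mat\Delta}_2$, after which \cref{eq:noisy_recovery_bound} follows with $C' = 2/c$. The extra structure I would exploit is that $\mat X^\sharp \ge 0$ together with $\mat X = \ket{\vec x}\bra{\vec x}$ forces the component of $\mat\Delta$ orthogonal to the rank-one tangent space at $\mat X$ to be \emph{positive semidefinite}.

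Decompose $\mat\Delta = \mat\Delta_T + \mat\Delta_{T^\perp}$, where $T$ is the tangent space at $\mat X$ (spanned by $\ket{\vec x}\bra{\vec h} + \ket{\vec h}\bra{\vec x}$) and $\mat\Delta_{T^\perp} \ge 0$ by the remark above. The crucial insight, following \cite{candes_solving_2012,demanet_stable_2014}, is that the two pieces are governed by different mechanisms, so the high-dimensional positive part never needs to be confined to a low-dimensional cone. Because every sample contributes $\langle\vec\alpha_l|\mat\Delta_{T^\perp}|\vec\alpha_l\rangle \ge 0$, the $\ell_1$ norm of its measurements collapses to a linear functional: $\tfrac1m\norm{\mathcal A(\mat\Delta_{T^\perp})}_{\ell_1} = \tr(\hat{\mat\Sigma}\,\mat\Delta_{T^\perp}) \ge \lambda_{\min}(\hat{\mat\Sigma})\,\tr(\mat\Delta_{T^\perp})$, where $\hat{\mat\Sigma} = \tfrac1m\sum_l \ket{\vec\alpha_l}\bra{\vec\alpha_l}$. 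Isotropy \cref{eq:tight_frame} gives $\mathbb E[\hat{\mat\Sigma}] = C_I \mathbb I$, and an operator-norm concentration bound valid for $m \gtrsim n$ (using the sub-Gaussian tails \cref{eq:subexponential}) yields $\lambda_{\min}(\hat{\mat\Sigma}) \ge C_I/2$. Hence the positive part is controlled through its \emph{nuclear} norm $\tr(\mat\Delta_{T^\perp}) \ge \norm{\mat\Delta_{T^\perp}}_2$, independently of its rank.

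For the tangent part $\mat\Delta_T$, which lives in a space of dimension $O(n)$, I would establish a restricted lower-isometry bound $\tfrac1m\norm{\mathcal A(\mat\Delta_T)}_{\ell_1} \gtrsim \norm{\mat\Delta_T}_2$ by Mendelson's small-ball method. This needs two inputs: a marginal small-ball probability $Q = \inf_{\mat Z \in T,\, \norm{\mat Z}_2 = 1}\Pr\big(\abs{\langle\vec\alpha|\mat Z|\vec\alpha\rangle} \ge 2\xi\big)$ bounded below independently of $n$, and a Rademacher-complexity term of order $\sqrt{n/m}$. The small-ball probability follows from the Paley--Zygmund inequality applied to $\langle\vec\alpha|\mat Z|\vec\alpha\rangle^2$: sub-isotropy \cref{eq:sub_isotropy} lower-bounds its mean, while the sub-Gaussian condition \cref{eq:subexponential} upper-bounds its variance, so the ratio $(\mathbb E W)^2/\mathbb E W^2$ is bounded below by a universal constant built from $C_{SI}$ and $C_{SG}$. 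Choosing $m = Cn$ with $C$ large makes the complexity term negligible against $\xi Q$, and the scale $\xi$ is fixed by $C_I$.

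The final, and hardest, step is to glue the two estimates into a single lower bound on $\tfrac1m\norm{\mathcal A(\mat\Delta_T + \mat\Delta_{T^\perp})}_{\ell_1}$, since the $\ell_1$ norm of a sum is not the sum of the $\ell_1$ norms and the cross correlations $\langle\vec\alpha_l|\mat\Delta_T|\vec\alpha_l\rangle\,\langle\vec\alpha_l|\mat\Delta_{T^\perp}|\vec\alpha_l\rangle$ must be tamed. I expect this to be the main obstacle. The underlying difficulty is that the rank-one forms $\langle\vec\alpha|\cdot|\vec\alpha\rangle$ are heavy-tailed---products of sub-Gaussian coordinates are only sub-exponential---so naive matrix-Bernstein estimates fail; the sub-Gaussian moment bound \cref{eq:subexponential} must instead be used through a truncation and case-analysis argument that discards the rare large-deviation events while preserving both the small-ball bound on $T$ and the positivity-driven bound on $T^\perp$. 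Assembling these pieces yields $\tfrac1m\norm{\mathcal A(\mat\Delta)}_{\ell_1} \ge c\big(\norm{\mat\Delta_T}_2 + \norm{\mat\Delta_{T^\perp}}_2\big) \ge c'\norm{\mat\Delta}_2$, which combined with the reduction of the first paragraph proves \cref{eq:noisy_recovery_bound}; the failure probability $3\mathrm e^{-\gamma m}$ collects the concentration bounds from each step, and the ensemble constants $C_I, C_{SI}, C_{SG}$ of \cref{prop:gauss+recr_requirements} enter only through $c$, $\xi$ and the required size of $C$.
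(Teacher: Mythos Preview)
Your outline is correct and follows the original Cand\`es--Li strategy: tangent-space decomposition $\mat\Delta = \mat\Delta_T + \mat\Delta_{T^\perp}$, positivity of $\mat\Delta_{T^\perp}$ controlled through $\lambda_{\min}(\hat{\mat\Sigma})$, a small-ball lower bound on the low-dimensional piece, and then a gluing step. The paper, however, takes a genuinely different route that sidesteps precisely the gluing problem you flag as the main obstacle. Instead of splitting along $T$ and $T^\perp$, the paper performs a case analysis on the \emph{spectrum} of $\mat\Delta$: either the leading singular value dominates, $\|\mat\Delta_1\|_1 \ge \tfrac12\|\mat\Delta_c\|_1$, which forces $\mat\Delta$ into the single set $D = \Sphere^{n^2-1}\cap 3\mathcal B_1$ (Frobenius sphere intersected with three times the nuclear ball), or it does not, in which case positivity of $\mat Z$ together with von~Neumann's trace inequality and the approximate-identity condition $\bigl\|\tfrac{1}{\nu m}\sum_k \mat A_k - \mathbb I\bigr\|_\infty \le \tfrac16$ directly controls $\|\mat\Delta\|_1$ by $\tfrac1m\|\mathcal A(\mat\Delta)\|_{\ell_1}$. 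Mendelson's small-ball method is then applied \emph{once}, uniformly over the body $D$, rather than over the tangent space; the nuclear-norm constraint $\|\mat Z\|_1\le 3$ is exactly what makes the fourth-moment bound in the Paley--Zygmund step dimension-free. Your approach is the historically first and arguably more geometric one; the paper's buys a cleaner argument with no cross-term analysis between $\mat\Delta_T$ and $\mat\Delta_{T^\perp}$, at the price of a less intuitive descent set and an auxiliary spectral case split.
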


The constants $C,C',\gamma$ implicitly depend on $C_I,C_{SI}, C_{SG}$ in \eqref{eq:tight_frame}, \eqref{eq:sub_isotropy}, \eqref{eq:subexponential} and can in principle be extracted from the proof.
No attempt has been made to optimize them.
Note that the demand on the number of measurements $m$ in \cref{thm:phaselift_noisy} scales linearly in the problem's dimension $n$.
This is optimal up to the constant multiplicative factor $C$.
Analytical bounds on this constant $C$ are usually too pessimistic to be practical and it is widely believed that
\(
  m = 4n - 4
\)
such measurements are actually sufficient, that is $C = 4 + o(n)$~\cite{heinosaari_quantum_2013}.
We refer to~\cite{MixonBlog} for further information about this topic.

We postpone the proof of this result to \cref{sec:main_proof} in order to derive an error bound for the signal vector from \cref{thm:phaselift_noisy}.
Recall that we obtain the recovered signal vector $\vec{x}^\sharp$ from the minimizer $\mat{X}^\sharp$ of \cref{eq:guarantee.phaselift} by a eigenvalue decomposition.
Let
\begin{equation}
  \mat{X}^\sharp = \sum_i \lambda_i \ket{\vec x_i}\bra{\vec x_i}
\end{equation}
with $\lambda_1 \ge \lambda_2 \ldots \ge \lambda_n$.
Then, we set
\begin{equation}
  \vec{x}^\sharp = \sqrt{\lambda_1} \vec x_1.
\end{equation}
In~\cite{candes_solving_2012} it was shown that \cref{eq:noisy_recovery_bound} implies
\begin{equation}
  \min_{0 \leq \phi \leq 2 \pi} \, \| \vec{x}^\sharp - \mathrm{e}^{\ii \phi} \vec{x} \|_{\ell_2}
  \leq C'' \frac{\| \epsilon \|_{\ell_1} }{m \| \vec{x} \|_{\ell_2}}
, \label{eq:vectorial_noisy_bound}
\end{equation}
where $C''$ again denotes an absolute constant.

\section{Recovery guarantee for \cref{prot:characterization}}%
\label{sec:guarantee}

Now, let us investigate the consequences of \cref{thm:phaselift_noisy} to our problem of characterising linear optical networks.
In \cref{eq:intensities_as_overlap}, we have already underlined the similarity between the intensity measurements in our setup on one hand and the assumed measurements~\eqref{eq:guarantee.measurements} for the phase retrieval on the other hand.
The input vector $\vec{\alpha}$ acts as a measurement vector and each row of the transfer matrix $\vec{M}_j$ plays the role of a complex signal vector $\vec x$ to be determined.
\Cref{thm:phaselift_noisy} now allows for recovering these row vectors individually by means of the PhaseLift algorithm~\eqref{eq:guarantee.phaselift}.
In order to meet the requirements for recovering the first row of $\mat{M}$ using \cref{thm:phaselift_noisy}, we have to measure the intensities in the first output mode $I_1(\vec\alpha^{(l)})$ for $m = Cn$ random coherent input states $\ket{\vec{\alpha}^{(1)}}, \ldots \ket{\vec{\alpha}^{(m)}}$ sampled from a distribution fulfilling the conditions \eqref{eq:tight_frame}, \eqref{eq:sub_isotropy} and \eqref{eq:subexponential}.
Said theorem then guarantees recovery of $\mat{M}_1 \in \BC^n$ -- the complex conjugate of the first row of $\vec M$ with high probability by means of PhaseLift \eqref{eq:guarantee.phaselift}.

Before we can move on to determine the remaining row vectors $\mat{M}_{j}$ ($2 \leq j \leq n$) of $\mat{M}$, it is important to point out that the recovery guarantee of \cref{thm:phaselift_noisy} is \emph{universal}: one instance of randomly chosen measurement vectors suffices to recover \emph{any} vector $\vec{x} \in \BC^n$.
In our particular setting, this universality assures that a single choice of random coherent states $\ket{\vec{\alpha}^{(1)}},\ldots,\ket{\vec{\alpha}^{(m)}}$ suffices to recover all row vectors $\mat{M}_j$ simultaneously.
Putting everything together yields the following protocol:
\begin{protocol}[\emph{Reconstruction of the transfer matrix $\mat{M}$}]%
  \label{prot:detailed_reconstruction}
  Let $\mat{M}$ be an arbitrary $n \times n$ transfer matrix as defined in~\eqref{eq:coherent_transfer_matrix}.
  In order to approximately recover it, sample $m = Cn$ random coherent input states $\ket{\vec{\alpha}^{(1)}},\ldots,\ket{\vec{\alpha}^{(m)}}$ from a distribution satisfying conditions \eqref{eq:tight_frame}--\eqref{eq:subexponential} and measure the $mn$ intensities
  \begin{equation*}
    y_j^{(l)} = \left| \sum_i M_{j,i} \, \alpha_i^{(l)} \right|^2 + \epsilon_j^{(l)} \quad \forall 1 \leq j \leq n, \quad 1 \leq l \leq m,
  \end{equation*}
  where $\epsilon_j^{(l)}$ denotes the additive noise at detector site $j$ when measuring the intensity resulting from input state  $\ket{\vec{\alpha}^{(l)}}$.
  For each $1 \leq j \leq n$, solve the semi-definite program
  \begin{align}
    \mat{Z}^\sharp_{j} = \underset{\mat{Z} \in \Hermitian_n}{\argmin}& \quad \sum_{l=1}^m \left| \tr \left( (|\vec{\alpha}^{l} \rangle \langle \vec{\alpha}^{(l)} | ) \mat Z \right) - y_j^{(l)} \right| \label{eq:tmat_recovery_program}\\
    \mathrm{s.t.} & \quad \mat{Z} \geq 0 \nonumber
  \end{align}
  and let $\mat{M}_j^\sharp$ be the eigenvector of $\mat{Z}^\sharp_{j}$ corresponding to its largest eigenvalue and rescaled to have length $\norm{\mat{M}_j^\sharp}_{\ell_2} = \sqrt{ \| \mat{Z}^\sharp_{j} \|_\infty}$.
  Then, the we estimate $\mat{M}$ by
  \begin{equation}
    \mat{M}^\sharp =
    \begin{pmatrix}
      \adj{\vec{M^\sharp}_1} \\ \vdots \\  \adj{\vec{M^\sharp}_n}
    \end{pmatrix}.
    \label{eq:transfermat_estimator}
  \end{equation}
\end{protocol}
Note that \cref{eq:transfermat_estimator} simply amounts to stacking the separately recovered row vectors $\vec M_j^\sharp$.
The additional complex conjugation by taking the adjoint is due to the definition of $\vec M_j$ below \cref{eq:intensities_as_overlap}.

Now, a simple extension of \cref{thm:phaselift_noisy} yields a similar performance guarantee for \cref{prot:detailed_reconstruction}.
In order to succinctly state this result, we introduce some additional notation.
Define the total noise at detector site $j$ (measured in $\ell_1$-norm) to be
\begin{equation*}
  \epsilon_j^{\mathrm{tot}}= \sum_{l=1}^m \abs{\epsilon_j^{(l)}}
\end{equation*}
and the overall noise strength:
\begin{equation}
  \epsilon^{\mathrm{tot}} = \sqrt{ \sum_{j=1}^n {\epsilon_j^{\mathrm{tot}}}^2}.
\end{equation}

\begin{corollary}[Performance guarantee for \cref{prot:detailed_reconstruction}]
  \label{cor:noisy_performance_guarantee}
  The reconstruction $\mat{M}^\sharp$ of any transfer matrix $\mat{M}$ by means of \cref{prot:detailed_reconstruction} satisfies
    \begin{equation}
      \min_{\vec{\mu}: \abs{\mu_j} = 1} \left\|  \mat{M}^\sharp -  \vec{D} (\vec{\mu}) \mat{M} \right\|_2
      \leq C' \frac{n \epsilon^\mathrm{tot}}{m \nu}.
      \label{eq:noisy_reconstruction_total_bound}
    \end{equation}
  with high probability (i.e.\ with probability at least $1 - \CO \left( \mathrm{e}^{-\gamma m}\right)$).
  Here, $C'$ is a constant of sufficient size and
  \begin{equation}
    \nu = \min_{1 \leq j \leq n} \| \mat{M}_j \|_{\ell_2}.
    \label{eq:definition_normconst}
  \end{equation}
\end{corollary}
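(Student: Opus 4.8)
The plan is to reduce the statement to the single-vector guarantee \cref{thm:phaselift_noisy} by reading the $n$ rows of $\mat{M}$ as $n$ separate phase-retrieval problems that share the \emph{same} random measurement ensemble $\vec\alpha^{(1)},\ldots,\vec\alpha^{(m)}$. For each $1 \leq j \leq n$, the matrix $\mat{Z}_j^\sharp$ in \cref{prot:detailed_reconstruction} is precisely the PhaseLift output \eqref{eq:guarantee.phaselift} for the signal $\vec{M}_j$ subject to the noise $(\epsilon_j^{(1)},\ldots,\epsilon_j^{(m)})$, whose $\ell_1$-norm is $\epsilon_j^{\mathrm{tot}}$. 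I would first condition on the high-probability event on which \cref{thm:phaselift_noisy} is valid; crucially, since that guarantee is \emph{universal} -- one good draw of $\vec\alpha^{(l)}$ recovers \emph{every} signal -- no union bound over the rows is required, and the whole reconstruction succeeds on a single event of probability at least $1 - 3\mathrm{e}^{-\gamma m}$. On this event the vectorial consequence \eqref{eq:vectorial_noisy_bound} applies to each row and yields
\begin{equation*}
  \min_{\phi_j} \norm{\vec{M}_j^\sharp - \mathrm{e}^{\ii \phi_j}\vec{M}_j}_{\ell_2} \;\leq\; C'' \frac{\epsilon_j^{\mathrm{tot}}}{m\,\norm{\vec{M}_j}_{\ell_2}}, \qquad 1 \leq j \leq n.
\end{equation*}

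The second step is to recombine these per-row estimates into a single bound on the Frobenius distance, respecting the fact that the unrecoverable phases act independently on each row. The diagonal matrix $\vec{D}(\vec{\mu}) = \mathrm{diag}(\mu_1,\ldots,\mu_n)$ with $\abs{\mu_j}=1$ is exactly a free choice of one unit phase per row, so the row $j$ of $\mat{M}^\sharp - \vec{D}(\vec{\mu})\mat{M}$ is (up to the complex conjugation implicit in the definition of $\vec{M}_j$) the vector $\vec{M}_j^\sharp - \mu_j \vec{M}_j$. Consequently the minimisation over $\vec{\mu}$ decouples completely across rows, and the optimal per-row phase from the display above can be substituted directly.

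To collect the final bound I would lower-bound every row norm by $\norm{\vec{M}_j}_{\ell_2} \geq \nu$, using the definition \eqref{eq:definition_normconst}, and control the per-row noise by the aggregate via $\epsilon_j^{\mathrm{tot}} \leq \epsilon^{\mathrm{tot}}$, which is immediate from $\epsilon^{\mathrm{tot}} = \sqrt{\sum_j (\epsilon_j^{\mathrm{tot}})^2}$. Bounding the Frobenius norm of the block-structured matrix by the triangle inequality over its $n$ rows then gives
\begin{equation*}
  \min_{\vec{\mu}: \abs{\mu_j}=1} \norm{\mat{M}^\sharp - \vec{D}(\vec{\mu})\mat{M}}_2 \;\leq\; \sum_{j=1}^n \frac{C''\,\epsilon_j^{\mathrm{tot}}}{m\,\nu} \;\leq\; C' \frac{n\,\epsilon^{\mathrm{tot}}}{m\,\nu},
\end{equation*}
which is exactly \eqref{eq:noisy_reconstruction_total_bound}, with failure probability $\CO(\mathrm{e}^{-\gamma m})$ inherited from \cref{thm:phaselift_noisy}.

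I expect the only subtle point to be the honest accounting of the dimensional factor $n$, since everything else is quoted verbatim from \cref{thm:phaselift_noisy} and its vectorial corollary. The factor $n$ is an artefact of two deliberately crude steps: estimating the Frobenius norm by the triangle inequality (rather than the tight identity $\norm{\cdot}_2^2 = \sum_j \norm{\text{row}_j}_{\ell_2}^2$) and replacing each $\epsilon_j^{\mathrm{tot}}$ by the larger $\epsilon^{\mathrm{tot}}$. A sum-of-squares argument would instead give the sharper bound $C'\,\epsilon^{\mathrm{tot}}/(m\nu)$ with no dimensional factor, matching the scaling quoted in \cref{thm:phaselift-intro}; because the corollary leaves $C'$ unspecified and only asserts order-of-magnitude stability, the looser route is simpler and entirely sufficient, and I would present it for that reason.
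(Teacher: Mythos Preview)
Your proposal is correct and follows the same overall strategy as the paper: apply \cref{thm:phaselift_noisy} row by row, invoke universality so that a single high-probability event covers all $n$ rows, and then recombine the per-row estimates into a Frobenius-norm bound after minimising the free phases. The one substantive difference is in the recombination step. The paper uses the exact identity $\norm{\mat{M}^\sharp - \vec{D}(\vec{\mu})\mat{M}}_2^2 = \sum_j \min_{\phi_j}\norm{\vec{M}_j^\sharp - \mathrm{e}^{\ii\phi_j}\vec{M}_j}_{\ell_2}^2$, then bounds each $\norm{\vec{M}_j}_{\ell_2}$ below by $\nu$ and recognises $\sum_j(\epsilon_j^{\mathrm{tot}})^2 = (\epsilon^{\mathrm{tot}})^2$ directly; this is precisely the ``sum-of-squares argument'' you describe in your final paragraph as the sharper alternative. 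Your route via the triangle inequality and $\epsilon_j^{\mathrm{tot}}\leq\epsilon^{\mathrm{tot}}$ is looser but still lands on the stated bound, so nothing is lost at the level of the corollary; you have simply traded away the sharpness that you yourself identify as available.
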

Recall that $\mat{D}(\vec\mu) = \mathrm{diag}(\mu_1, \ldots, \mu_n)$ with $\abs{\mu_j} = 1$ are the row-phases of $\mat{M}$ unrecoverable from the measurements~\eqref{eq:intensities}.
Note that $\nu = 1$ for unitary transfer matrices and $\nu < 1$ if there is loss.

\begin{proof}
For any fixed row vector $\vec{M}_j$,
\begin{equation}
  \min_{0 \leq \phi \leq 2 \pi}\left\| \mat{M}_j^\sharp - \mathrm{e}^{i \phi} \mat{M}_j \right\|_{\ell_2} \leq C' n \min
  \left\{
  \| \mat{M}_j \|_{\ell_2}, \frac{ \epsilon_j^{\mathrm{tot}}}{m \| \mat{M}_j \|_{\ell_2}}
  \right\}.
  \label{eq:noisy_reconstruction_vectorial_bound}
\end{equation}
follows directly from \cref{thm:phaselift_noisy} and \eqref{eq:vectorial_noisy_bound}, respectively. Universality (one choice of phaseless measurements allows for reconstructing any vector) moreover allows for applying this reconstruction guarantee to all $n$ row vectors $\vec{M}_j$ simultaneously.

The total noise bound~\eqref{eq:noisy_reconstruction_total_bound} follows from the entry-wise definition of the Frobenius norm:
  \begin{align*}
    \min_{\vec{\mu}}\left\|  \mat{M}^\sharp -  \vec{D} (\vec{\mu}) \mat{M} \right\|_2 ^2
    &= \min_{0 \leq \phi_1,\ldots,\phi_n \leq 2 \pi}
    \sum_{j=1}^n \left\| \mat{M}_j^\sharp - \mathrm{e}^{i \phi_j} \mat{M}_j \right\|_{\ell_2}^2 \\
    &= \sum_{j=1}^n \min_{0 \leq \phi_j \leq 2 \pi} \left\| \mat{M}_j^\sharp - \mathrm{e}^{i \phi_j} \mat{M}_j \right\|_{\ell_2}^2 \\
    & \leq (C')^2 n^2 \sum_{j=1}^n  \min \left\{ \| \mat{M}_j \|_{\ell_2}^2, \frac{ \eta_{(j)}^2}{m^2 \|\mat{M}_j \|_{\ell_2}^2} \right\} \\
    &\leq \left(C' n\right)^2 \sum_{j=1}^n \frac{\eta_{j}^2}{m^2 \|\mat{M}_j \|_{\ell_2}^2} \\
    & \leq \frac{\left(C'n \right)^2}{m^2 \nu} \sum_{j=1}^n \eta_{j}^2 \\
    &= \left( C' n \frac{\eta^{\mathrm{tot}}}{m \nu} \right)^2,
  \end{align*}
 Here, we have used \eqref{eq:vectorial_noisy_bound} for each summand in the third line.
  Taking the square root then yields the desired expression. 
\end{proof}
Note that this formulation allows to treat the different output modes and their detector noise levels individually.
In particular, we do not require a universal type of noise for all detectors, but allow for taking into account detector dependent noise of different quality (i.e.\ varying noise levels).


\section{Proof of \cref{thm:phaselift_noisy}}
\label{sec:main_proof}

Our analysis is inspired by Ref.~\cite{dirksen_gap_2015} (who derived strong results for sparse vector recovery using similar assumptions) and Ref.~\cite{kabanava_stable_2016} in the non-commutative setting. Moreover, Krahmer and Liu considered a real-valued version of the problem addressed here, see Ref.~\cite{krahmer_phase_2017}.

\subsection{Mathematical preliminaries}

Our analysis is based on two strong results about random matrix theory. First, the assumption of subgaussian tails \eqref{eq:subexponential} implies strong bounds on the operator norm of matrices of the form $\sum_{k=1}^m \ket{\vec{\alpha_k}}\bra{\vec\alpha_k}$:

\begin{theorem}[Variant of Theorem 5.35 in \cite{Vershynin_2010_Introduction}] \label{thm:bernstein}
Suppose that $\vec\alpha_1,\ldots,\vec\alpha_m$ are independent instances of a subgaussian random vector obeying \eqref{eq:subexponential} with constant $C_{SG}$.
Set
\begin{equation}
  \tilde{\mat H} = \frac{1}{m} \sum_{k=1}^m \left( a_k |\vec\alpha_k \rangle \! \langle \vec\alpha_k| - \mathbb{E} \left[ a_k |\vec\alpha_k \rangle \! \langle \vec\alpha_k| \right] \right),
  \label{eq:Htilde}
\end{equation}
where $a_k \in \mathbb{C}$ and $\abs{a_k} \leq 1$.
Then,
\begin{align*}
\mathrm{Pr} \left[ \| \tilde{\mat H} \|_\infty \geq t \right]
\leq
\begin{cases}
2 \exp \left( 2 \ln (3) n  - \frac{mt^2}{8 C_{SG}} \right) & 0 \leq t \leq 2C_{SG}, \\
2 \exp \left( 2 \ln (3) n - \frac{m}{2} (t- C_{SG} )  \right) & t \geq 2 C_{SG}.
\end{cases}
\end{align*}
\end{theorem}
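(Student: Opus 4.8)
The plan is to combine a net argument on the complex unit sphere with a scalar Bernstein inequality for sums of independent sub-exponential variables. The key structural point is that $\tilde{\mat H}$ from \eqref{eq:Htilde} is accessed entirely through its (bi)linear forms: when the weights $a_k$ are real, $\tilde{\mat H}$ is Hermitian and $\norm{\tilde{\mat H}}_\infty = \max_{\norm{\vec u}_{\ell_2}=1}\abs{\langle \vec u, \tilde{\mat H}\vec u\rangle}$, while for complex $a_k$ one instead controls $\langle \vec u, \tilde{\mat H}\vec v\rangle$ over a product net at the cost of harmless constants. I would fix a $\tfrac14$-net $\mathcal{N}$ of the unit sphere and use the standard net inequality $\norm{\tilde{\mat H}}_\infty \le 2\max_{\vec u\in\mathcal N}\abs{\langle\vec u,\tilde{\mat H}\vec u\rangle}$, together with the covering bound $\abs{\mathcal N}\le 9^{n}=3^{2n}$. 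This reduces everything to a union bound over finitely many fixed test vectors, and $\ln\abs{\mathcal N}$ is precisely what produces the $2\ln(3)\,n$ term in the exponent.

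The heart of the argument is a tail bound for a single fixed $\vec u\in\mathcal N$. Writing $\xi_k = \abs{\langle\vec\alpha_k,\vec u\rangle}^2$ we have
\begin{equation*}
  \langle \vec{u}, \tilde{\mat{H}} \vec{u} \rangle = \frac{1}{m}\sum_{k=1}^{m} \left( a_k \xi_k - \mathbb{E}\!\left[ a_k \xi_k \right] \right),
  \qquad \xi_k = \abs{\langle \vec\alpha_k, \vec{u} \rangle}^2,
\end{equation*}
a normalised sum of $m$ independent, centred terms. Hypothesis \eqref{eq:subexponential} is tailored for exactly this step: it states $\mathbb{E}[\xi_k^N] = \mathbb{E}[\abs{\langle\vec\alpha_k,\vec u\rangle}^{2N}]\le C_{SG}\,N!$, i.e.\ the Bernstein moment condition certifying that each $a_k\xi_k$ (with $\abs{a_k}\le1$) is sub-exponential with scale set by $C_{SG}$. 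I would then invoke the classical Bernstein inequality, whose tail splits into two regimes: a Gaussian small-deviation bound $2\exp(-ms^2/(2C_{SG}))$ valid for $s\le C_{SG}$, and a linear large-deviation bound $2\exp(-m(s-C_{SG}/2))$ valid for $s\ge C_{SG}$, the two matching continuously at the crossover $s=C_{SG}$.

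Finally I would assemble the pieces. The net inequality forces some $\vec u\in\mathcal N$ with $\abs{\langle\vec u,\tilde{\mat H}\vec u\rangle}\ge t/2$, so a union bound yields
\begin{equation*}
  \Pr\!\left[ \norm{\tilde{\mat{H}}}_\infty \ge t \right] \le \abs{\mathcal{N}} \max_{\vec{u} \in \mathcal{N}} \Pr\!\left[ \abs{\langle \vec{u}, \tilde{\mat{H}} \vec{u} \rangle} \ge t/2 \right].
\end{equation*}
Substituting $\abs{\mathcal N}\le 3^{2n}$ and the per-vector bound evaluated at $s=t/2$ turns the crossover $s=C_{SG}$ into $t=2C_{SG}$ and reproduces the two stated estimates: the quadratic exponent becomes $-m(t/2)^2/(2C_{SG})=-mt^2/(8C_{SG})$ and the linear one becomes $-m(t/2-C_{SG}/2)=-\tfrac{m}{2}(t-C_{SG})$. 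The main obstacle I anticipate is the Bernstein step rather than the net reduction: one must verify that \eqref{eq:subexponential} controls the \emph{centred} variables $a_k\xi_k-\mathbb{E}[a_k\xi_k]$ (not merely their raw moments), handle possibly complex weights $a_k$, and bound the moment-generating function sharply enough that the crossover and the constants $\tfrac{1}{8C_{SG}}$ and $2C_{SG}$ emerge exactly as claimed; this quantitative tightening of the exponential-regime estimate is where the real work lies.
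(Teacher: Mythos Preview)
Your proposal is correct and follows exactly the approach the paper intends: the statement is presented as a variant of a known result from Vershynin and is not given a self-contained proof in the paper, but the (implicit) argument is precisely the $\tfrac14$-net reduction on the sphere (contributing the factor $9^n=\exp(2\ln(3)n)$ and the doubling $\|\tilde{\mat H}\|_\infty\le 2\max_{\vec u\in\mathcal N}|\langle\vec u,\tilde{\mat H}\vec u\rangle|$), followed by a scalar Bernstein bound for the centred sub-exponential summands $a_k|\langle\vec\alpha_k,\vec u\rangle|^2-\mathbb{E}[a_k|\langle\vec\alpha_k,\vec u\rangle|^2]$ obtained from the moment hypothesis \eqref{eq:subexponential}, and a final union bound. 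Your bookkeeping of the constants---the $s=t/2$ substitution turning the crossover $s=C_{SG}$ into $t=2C_{SG}$ and producing the exponents $-mt^2/(8C_{SG})$ and $-\tfrac{m}{2}(t-C_{SG})$---is accurate, and your identification of the only nontrivial step (passing from the raw moment bound \eqref{eq:subexponential} to an MGF bound for the \emph{centred} variables, valid for $|\theta|\le\tfrac12$) is exactly right.
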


The second result is a generalisation of ``Gordon's escape through a mesh''-Theorem \cite{gordon_milman_1988} (a random subspace avoids a subset provided the subset is small in some sense) that is due to Mendelson \cite{mendelson_learning_2015,koltchinskii_bounding_2015}, see also see also \cite{tropp_convex_2015}.

\begin{theorem}[Mendelson's small ball method] \label{thm:mendelson}
  Suppose that the measurement operator $\mathcal{A}:\Hermitian_n \to \mathbb{R}^m$ contains $m$ independent copies $\mat A_k$ of a random matrix $\mat A \in \Hermitian_n$, that is
  \begin{equation}
    \label{eq:measurement_operator_definition}
    \mathcal{A}(\mat Z) = \sum_{k=1}^m \tr (\mat A_k \mat Z) \, \vec e_k,
  \end{equation}
  and let $D \subset \Hermitian_n$.
  For $\xi >0$ define
  \begin{align}
    Q_\xi (D, \mat A) =& \inf_{\mat Z \in D}\mathrm{Pr} \left[ | \tr (\mat A_k \mat Z) | \geq \xi \right] \quad &\textrm{(marginal tail funtion)}, \label{eq:marginal_tail_function}\\
    W_m (D, \mat A) =& 2 \mathbb{E} \left[ \sup_{\mat Z \in D} \tr \left( \mat Z \mat H \right) \right] \quad &\textrm{(mean empirical width)},
  \end{align}
  where
  \begin{equation}
    \mat H= \frac{1}{\sqrt{m}} \sum_{k=1}^m \eta_k \mat A_k
  \end{equation}
  and the $\eta_1,\ldots,\eta_m$ are independent Rademacher random variables.
  Then for any $\xi >0$ and $t >0$
  \begin{equation}
    \frac{1}{\sqrt{m}}\inf_{\mat Z \in D} \| \mathcal{A}(\mat Z) \|_{\ell_1} \geq \xi \sqrt{m} Q_{2\xi}(D, \mat A) -  W_m (D, \mat A)-\xi t \label{eq:mendelson}
  \end{equation}
  with probability at least $1-\mathrm{e}^{-2t^2}$.
\end{theorem}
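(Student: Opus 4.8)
The plan is to follow the standard architecture of Mendelson's small-ball method: replace the non-smooth summands $|\tr(\mat A_k \mat Z)|$ by a Lipschitz \emph{soft indicator}, lower-bound the resulting empirical average by its expectation minus a deviation term, and then control the deviation by symmetrisation, the contraction principle, and a bounded-differences concentration inequality. The soft indicator is needed precisely because the natural choice $\mathbf{1}\{|s|\geq\xi\}$ is not Lipschitz and would break the contraction step. Concretely, I would introduce
\[
  \psi_\xi(s) =
  \begin{cases}
    0 & |s| \leq \xi,\\
    (|s|-\xi)/\xi & \xi \leq |s| \leq 2\xi,\\
    1 & |s| \geq 2\xi,
  \end{cases}
\]
which is $1/\xi$-Lipschitz with $\psi_\xi(0)=0$ and obeys the two sandwiching bounds $\mathbf{1}\{|s|\geq 2\xi\}\leq \psi_\xi(s)$ and $\xi\,\psi_\xi(s)\leq |s|$. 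The lower sandwich gives, pointwise in $\mat Z$,
\[
  \frac{1}{\sqrt{m}}\|\mathcal{A}(\mat Z)\|_{\ell_1}
  = \frac{1}{\sqrt{m}}\sum_{k=1}^m |\tr(\mat A_k \mat Z)|
  \geq \xi\sqrt{m}\cdot \frac{1}{m}\sum_{k=1}^m \psi_\xi(\tr(\mat A_k \mat Z)),
\]
so it suffices to lower-bound the empirical average $\frac{1}{m}\sum_k \psi_\xi(\tr(\mat A_k \mat Z))$ uniformly over $\mat Z\in D$.

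Next I would split this average into its expectation and a deviation. Setting
\[
  E = \sup_{\mat Z \in D}\left( \mathbb{E}\big[\psi_\xi(\tr(\mat A \mat Z))\big] - \frac{1}{m}\sum_{k=1}^m \psi_\xi(\tr(\mat A_k \mat Z))\right),
\]
the upper sandwich $\psi_\xi(s)\geq \mathbf{1}\{|s|\geq 2\xi\}$ yields $\inf_{\mat Z\in D}\mathbb{E}[\psi_\xi(\tr(\mat A\mat Z))]\geq Q_{2\xi}(D,\mat A)$, hence $\inf_{\mat Z\in D}\frac{1}{m}\sum_k\psi_\xi(\tr(\mat A_k\mat Z))\geq Q_{2\xi}(D,\mat A)-E$. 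Because each $\psi_\xi$ takes values in $[0,1]$, perturbing a single $\mat A_k$ changes $E$ by at most $1/m$; McDiarmid's bounded-differences inequality then gives $E\leq \mathbb{E}[E]+u$ with probability at least $1-e^{-2mu^2}$.

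It remains to bound $\mathbb{E}[E]$, which is the heart of the argument. Here I would first apply the standard symmetrisation inequality, $\mathbb{E}[E]\leq 2\,\mathbb{E}\sup_{\mat Z\in D}\frac{1}{m}\sum_k \eta_k\psi_\xi(\tr(\mat A_k\mat Z))$ with Rademacher signs $\eta_k$, and then the Ledoux--Talagrand contraction principle to strip off the $1/\xi$-Lipschitz map $\psi_\xi$ (which vanishes at the origin), giving
\[
  \mathbb{E}[E]\leq \frac{2}{\xi}\,\mathbb{E}\sup_{\mat Z\in D}\frac{1}{m}\sum_{k=1}^m \eta_k\,\tr(\mat A_k\mat Z)
  = \frac{2}{\xi}\cdot\frac{1}{\sqrt{m}}\,\mathbb{E}\sup_{\mat Z\in D}\tr(\mat Z\mat H)
  = \frac{W_m(D,\mat A)}{\xi\sqrt{m}},
\]
using $\frac{1}{m}\sum_k\eta_k\mat A_k=\frac{1}{\sqrt{m}}\mat H$ and the definition of the mean empirical width.

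Combining the three estimates and multiplying through by $\xi\sqrt{m}$ yields
\[
  \frac{1}{\sqrt{m}}\inf_{\mat Z\in D}\|\mathcal{A}(\mat Z)\|_{\ell_1}
  \geq \xi\sqrt{m}\,Q_{2\xi}(D,\mat A) - W_m(D,\mat A) - \xi\sqrt{m}\,u
\]
with probability at least $1-e^{-2mu^2}$; setting $u=t/\sqrt{m}$ turns the error term into $\xi t$ and the failure probability into $e^{-2t^2}$, which is exactly the claim. I expect the main obstacle to be the symmetrisation--contraction step: one must verify that $\psi_\xi$ is genuinely $1/\xi$-Lipschitz and vanishes at the origin so that the contraction principle applies without spurious constants, and carefully track the powers of $\sqrt{m}$ so that the quantity $W_m(D,\mat A)$ emerges with precisely the $\frac{1}{\sqrt m}$-normalisation used in its definition.
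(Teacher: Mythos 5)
Your proof is correct, and all the details check out: the sandwich bounds for the soft indicator $\psi_\xi$, the McDiarmid step with bounded differences $1/m$ giving failure probability $\mathrm{e}^{-2mu^2}$, the symmetrisation--contraction chain (using that $\xi\psi_\xi$ is a contraction vanishing at $0$, valid since $\tr(\mat A_k \mat Z)$ is real for Hermitian arguments), and the bookkeeping that absorbs the factor $2$ from symmetrisation into the paper's definition $W_m = 2\,\mathbb{E}\bigl[\sup_{\mat Z \in D} \tr(\mat Z \mat H)\bigr]$, with $u = t/\sqrt{m}$ yielding exactly the stated bound and probability $1-\mathrm{e}^{-2t^2}$. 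Note, however, that the paper does not prove this theorem at all: it imports it by citation from Mendelson and Koltchinskii--Mendelson (see also Tropp's exposition), and your argument is precisely the standard proof given in those references, so there is nothing in the paper to diverge from.
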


Note that the measurement operator introduced in \cref{eq:measurement_operator_definition} is a shorthand notation for the linear measurements $y_k = \tr \mat A_k \mat Z$ with $k=1,\ldots,m$.
It maps the signal matrix $\mat Z$ to the vector of (noiseless) measurement outcomes $\sum_k y_k \vec e_k$.

\subsection{Convex geometry}

This section summarizes several results presented in Ref.~\cite{kabanava_stable_2016} and adapts them to the task at hand: phase retrieval.
Compared to~\cite{kabanava_stable_2016} the analysis presented here is somewhat more direct and exploits the positive semidefinite constraint in a different way.

\begin{proposition} \label{prop:nsp_implication}
  Let $\Sphere^{n^2-1}=\left\{ \mat Z \in \Hermitian_n: \| \mat Z \|_2=1 \right\}$ be the (Frobenius norm) unit sphere in $\Hermitian_n$ and $\mathcal{B}_1 = \mathrm{conv} \left\{ \pm |\vec x \rangle \! \langle \vec x| \colon \vec x \in \Sphere^{n-1} \right\}$ denote the trace-norm ball.
  Define
  \begin{equation}
    D := \Sphere^{d^2-1} \cap 3 \mathcal{B}_1, \label{eq:D}
  \end{equation}
  and let $\mathcal{A}(\mat Z) = \sum_{k=1}^m \tr (\mat A_k \mat Z ) \, \vec e_k$ be a measurement operator that obeys
  \begin{align}
      \frac{ \tau}{m} \| \mathcal{A}(\mat Z) \|_{\ell_1} \geq& \norm{\mat Z}_2 \quad \forall \mat Z \in D \label{eq:nsp}\\
      \| \frac{1}{\nu m}\sum_{k=1}^m \mat A_k -  \mathbb{I} \|_\infty \leq& \frac{1}{6}\label{eq:approx_povm}
  \end{align}
  for some $\tau,\nu >0$.
  Then, the following relation holds for any $\mat Z \geq 0$ and any $|\vec{x} \rangle \! \langle \vec{x}|$:
  \begin{equation}
    \| \mat Z - |\vec{x} \rangle \! \langle \vec{x}| \|_2 \leq \frac{1}{m} \max \left\{ \tau, \frac{6}{\nu} \right\}  \| \mathcal{A}(\mat Z-|\vec{x} \rangle \! \langle \vec{x}|) \|_{\ell_1}. \label{eq:rec_guarantee}
  \end{equation}
\end{proposition}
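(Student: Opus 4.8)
The plan is to write $\mat{H} = \mat{Z} - \ket{\vec x}\bra{\vec x}$ and to prove the single inequality $\norm{\mat{H}}_2 \leq \tfrac{1}{m}\max\{\tau,6/\nu\}\,\norm{\mathcal{A}(\mat{H})}_{\ell_1}$, which is exactly \eqref{eq:rec_guarantee} (we may assume $\mat{H}\neq 0$). The strategy is a case distinction on the effective rank of $\mat{H}$, according to whether $\norm{\mat{H}}_1\leq 3\norm{\mat{H}}_2$ or not. In the first (\emph{low-rank}) regime the null-space-type hypothesis \eqref{eq:nsp} applies directly, and in the second (\emph{high-rank}) regime the approximate-isotropy hypothesis \eqref{eq:approx_povm} takes over. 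The nontrivial ingredient that makes the high-rank case work is the positive-semidefinite constraint $\mat{Z}\geq 0$, which I would extract first.

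Concretely, I would decompose $\mat{H}=\mat{H}_+-\mat{H}_-$ into its orthogonally supported positive and negative parts and show that $\mathrm{rank}(\mat{H}_-)\leq 1$. The argument is that on the $(n-1)$-dimensional range of $\mathbb{I}-P$, where $P$ is the orthogonal projector onto $\mathrm{span}(\vec x)$, one has $(\mathbb{I}-P)\ket{\vec x}\bra{\vec x}(\mathbb{I}-P)=0$, so $\mat{H}$ restricted to this subspace coincides with $(\mathbb{I}-P)\mat{Z}(\mathbb{I}-P)\geq 0$. By the Courant--Fischer min-max principle, a Hermitian matrix that is positive semidefinite on a subspace of codimension one has at most one negative eigenvalue, giving $\mathrm{rank}(\mat{H}_-)\leq 1$. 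Two consequences follow that I will use below: since $\mat{H}_-$ is rank one, $\norm{\mat{H}_-}_1=\norm{\mat{H}_-}_2\leq\norm{\mat{H}}_2$; and orthogonality of supports yields the trace identity $\tr(\mat{H})=\norm{\mat{H}_+}_1-\norm{\mat{H}_-}_1=\norm{\mat{H}}_1-2\norm{\mat{H}_-}_1$.

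In the low-rank regime $\norm{\mat{H}}_1\leq 3\norm{\mat{H}}_2$, the normalized matrix $\mat{H}/\norm{\mat{H}}_2$ lies in $\Sphere^{n^2-1}\cap 3\mathcal{B}_1=D$, so \eqref{eq:nsp} together with linearity of $\mathcal{A}$ gives $\norm{\mat{H}}_2\leq\tfrac{\tau}{m}\norm{\mathcal{A}(\mat{H})}_{\ell_1}$. In the high-rank regime $\norm{\mat{H}}_1>3\norm{\mat{H}}_2$, the rank bound gives $\norm{\mat{H}_-}_1\leq\norm{\mat{H}}_2<\tfrac13\norm{\mat{H}}_1$, so the trace identity forces $\tr(\mat{H})=\norm{\mat{H}}_1-2\norm{\mat{H}_-}_1>\tfrac13\norm{\mat{H}}_1>0$. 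On the other hand, the triangle inequality for $\norm{\cdot}_{\ell_1}$, Hölder's inequality $\abs{\tr(\mat{A}\mat{B})}\leq\norm{\mat{A}}_\infty\norm{\mat{B}}_1$ for Schatten norms, and \eqref{eq:approx_povm} give
\[
  \frac{1}{\nu m}\norm{\mathcal{A}(\mat{H})}_{\ell_1}
  \geq \frac{1}{\nu m}\Big|\tr\Big(\big(\textstyle\sum_k \mat{A}_k\big)\mat{H}\Big)\Big|
  \geq \tr(\mat{H}) - \Big\|\tfrac{1}{\nu m}\textstyle\sum_k\mat{A}_k-\mathbb{I}\Big\|_\infty\norm{\mat{H}}_1
  \geq \tr(\mat{H}) - \tfrac16\norm{\mat{H}}_1 .
\]
Combining the two bounds on $\tr(\mat{H})$ yields $\tfrac{1}{\nu m}\norm{\mathcal{A}(\mat{H})}_{\ell_1}>\tfrac16\norm{\mat{H}}_1\geq\tfrac16\norm{\mat{H}}_2$, i.e.\ $\norm{\mat{H}}_2\leq\tfrac{6}{\nu m}\norm{\mathcal{A}(\mat{H})}_{\ell_1}$. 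Taking the larger of the two constants proves the claim.

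The main obstacle is the rank-one bound on $\mat{H}_-$: this is the only point at which the semidefinite constraint is used, and without it one could not show that $\tr(\mat{H})$ dominates a constant fraction of $\norm{\mat{H}}_1$ in the high-rank regime, which is precisely what makes the approximate-POVM estimate effective. Once that bound is in place, the remainder is routine bookkeeping between the three Schatten norms $\norm{\cdot}_1,\norm{\cdot}_2,\norm{\cdot}_\infty$, so I would expect no further difficulty.
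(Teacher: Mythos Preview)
Your proof is correct, but it follows a genuinely different route from the paper's. The paper decomposes $\mat{H}=\mat{Z}-\ket{\vec x}\bra{\vec x}$ into its leading rank-one eigen-component $\mat{H}_1$ and the tail $\mat{H}_c$, splits cases according to whether $\|\mat{H}_1\|_1\gtrless\tfrac12\|\mat{H}_c\|_1$, and in the high-rank case invokes von~Neumann's trace (Mirsky-type) inequality $\|\mat A-\mat B\|_1\geq\sum_k|\sigma_k(\mat A)-\sigma_k(\mat B)|$ to force $\|\mat Z\|_1-\|\ket{\vec x}\bra{\vec x}\|_1>\tfrac12\|\mat{H}_c\|_1$. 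You instead decompose $\mat{H}=\mat{H}_+-\mat{H}_-$ into its positive and negative parts and use Courant--Fischer (interlacing on a codimension-one subspace) to show $\mathrm{rank}(\mat{H}_-)\leq 1$; your case split $\|\mat{H}\|_1\lessgtr 3\|\mat{H}\|_2$ then maps directly onto the definition of $D$. Both arguments ultimately reduce the high-rank case to the same estimate $\tr(\mat{H})>\tfrac13\|\mat{H}\|_1$ combined with \eqref{eq:approx_povm}, and yield the same constants. Your route is arguably more transparent: the structural fact $\mathrm{rank}(\mat{H}_-)\leq 1$ isolates exactly how the psd constraint is used, whereas the paper's use of the singular-value perturbation inequality is a bit more indirect. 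Conversely, the paper's argument generalises verbatim to recovery of rank-$r$ targets (replacing $\ket{\vec x}\bra{\vec x}$ by an arbitrary psd matrix of rank $r$) via the same Mirsky bound, while your interlacing step would then give only $\mathrm{rank}(\mat{H}_-)\leq r$ and the constants would need adjusting.
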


\begin{proof}
In the proof we will frequently use the decomposition $\mat Z = \mat Z_1+\mat Z_c$ for $\mat Z$ with eigenvalue decomposition $\mat Z = \sum_{k=1}^n \lambda_k |\vec z^{(k)} \rangle \! \langle \vec z^{(k)}|$.
Then, $\mat Z_1 = \lambda_1 |\vec z^{(1)} \rangle \! \langle \vec z^{(1)}|$ is the leading rank-one component and $\mat Z_c = \mat Z-\mat Z_1$ is the ``tail''.
Note that, in particular, $\mat Z = \mat Z_1$ if and only if $\mat Z$ has unit rank.
Fix $\mat Z \geq 0$ and $|\vec{x} \rangle \! \langle \vec{x}|$.
\Cref{eq:rec_guarantee} is invariant under re-scaling, so we may w.l.o.g.\ assume $\| \mat Z-|\vec{x} \rangle \! \langle \vec{x}|\|_2=1$.
We treat the following two cases separately:
\begin{align}
I.) \quad& \| (\mat Z-|\vec{x} \rangle \! \langle \vec{x}|)_1 \|_1 \geq \frac{1}{2} \| (\mat Z-|\vec{x} \rangle \! \langle \vec{x}|)_c \|_1, \label{eq:nsp_case1} \\
II.) \quad & \| (\mat Z-|\vec{x} \rangle \! \langle \vec{x}|)_1 \|_1 < \frac{1}{2} \| (\mat Z-|\vec{x} \rangle \! \langle \vec{x}|)_c \|_1. \label{eq:nsp_case2}
\end{align}
Note that I.) implies
\begin{align*}
\| \mat Z-|\vec{x} \rangle \! \langle \vec{x}| \|_1 \leq &\| (\mat Z-|\vec{x} \rangle \! \langle \vec{x}|)_1 \|_1 + \| (\mat Z-|\vec{x} \rangle \! \langle \vec{x}|)_c \|_1 \leq 3 \| (\mat Z-|\vec{x} \rangle \! \langle \vec{x}|)_1 \|_1 \\
 = & 3 \| (\mat Z-|\vec{x} \rangle \! \langle \vec{x}|)_1 \|_2 \leq 3 \| \mat Z- |\vec{x} \rangle \! \langle \vec{x}| \|_2 = 3
\end{align*}
which in turn implies that $\mat Z-| \vec{x} \rangle \! \langle \vec{x}|$ is contained in $3 \mathcal{B}_1$.
Thus, \eqref{eq:nsp} is applicable and yields
\begin{equation*}
\| \mat Z - |\vec{x} \rangle \! \langle \vec{x}| \|_2 \leq  \frac{\tau}{m} \| \mathcal{A}(\mat Z-|\vec{x} \rangle \! \langle \vec{x}|) \|_{\ell_1}
\end{equation*}
which establishes \cref{eq:rec_guarantee} for case I in \eqref{eq:nsp_case1}.

For the second case, we use a consequence of von Neumann's trace inequality, see e.g. \cite[Theorem~7.4.9.1]{horn_topics_1991}: Let $\mat A, \mat B$ be matrices with singular values $\sigma_k (\mat A),\sigma_k (\mat B)$ arranged in non-increasing order.
Then
\begin{equation*}
  \| \mat A - \mat B \|_1 \geq \sum_{k=1}^d | \sigma_k (\mat A) - \sigma_k (\mat B)|
\end{equation*}
This relation implies
\begin{align*}
  \| \mat Z \|_1 =& \| |\vec{x} \rangle \! \langle \vec{x}| - (|\vec{x} \rangle \! \langle \vec{x}|-\mat Z) \|_1
  \geq \sum_{k=1}^d \left| \sigma_k (|\vec x \rangle \! \langle \vec x|) - \sigma_k (|\vec x \rangle \! |\langle \vec x|- \mat Z ) \right| \\
  \geq & \sigma_1 (|\vec x \rangle \langle \vec x|) - \sigma_1 \left( |\vec x \rangle \! \langle \vec x| - \mat Z \right)+ \sum_{k=2}^d \sigma_k \left( |\vec x \rangle \! \langle \vec x| - \mat Z\right) \\
  =&  \| |\vec x \rangle \! \langle \vec x| \|_1  - \| (|\vec x \rangle \! \langle \vec x| - \mat Z)_1 \|_1 + \|(|\vec x \rangle \! \langle \vec x| -\mat Z)_c \|_1 \\
  >& \| |\vec x \rangle \! \langle \vec x| \|_1 + \frac{1}{2} \| (|\vec x \rangle \! \langle \vec x|-\mat Z)_c \|_1,
\end{align*}
where the last inequality follows from \eqref{eq:nsp_case2}. Consequently,
\begin{align}
  \| |\vec x \rangle \! \langle \vec x| - \mat Z \|_1
  =& \| (|\vec x \rangle \! \langle \vec x| - \mat Z)_1 \|_1 + \| (|\vec x \rangle \! \langle \vec x|-\mat Z)_c \|_1
  \leq \frac{3}{2} \| (|\vec x \rangle \! \langle \vec x|- \mat Z )_c \|_1 \nonumber \\
  < & 3 \left( \| \mat Z \|_1 - \| |\vec x \rangle \! \langle \vec x| \|_1 \right). \label{eq:nsp_aux2}
\end{align}
Now, positive semidefiniteness of both $\mat Z$ and $\ket{\vec x}\bra{\vec x}$ together with assumption~\eqref{eq:approx_povm} implies
\begin{align*}
  \| \mat Z \|_1 - \| |\vec{x} \rangle \! \langle \vec{x}| \|_1
  =& \tr (\mat Z-|\vec{x} \rangle \! \langle \vec{x}|) =  \tr \left( \mathbb{I} \left( \mat Z-| \vec{x} \rangle \! \langle x|\right) \right) \\
  =&  \tr \left( \left( \mathbb{I} - \frac{1}{\nu m} \sum_{k=1}^m A_k \right) \mat Z-|\vec{x} \rangle \! \langle \vec{x}| \right) + \frac{1}{\nu m} \sum_{k=1}^m \tr \left( A_k (\mat Z-|\vec{x} \rangle \! \langle \vec{x}|) \right) \\
  \leq &  \left\|\mathbb{I}- \frac{1}{ \nu m} \sum_{k=1}^m A_k \right\|_\infty \| \mat Z-|\vec{x} \rangle \! \langle \vec{x}| \|_1 + \frac{1}{\nu m} \| \mathcal{A}(|\vec{x} \rangle \! \langle \vec{x}|-\mat Z) \|_{\ell_1} \\
  \leq &  \frac{1}{6 } \| \mat Z-|\vec{x} \rangle \! \langle \vec{x}| \|_1 + \frac{1}{\nu m} \| \mathcal{A}(|\vec{x} \rangle \! \langle \vec{x}|-\mat Z) \|_{\ell_1}.
\end{align*}
Inserting this into \eqref{eq:nsp_aux2} yields
\begin{align*}
\| |\vec x \rangle \! \langle \vec x| - \mat Z \|_1 < \frac{1}{2} \| |\vec{x} \rangle \! \langle \vec{x}|-\mat Z \|_1 +  \frac{3}{\nu m} \| \mathcal{A}(|\vec{x} \rangle \! \langle \vec{x}|-\mat Z) \|_{\ell_1}
\end{align*}
which implies the claim for case II in \eqref{eq:nsp_case2}.
\end{proof}

\begin{proposition} \label{prop:RECR_nsp}
  Under the assumptions of \cref{thm:phaselift_noisy}, the measurement operator
  \begin{equation}
    \label{eq:measurement_operator_rank1}
    \mathcal{A}(\mat Z) = \sum_k \tr \left(\ket{\vec a_k}\bra{\vec a_k} \mat Z\right) \vec e_k
  \end{equation}
  obeys both condition~\eqref{eq:nsp} and \eqref{eq:approx_povm} with probability at least $1- 3\mathrm{e}^{-\gamma m}$, provided that $C >1$ is sufficiently large.
\end{proposition}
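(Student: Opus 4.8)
The plan is to verify the two requirements separately and combine them with a union bound. Condition~\eqref{eq:approx_povm} is a concentration statement for the empirical average $\frac{1}{m}\sum_k \ket{\vec\alpha_k}\bra{\vec\alpha_k}$ around its mean, which I would control with the matrix Bernstein bound of \cref{thm:bernstein}. Condition~\eqref{eq:nsp} is an $\ell_1/\ell_2$ norm-equivalence (a ``null-space''-type property) restricted to the set $D$ defined in~\eqref{eq:D}, and I would establish it with Mendelson's small-ball method, \cref{thm:mendelson}. Since \cref{thm:bernstein} fails with probability at most $2\mathrm{e}^{-\gamma_1 m}$ and the small-ball estimate with at most $\mathrm{e}^{-\gamma_2 m}$, the union bound produces the claimed $1-3\mathrm{e}^{-\gamma m}$ with $\gamma = \min\{\gamma_1,\gamma_2\}$.

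For~\eqref{eq:approx_povm}, isotropy~\eqref{eq:tight_frame} forces $\mathbb{E}[\ket{\vec\alpha}\bra{\vec\alpha}] = C_I \mathbb{I}$, because $\bra{\vec z}\mathbb{E}[\ket{\vec\alpha}\bra{\vec\alpha}]\ket{\vec z} = \mathbb{E}[|\langle\vec z,\vec\alpha\rangle|^2] = C_I\|\vec z\|_{\ell_2}^2$ for every $\vec z$. Choosing the normalisation $\nu = C_I$ and applying \cref{thm:bernstein} with $a_k = 1$ gives $\| \frac{1}{\nu m}\sum_k\ket{\vec\alpha_k}\bra{\vec\alpha_k} - \mathbb{I}\|_\infty = \frac{1}{C_I}\|\tilde{\mat H}\|_\infty$, so the small-$t$ tail bound evaluated at $t = C_I/6$ drives this quantity below $\frac{1}{6}$ as soon as $\frac{m (C_I/6)^2}{8 C_{SG}} - 2\ln(3)\, n \geq \gamma_1 m$, which holds whenever the oversampling ratio $C = m/n$ is a sufficiently large constant.

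The harder condition is~\eqref{eq:nsp}. I would apply \cref{thm:mendelson} with $\mat A_k = \ket{\vec\alpha_k}\bra{\vec\alpha_k}$, so that $\tr(\mat A_k\mat Z) = \bra{\vec\alpha_k}\mat Z\ket{\vec\alpha_k}$, and lower-bound the right-hand side of~\eqref{eq:mendelson}. This requires two estimates, both uniform over $\mat Z\in D$ (where $\|\mat Z\|_2 = 1$). First, for the marginal tail function~\eqref{eq:marginal_tail_function} I would apply the Paley--Zygmund inequality to the nonnegative variable $\bra{\vec\alpha}\mat Z\ket{\vec\alpha}^2$; a uniform lower bound $Q_{2\xi}(D,\mat A)\geq c>0$ then follows from the second-moment lower bound $\mathbb{E}[\bra{\vec\alpha}\mat Z\ket{\vec\alpha}^2]\geq C_{SI}$, which is exactly sub-isotropy~\eqref{eq:sub_isotropy}, together with a fourth-moment upper bound $\mathbb{E}[\bra{\vec\alpha}\mat Z\ket{\vec\alpha}^4]\leq C_4$ derived as a hypercontractivity estimate from the subgaussian moments~\eqref{eq:subexponential}. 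Second, for the mean empirical width I would use that $D\subseteq 3\mathcal{B}_1$ together with the duality of the operator and trace norms to get $W_m(D,\mat A)\leq 6\,\mathbb{E}[\|\mat H\|_\infty]$ with $\mat H = \frac{1}{\sqrt m}\sum_k\eta_k\ket{\vec\alpha_k}\bra{\vec\alpha_k}$; integrating the matrix Bernstein tail of \cref{thm:bernstein} (now with $a_k=\eta_k$, whose conditional mean vanishes) bounds this by $\mathbb{E}[\|\mat H\|_\infty]\lesssim \sqrt{C_{SG}\, n}$. Inserting these two estimates into~\eqref{eq:mendelson} with the choice $t = a\sqrt m$ yields
\begin{equation*}
  \frac{1}{m}\inf_{\mat Z\in D}\|\mathcal{A}(\mat Z)\|_{\ell_1} \;\geq\; \xi c - \frac{C_5\sqrt n}{\sqrt m} - \xi a,
\end{equation*}
and since $\sqrt n/\sqrt m = 1/\sqrt C$, choosing $a$ small and $C$ large makes the right-hand side a strictly positive constant; this is precisely~\eqref{eq:nsp} with $\tau$ equal to its reciprocal, valid with probability at least $1 - \mathrm{e}^{-2a^2 m}$.

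I expect the fourth-moment (hypercontractivity) bound to be the main obstacle. The subgaussian assumption~\eqref{eq:subexponential} applies cleanly only to rank-one $\mat Z = \pm\ket{\vec z}\bra{\vec z}$, whereas here the estimate must hold uniformly over the whole set $D$. This is exactly where the geometry of $D$ in~\eqref{eq:D} is decisive: the constraint $D\subseteq 3\mathcal{B}_1$ forces $\|\mat Z\|_1 \leq 3 = 3\|\mat Z\|_2$, so every admissible $\mat Z$ has effective rank $\CO(1)$, which keeps the cross-terms in $\mathbb{E}[\bra{\vec\alpha}\mat Z\ket{\vec\alpha}^4]$ under control. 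A secondary difficulty is the careful balancing of the free parameters $\xi$, $a$ (equivalently $t$) and the oversampling constant $C$, so that Mendelson's lower bound remains a positive constant while the failure probability stays exponentially small in $m$.
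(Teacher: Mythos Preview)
Your proposal is correct and follows essentially the same route as the paper: Mendelson's small-ball method for~\eqref{eq:nsp} (Paley--Zygmund with sub-isotropy for the second moment and the subgaussian assumption plus $\|\mat Z\|_1\le 3$ for the fourth, duality $D\subseteq 3\mathcal{B}_1$ and integration of the Bernstein tail for the width), isotropy plus \cref{thm:bernstein} for~\eqref{eq:approx_povm}, and a union bound. The only cosmetic difference is that the paper handles the fourth moment via an eigenvalue expansion and AM--GM rather than invoking ``hypercontractivity'', but the substance---reducing to rank-one moments and exploiting the trace-norm cap from $D$---is exactly what you describe.
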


We postpone the proof of this statement to \cref{sec:proof_measurement_operator_is_good} and directly derive \cref{thm:phaselift_noisy} -- which constitutes the main theoretical achievement of this work -- from this statement.

\begin{proof}[Proof of \cref{thm:phaselift_noisy}]
\Cref{prop:RECR_nsp} implies that a measurement operator~\eqref{eq:measurement_operator_rank1} containing $m \geq C n$ measurements sampled from a distribution satisfying \eqref{eq:tight_frame}, \eqref{eq:sub_isotropy} and \eqref{eq:subexponential} meets the requirements of \cref{prop:nsp_implication} with probability at least $1-3 \mathrm{e}^{-\gamma m}$.
Conditioned on this event, we have
\begin{equation}
\| \mat Z - |\vec{x} \rangle \! \langle \vec{x}| \|_2 \leq \frac{C'}{2m}  \| \mathcal{A}(\mat Z - |\vec{x} \rangle \! \langle \vec{x}|) \|_{\ell_1} \quad \forall \mat Z \geq 0,\; \forall \vec{x} \in \mathbb{C}^n,
\label{eq:nsp_implication2}
\end{equation}
where $C' = 2 \max \left\{\tau, 6/\nu \right\}$.
Now, suppose that we want to reconstruct a particular $\vec{x}$ from noisy measurements of the form $\vec{y} = \mathcal{A} (|\vec{x} \rangle \! \langle \vec{x}|) + \vec{\epsilon}$. Then Eq.~\eqref{eq:nsp_implication2} implies
\begin{align*}
\| \mat Z - |\vec{x} \rangle \! \langle \vec{x}| \|_2 \leq \frac{C'}{2m} \| \mathcal{A}(\mat Z) - \vec{y} + \vec{\epsilon} \|_{\ell_1}
\leq \frac{C'}{2m} \left( \| \vec{\epsilon} \|_{\ell_1} + \| \mathcal{A}(\mat Z) - \vec{y} \|_{\ell_1} \right)\quad \forall \mat Z \geq 0.
\end{align*}
PhaseLift -- the convex optimisation problem \eqref{eq:PhaseLift} -- minimizes the right hand side of this bound over all $\mat Z \geq 0$. Since $\mat Z = |\vec{x} \rangle \! \langle \vec{x}|$ is a feasible point of this optimisation, we can conclude that the minimizer $\mat Z^\sharp$ obeys
\begin{equation*}
\| \mathcal{A}(\mat Z^\sharp) - \vec{y} \|_{\ell_1} \leq \| \mathcal{A}(|\vec{x} \rangle \! \langle \vec{x}|)-\vec{y} \|_{\ell_1} = \| \vec{\epsilon} \|_{\ell_1}
\end{equation*}
which yields the bound presented in \eqref{eq:noisy_recovery_bound}.
\end{proof}

\section{Proof of \cref{prop:RECR_nsp}}
\label{sec:proof_measurement_operator_is_good}

\begin{lemma}[Bound on the marginal tail function]
  Let $D$ be the set introduced in \eqref{eq:D} and let $\mat A =|\vec a \rangle \! \langle \vec a| $, where $\vec a$ satisfies \eqref{eq:sub_isotropy} and \eqref{eq:subexponential}.
  Then, the marginal tail function \eqref{eq:marginal_tail_function} obeys
  \begin{equation*}
    Q_\xi (D, \mat A) \geq  C_Q \left( 1-  \frac{\xi^2}{C_{SI}}\right)^2  \quad \forall 0 \leq \xi \leq \sqrt{C_{SI}},
  \end{equation*}
  where $C_Q>0$ is a sufficiently small constant.
\end{lemma}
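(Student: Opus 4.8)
The plan is to apply the Paley--Zygmund inequality to the nonnegative random variable $W = \tr(\mat A \mat Z)^2$. Since $\mat Z \in \Hermitian_n$ and $\mat A = |\vec a \rangle \! \langle \vec a|$, the quantity $\tr(\mat A \mat Z) = \langle \vec a | \mat Z | \vec a \rangle$ is real, so that $\mathrm{Pr}[|\tr(\mat A \mat Z)| \geq \xi] = \mathrm{Pr}[W \geq \xi^2]$. Paley--Zygmund asserts that for any $\theta \in [0,1]$,
\[
\mathrm{Pr}[W \geq \theta\, \mathbb{E}[W]] \geq (1-\theta)^2 \frac{(\mathbb{E}[W])^2}{\mathbb{E}[W^2]}.
\]
Choosing $\theta = \xi^2/\mathbb{E}[W]$ turns the left-hand side into exactly $\mathrm{Pr}[W \geq \xi^2]$. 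Thus the task reduces to two moment estimates that must hold uniformly over $\mat Z \in D$: a lower bound on the second moment $\mathbb{E}[W]$ and an upper bound on the fourth moment $\mathbb{E}[W^2]$.

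The second moment is immediate from sub-isotropy \eqref{eq:sub_isotropy}: since every $\mat Z \in D$ satisfies $\| \mat Z \|_2 = 1$, we get $\mathbb{E}[W] = \mathbb{E}[\langle \vec a | \mat Z | \vec a \rangle^2] \geq C_{SI}$. On the stated range $0 \leq \xi \leq \sqrt{C_{SI}}$ this also guarantees $\theta = \xi^2/\mathbb{E}[W] \leq \xi^2/C_{SI} \leq 1$, so the Paley--Zygmund hypothesis is met, and moreover $(1-\theta)^2 \geq (1 - \xi^2/C_{SI})^2$. This already produces the desired functional form $(1 - \xi^2/C_{SI})^2$; the prefactor will come from the ratio of moments.

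The main work is the fourth-moment bound, and this is precisely where the constraint $\mat Z \in 3\mathcal{B}_1$ built into $D$ in \eqref{eq:D} becomes essential. I would diagonalise $\mat Z = \sum_i \lambda_i |\vec z_i \rangle \! \langle \vec z_i|$, so that $\langle \vec a | \mat Z | \vec a \rangle = \sum_i \lambda_i |\langle \vec a, \vec z_i \rangle|^2$, and then treat the weights $|\lambda_i|$ (whose sum is $\| \mat Z \|_1 \leq 3$) via Jensen's inequality applied to the convex map $t \mapsto t^4$:
\[
\langle \vec a | \mat Z | \vec a \rangle^4 \leq \Big( \sum_i |\lambda_i| \Big)^3 \sum_i |\lambda_i| \, |\langle \vec a, \vec z_i \rangle|^8 \leq 27 \sum_i |\lambda_i| \, |\langle \vec a, \vec z_i \rangle|^8.
\]
Taking expectations and invoking the sub-Gaussian moment bound \eqref{eq:subexponential} with $N = 4$ (each $\vec z_i$ is normalised) gives $\mathbb{E}[|\langle \vec a, \vec z_i \rangle|^8] \leq 4!\, C_{SG}$, whence $\mathbb{E}[W^2] \leq 27 \cdot 24\, C_{SG}\, \| \mat Z \|_1 \leq C_4\, C_{SG}$ for an absolute constant $C_4$.

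Combining the two estimates yields $(\mathbb{E}[W])^2/\mathbb{E}[W^2] \geq C_{SI}^2/(C_4 C_{SG})$ uniformly over $\mat Z \in D$; setting $C_Q := C_{SI}^2/(C_4 C_{SG})$ and taking the infimum over $\mat Z$ then delivers $Q_\xi(D, \mat A) \geq C_Q (1 - \xi^2/C_{SI})^2$, as claimed. The only delicate point I anticipate is making the Jensen step rigorous when some $\lambda_i$ vanish or when $\| \mat Z \|_1 < 3$; this is handled by normalising $|\lambda_i|/\| \mat Z \|_1$ into a genuine probability distribution before applying convexity, and noting the bound only improves when $\| \mat Z \|_1 < 3$. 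It is also worth remarking that the argument never uses the isotropy condition \eqref{eq:tight_frame} directly — only sub-isotropy and the sub-Gaussian tails enter — so the lemma holds under the weaker hypotheses stated.
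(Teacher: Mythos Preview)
Your proof is correct and follows essentially the same architecture as the paper's: Paley--Zygmund applied to $\langle \vec a|\mat Z|\vec a\rangle^2$, sub-isotropy \eqref{eq:sub_isotropy} for the second-moment lower bound, and an eigendecomposition of $\mat Z$ combined with the trace-norm constraint $\|\mat Z\|_1\leq 3$ and the sub-Gaussian moment bound \eqref{eq:subexponential} at $N=4$ for the fourth-moment upper bound. The only cosmetic difference is in the last step: the paper expands $\langle \vec a|\mat Z|\vec a\rangle^4$ fully and controls the mixed products $\prod_{k=1}^4|\langle \vec a,\vec z^{(i_k)}\rangle|^2$ via AM--GM plus the $\ell_1$--$\ell_4$ relation, whereas you first pass to $\sum_i|\lambda_i|\,|\langle \vec a,\vec z_i\rangle|^2$ by the triangle inequality and then apply Jensen to $t\mapsto t^4$; both routes land on the same constant $C_Q = C_{SI}^2/(1944\,C_{SG})$.
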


\begin{proof}
Fix $\mat Z \in D$, then $\| \mat Z \|_2 =1$ by definition of $D$.
Note that sub-isotropy \eqref{eq:sub_isotropy} and the Paley-Zygmund inequality imply for any $\xi \in [0,1]$
\begin{align*}
  \mathrm{Pr} \left[ | \langle \vec a| \mat Z |\vec a \rangle| \geq \xi \right]
  \geq & \mathrm{Pr} \left[ \langle \vec a| \mat Z |\vec a \rangle^2 \geq \frac{\xi^2}{C_{SI}} \mathbb{E} \left[ \langle \vec a|\mat Z|\vec a \rangle^2 \right] \right]
  \geq \left(1-\frac{\xi^2}{C_{SI}}\right)^2 \frac{\mathbb{E} \left[ \langle \vec a |\mat Z |\vec a \rangle^2 \right]^2}{\mathbb{E} \left[ \langle \vec a| \mat Z |\vec a \rangle^4 \right]}.
\end{align*}
Sub-isotropy ensures that the numerator is lower bounded by $C_{SI}^2 \| \mat Z \|_2^4 = C_{SI}^2$.
In order to derive an upper bound on the denominator, we use the constraint $\| \mat Z \|_1 \leq 3$ for any $\mat Z \in D$ together with the subgaussian tail behavior \eqref{eq:subexponential} of $\vec a$.
Insert an eigenvalue decomposition $\mat Z = \sum_{i=1}^n \lambda_i |\vec z^{(i)} \rangle \! \langle \vec z^{(i)}|$ (with $\lambda_i \in \mathbb{R}$ and $\vec z^{(i)} \in \Sphere^{n-1}$) and note
\begin{align}
  \mathbb{E} \left[ \langle \vec a| \mat Z |\vec a \rangle^4 \right]
  \leq & \sum_{i_1,i_2,i_3,i_4=1}^n | \lambda_{i_1} \lambda_{i_2} \lambda_{i_3} \lambda_{i_4} | \mathbb{E} \left[ \prod_{k=1}^4 | \langle \vec a, \vec z^{(i_k)} \rangle|^2 \right]. \label{eq:Q_aux1}
\end{align}
Now fix $\vec z^{(i_1)},\ldots,\vec z^{(i_4)}$ and use a combination of the AM-GM inequality and the fundamental relation between $\ell_p$-norms ($\| \vec v \|_{\ell_1} \leq k^{1-\frac{1}{k}} \| \vec v \|_{\ell_k}$ for $v \in \mathbb{R}^k$) to conclude
\begin{align*}
  \mathbb{E} \left[ \prod_{k=1}^4 | \langle \vec a,\vec z^{(i_k)}\rangle |^2 \right]
  \leq \frac{1}{4} \sum_{k=1}^4 \mathbb{E} \left[ | \langle \vec a, \vec z^{(i_k)} \rangle|^8 \right]
  \leq C_{SG} 4!,
\end{align*}
where the last inequality follows from condition \eqref{eq:subexponential}.
Consequently,
\begin{align*}
  \mathbb{E} \left[ \langle \vec a| \mat Z | \vec a\rangle^4 \right]
  \leq C_{SG} 4! \sum_{i_1,i_2,i_3,i_4} | \lambda_{i_1} \lambda_{i_2} \lambda_{i_3} \lambda_{i_4} |
  = 24 C_{SG} \| \mat Z \|_1^4 \leq 24*3^4 C_{SG},
\end{align*}
because $\mat Z \in D$ implies $\| \mat Z \|_1 \leq 3$.
In summary,
\begin{align*}
  \mathrm{Pr} \left[ | \langle \vec a| \mat Z |\vec a \rangle| \geq \xi \right]
  \geq \left(1-\frac{\xi^2}{C_{SI}}\right)^2 \frac{\mathbb{E} \left[ \langle \vec a| \mat Z |\vec a \rangle^2 \right]^2}{\mathbb{E} \left[ \langle \vec a| \mat Z |\vec a \rangle^4 \right]}
  \geq \left(1-\frac{\xi^2}{C_{SI}}\right)^2 \frac{C_{SI}^2}{1944C_{SG}}
\end{align*}
and the bound on $Q_\xi (D,\mat A)$ with $C_Q = \frac{C_{SI}^2}{1944 C_{SG}}$ follows from the fact that this lower bound holds for any $\mat Z \in D$.
\end{proof}

\begin{lemma}[Bound on the mean empirical width]
Let $D$ be the set introduced in \eqref{eq:D} and let $\mat H = \frac{1}{\sqrt{m}} \sum_{k=1}^m \eta_k | \vec\alpha_k \rangle \! \langle \vec\alpha_k|$, where each $\vec\alpha_k$ is subexponential in the sense of \eqref{eq:subexponential} and $m \geq \frac{2 \ln (3)}{C_{SG}} n$.
Then there exists a constant $C_W >0$ such that
\begin{equation*}
W_m (D,\mat A) \leq C_W \sqrt{n},
\end{equation*}
\end{lemma}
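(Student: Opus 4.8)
The plan is to reduce the mean empirical width to the operator norm of $\mat H$ and then control that norm with the matrix Bernstein estimate of \cref{thm:bernstein}. First I would exploit the geometry of $D$. By construction $D \subset 3\mathcal{B}_1$, so every $\mat Z \in D$ obeys $\norm{\mat Z}_1 \leq 3$, and matrix Hölder duality between the trace norm and the operator norm gives
\[
\sup_{\mat Z \in D}\tr(\mat Z \mat H) \leq \sup_{\mat Z \in D}\norm{\mat Z}_1\,\norm{\mat H}_\infty \leq 3\,\norm{\mat H}_\infty .
\]
Taking expectations yields $W_m(D,\mat A) \leq 6\,\mathbb{E}\!\left[\norm{\mat H}_\infty\right]$. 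Note that the Frobenius-sphere constraint $\norm{\mat Z}_2=1$ is irrelevant here; only the trace-norm bound inherited from $3\mathcal{B}_1$ is used.

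Next I would match $\mat H$ to a centred sum covered by \cref{thm:bernstein}. Since the Rademacher signs $\eta_k$ are mean-zero and independent of the $\vec\alpha_k$, each summand $\eta_k\ket{\vec\alpha_k}\bra{\vec\alpha_k}$ has vanishing expectation. Setting $a_k = \eta_k$ (so that $\abs{a_k}\leq 1$), the centred matrix $\tilde{\mat H}$ of \cref{thm:bernstein} reduces to $\tilde{\mat H} = \frac{1}{m}\sum_{k=1}^m \eta_k\ket{\vec\alpha_k}\bra{\vec\alpha_k}$, whence $\mat H = \sqrt{m}\,\tilde{\mat H}$ and $\mathbb{E}\!\left[\norm{\mat H}_\infty\right] = \sqrt{m}\,\mathbb{E}\!\left[\norm{\tilde{\mat H}}_\infty\right]$. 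It therefore suffices to show $\mathbb{E}\!\left[\norm{\tilde{\mat H}}_\infty\right] \leq c\sqrt{n/m}$ for a constant $c$ depending only on $C_{SG}$.

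The substantive step is to integrate the tail bound of \cref{thm:bernstein} through $\mathbb{E}\!\left[\norm{\tilde{\mat H}}_\infty\right] = \int_0^\infty \mathrm{Pr}\!\left[\norm{\tilde{\mat H}}_\infty \geq t\right]\,\dd t$. I would split the integral at the threshold $t_0 = \sqrt{16\ln(3)\,C_{SG}\,n/m}$ where the subgaussian exponent $2\ln(3)\,n - mt^2/(8C_{SG})$ vanishes: below $t_0$ I use the trivial bound $\mathrm{Pr}\leq 1$, contributing $t_0$; above $t_0$ I use the exponential decay. A change of variables $t=t_0+s$ shows the Gaussian branch contributes $\int_{t_0}^\infty 2\,\ee^{2\ln(3)n - mt^2/(8C_{SG})}\,\dd t \leq 8C_{SG}/(m t_0)$, which is of lower order $1/\sqrt{nm}$, and the second (linear) branch $t \geq 2C_{SG}$ leaves only an exponentially small remainder. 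The hypothesis $m \geq \tfrac{2\ln 3}{C_{SG}}\,n$ guarantees $t_0 = \CO(C_{SG})$, so the bound is governed by the Gaussian branch and the dominant contribution is $t_0$ itself, giving $\mathbb{E}\!\left[\norm{\tilde{\mat H}}_\infty\right] \leq c\sqrt{n/m}$.

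Collecting the three steps yields $\mathbb{E}\!\left[\norm{\mat H}_\infty\right] = \sqrt{m}\,\mathbb{E}\!\left[\norm{\tilde{\mat H}}_\infty\right] \leq c\sqrt{n}$ and hence $W_m(D,\mat A) \leq 6c\sqrt{n} =: C_W\sqrt{n}$. I expect the main obstacle to be the book-keeping in the tail integration: choosing the split point $t_0$ and using the assumed lower bound on $m$ to verify that the dimension-dependent prefactor $\ee^{2\ln(3)n}$ is absorbed before the Gaussian decay sets in, so that the clean $\sqrt{n/m}$ scaling (rather than a worse power) emerges after multiplying by $\sqrt{m}$.
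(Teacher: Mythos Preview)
Your proposal is correct and follows essentially the same route as the paper: both reduce $W_m(D,\mat A)$ to $6\,\mathbb{E}[\|\mat H\|_\infty]$ via $D\subset 3\mathcal{B}_1$ and trace/operator-norm duality, identify $\mat H=\sqrt{m}\,\tilde{\mat H}$ with the centred sum of \cref{thm:bernstein} using $a_k=\eta_k$, and integrate the resulting tail bound by splitting at the point where the prefactor $9^n$ is absorbed by the Gaussian exponent. The only cosmetic differences are that the paper works directly with $\mat H$ (splitting $[0,c\sqrt{n}]\cup[c\sqrt{n},2C_{SG}\sqrt{m}]\cup[2C_{SG}\sqrt{m},\infty)$ with $c=4\sqrt{\ln(3)C_{SG}}$, which is your $t_0$ rescaled by $\sqrt{m}$) and bounds the Gaussian tail via the trick $\int_{c\sqrt{n}}^\infty e^{-t^2/(8C_{SG})}\,\dd t\leq \int_{c\sqrt{n}}^\infty \tfrac{t}{c\sqrt{n}}e^{-t^2/(8C_{SG})}\,\dd t$ rather than your substitution; the final constant comes out as $C_W=72\sqrt{\ln(3)C_{SG}}$.
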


\begin{proof}

Note that by construction $D \subset 3 \mathcal{B}_1$ and consequently
\begin{align}
  W_m (D, \mat A) = 2 \mathbb{E} \left[ \sup_{\mat Z \in D} \tr (\mat Z \mat H) \right] \leq 6 \mathbb{E} \left[ \sup_{\mat Z \in \mathcal{B}_1} \tr (\mat Z \mat H) \right] = 6 \mathbb{E} \left[ \| \mat H \|_\infty  \right], \label{eq:Wm_hoelder}
\end{align}
where the last equality follows from the duality of trace and operator norm. Now note that $\tilde{\mat H} = \sqrt{m} \mat H$ is of the form \eqref{eq:Htilde}, where each $a_k$ is an independent Rademacher random variable.
\cref{thm:bernstein} thus implies
\begin{align}
  \mathrm{Pr} \left[\| \mat H \|_\infty \geq t \right]
  \leq
  \begin{cases}
   2 \times 9^n \exp \left( - \frac{t^2}{8 C_{SG}} \right) & t \leq 2C_{SG} \sqrt{m}, \\
  2 \times 9^n \exp \left( - \frac{\sqrt{m}}{2} \left( t - C_{SG} \sqrt{m} \right) \right) & t \geq 2 C_{SG} \sqrt{m}
  \end{cases}
  \label{eq:Wm_tails}
\end{align}
and we can bound $\mathbb{E} \left[ \| \mat H \|_\infty \right]$ by using the absolute moment formula,
see e.g.\ \cite[Propostion~7.1]{Foucart_2013_Mathematical}, and bounding the effect of the tails via \eqref{eq:Wm_tails}.
To this end, we split the real line into three intervals $[0, c \sqrt{n}], [c\sqrt{n}, 2 C_{SG} \sqrt{m}], [2 C_{SG} \sqrt{m},\infty[$, where $c$ is a constant that we fix later:
\begin{align*}
  \mathbb{E} \left[ \|\mat H\|_\infty \right] =& \int_0^\infty \mathrm{Pr} \left[ \|\mat H\|_\infty \geq t \right] \mathrm{d}t \\
  \leq & \int_0^{c \sqrt{n}} 1 \mathrm{d}t + 2 \times 9^n \left( \int_{c \sqrt{n}}^{2 C_{SG} \sqrt{m}} 2 \exp \left( - \frac{t^2}{8 C_{SG}} \right) \mathrm{d}t
  +  \mathrm{e}^{\frac{m C_{SG}}{2}} \int_{2 C_{SG} \sqrt{m}}^\infty \exp\left( - \frac{\sqrt{m}t}{2}  \right) \mathrm{d} t \right)\\
  \leq & c \sqrt{n} + 2 \times 9^n \left( \int_{c \sqrt{n}}^{2 C_{SG} \sqrt{m}}  \exp \left( - \frac{t^2}{8 C_{SG}} \right) \mathrm{d}t
  + \frac{2}{\sqrt{m}} \mathrm{e}^{-\frac{C_{SG} m}{2}}\right).
\end{align*}
For the remaining Gauss integral, we use $\frac{t}{c \sqrt{n}} \geq 1\; \forall t \geq c\sqrt{n}$ to conclude
\begin{align*}
  \int_{c \sqrt{n}}^{2 C_{SG} \sqrt{m}}  \exp \left( - \frac{t^2}{8 C_{SG}} \right) \mathrm{d}t
  \leq  \int_{c \sqrt{n}}^\infty \frac{t}{c \sqrt{n}}  \exp \left( - \frac{t^2}{8 C_{SG}} \right) \mathrm{d} t
  = \frac{8 C_{SG}}{c \sqrt{n}} \exp \left( - \frac{c^2 n}{8 C_{SG}} \right).
\end{align*}
Now, fixing $c = 4 \sqrt{\ln (3)C_{SG}}$ assures $\exp \left( -\frac{c^2 n}{8 C_{SG}}\right) = 9^{-n}$ and consequently
\begin{align*}
  \mathbb{E} \left[ \| \mat H \|_\infty \right]
  \leq & 4  \sqrt{ \ln (3) C_{SG} n} + \frac{4 \sqrt{C_{SG}}}{\sqrt{ \ln (3) n}} + \frac{4}{\sqrt{m}} \mathrm{e}^{2 \ln (3) n - C_{SG} m} \\
  \leq & 4\sqrt{C_{SG}} \left( \sqrt{ \ln (3) n} + \frac{2}{\sqrt{ \ln (3) n}} \right) \leq 12 \sqrt{ \ln (3) C_{SG} n}.
\end{align*}
where the second inequality follows from $m \geq \frac{2 \ln (3)}{C_{SG}} n$. Inserting this bound into \eqref{eq:Wm_hoelder} yields the claim with $C_W = 72 \sqrt{ \ln (3) C_{SG}}$.
\end{proof}

Now we are ready to apply Mendelson's small ball method \eqref{eq:mendelson}.
For $D$ defined in \eqref{eq:D} and measurements $\mat A_k = |\vec\alpha_k \rangle \! \langle \vec\alpha_k|$ with $\vec\alpha_k$ obeying \eqref{eq:sub_isotropy} and \eqref{eq:subexponential} the bounds from the previous Lemmas imply
\begin{align*}
  \frac{1}{\sqrt{m}}\inf_{\mat Z \in D} \|\mathcal{A}(\mat Z) \|_{\ell_1} \geq \xi \sqrt{m} C_Q \left( 1- \frac{4 \xi^2}{C_{SI}} \right)^2 - 2 C_W \sqrt{n} - \xi t \quad \forall \xi \in (0, 1/\sqrt{C_{SI}}), \forall t \geq 0
\end{align*}
with probability at least $1- \mathrm{e}^{-2t^2}$. We choose $\xi = \sqrt{C_{SI}}/4$ and $t = \gamma_1 \sqrt{m}$, where $\gamma_1 = \frac{9 C_Q}{32}$ and obtain with probability at least $1-\exp \left( -2 \gamma_1 m \right)$:
\begin{align*}
  \frac{1}{\sqrt{m}}\inf_{\mat Z \in D} \|\mathcal{A}(\mat Z) \|_{\ell_1} \geq & \frac{9 C_Q\sqrt{C_{SI}}}{64} \sqrt{m} -  C_W\sqrt{n} - \frac{\sqrt{C_{SI}}}{4} \frac{9 C_Q}{32} \sqrt{m} \\
  = & C_W \left( \frac{9 C_Q \sqrt{C_{SI}}}{128 C_W} \sqrt{m} - \sqrt{n} \right).
\end{align*}
Setting $m = C n$ with $C = \left( \frac{256 C_W}{9 C_Q \sqrt{C_l}} \right)^2$ implies
\begin{equation*}
  \frac{1}{\sqrt{m}} \inf_{\mat Z \in D} \| \mathcal{A}(\mat Z) \|_{\ell_1} \geq 2 C_W \sqrt{n} = \frac{2 C_W}{\sqrt{C}} \sqrt{m}
\end{equation*}
with probability at least $1- \mathrm{e}^{-2 \gamma_1 m}$.
For $\tau = \frac{ 2 C_W}{\sqrt{C}}$, the first claim in \cref{prop:RECR_nsp} follows from rearranging this expression and using $\| \mat Z \|_2=1$ for all $\mat Z \in D$.

Let us now move on to establishing the second statement \eqref{eq:approx_povm}:
Isotropy \eqref{eq:tight_frame} implies
\begin{align*}
  \frac{1}{ C_I m} \sum_{k=1}^m |\vec\alpha_k \rangle \! \langle \vec\alpha_k | - \mathbb{I}
  = \frac{1}{C_{SG} m} \sum_{k=1}^m \left( |\vec\alpha_k \rangle \! \langle \vec\alpha_k| - \mathbb{E} \left[ |\vec\alpha_k \rangle \! \langle \vec\alpha_k| \right] \right)
\end{align*}
and each $\alpha_k$ has subgaussian tails by assumption \eqref{eq:subexponential}.
Thus, \cref{thm:bernstein} is applicable and setting $t= \min \left\{\frac{1}{6},2 C_{SG} \right\}$ yields
\begin{align*}
  \mathrm{Pr} \left[ \left\| \frac{1}{C_I m} \sum_{k=1}^m |\vec\alpha_k \rangle \! \langle \vec\alpha_k| -  \mathbb{I} \right\|_\infty \geq \frac{1}{6} \right]
  \leq 2 \exp \left( 2 \ln (3) n - \frac{C_I m \min\left\{ 1/6, 2 C_{SG} \right\}}{8 C_{SG}} \right) \leq 2 \exp \left( - \gamma_2 m \right),
\end{align*}
where the second inequality follows from $m \geq C n$, provided that $C$ is sufficiently large. Finally, we use the union bound  for the overall probability of failure and set $\gamma := \min \left\{ 2 \gamma_1,\gamma_2 \right\}$.

\section{Proof of \cref{prop:gauss+recr_requirements}}
\label{sec:gauss+recr_requirements}

We now proof the crucial properties \eqref{eq:tight_frame}--\eqref{eq:subexponential} for the different measurement ensembles from \cref{prop:gauss+recr_requirements}.

\subsection{The Gaussian sampling scheme}

Let $\vec\alpha \in \mathbb{C}^n$ be a standard (complex) Gaussian vector and fix any $\vec z \in \mathbb{C}^n$.
Then, the random variable $\langle \vec\alpha,\vec z \rangle$ is an instance of a standard (complex normal) random variable $a = \tfrac{\| \vec z \|_{\ell_2}}{\sqrt{2}} \left(a_R + i a_I\right)$ with $a_R, a_I \sim \mathcal{N}(0,1)$.
In turn, $|a|^2 = \frac{\| \vec z \|_{\ell_2}^2}{2} (a_R^2 + a_I^2)$ is a re-scaled version of a $\chi^2$-distributed random variable with two degrees of freeom. The moments of such a random variable are well-known and we obtain
\begin{equation}
  \mathbb{E} (| \langle \vec\alpha,\vec z \rangle|^{2N})= \left( \frac{ \| \vec z \|_{\ell_2}}{\sqrt{2}}\right)^N \times 2^N N! = \| \vec z \|_{\ell_2}^N N! \; .\label{eq:moments_gauss}
\end{equation}
From this, we can readily infer $C_{SG} = 1$, and the special case $N=1$  yields $C_I=1$.

For the remaining expression, use an eigenvalue decomposition $\mat Z = \sum_{k=1}^d \zeta_k |\vec z^{(k)} \rangle \langle \vec z^{(k)}|$ (with normalized eigenvectors $\vec z^{(k)}\in \mathbb{C}^n$) and note that the random variables $|\langle \vec a,\vec z^{(1)} \rangle|,\ldots, | \langle \vec a,\vec z^{(n)} \rangle|$ are independently distributed and obey \cref{eq:moments_gauss}.
Consequently:
\begin{align*}
  \mathbb{E} \left[ \tr \left( \mat A \mat Z \right)^2 \right]
  =& \mathbb{E} \left[ \left( \sum_{k=1}^d \zeta_k | \langle \vec\alpha,\vec z^{(k)} \rangle|^2 \right)^2 \right] \\
  =& \sum_{k \neq l} \zeta_k \zeta_l \mathbb{E} \left[ |\langle \vec\alpha,\vec z^{(k)} \rangle|^2 \right] \mathbb{E} \left[ | \langle \vec a,\vec z^{(l)} \rangle|^2 \right]
  + \sum_{k=1}^d \zeta_k^2 \mathbb{E} \left[ | \langle \vec a, \vec z^{(k)} \rangle|^4 \right] \\
  =& \sum_{k \neq l} \zeta_k \zeta_l \|\vec z^{(k)} \|_{\ell_2}^2 \| \vec z^{(l)} \|_{\ell_2}^2 + 2 \sum_{k=1}^d \zeta_k^2 \| \vec z^{(k)} \|_{\ell_2}^4
  = \sum_{k,l=1}^d \zeta_k \zeta_l + 2\sum_{k=1}^d \zeta_k^2 \\
  =& \tr (\mat Z)^2 + \tr (\mat Z^2)
  \geq \| \mat Z \|_2^2,
\end{align*}
which implies $C_{SI} = 1$.

\subsection{The uniform sampling scheme}

Here, $\vec\alpha$ is chosen uniformly from the complex sphere with radius $\sqrt{n}$.
This in turn implies that the distribution of $\vec\alpha \in \mathbb{C}^n$ is invariant under arbitrary unitary transformations.
Techniques from representation theory -- more precisely: Schur's Lemma -- then imply
\begin{equation}
  \label{eq:from_schur}
  \mathbb{E} \left[ (|\vec\alpha \rangle \! \langle \vec\alpha| )^{\otimes N} \right] =
  n^N \binom{n+N-1}{N}^{-1} \mat P_{\vee^N},
\end{equation}
see e.g.\ \cite[Lemma~1]{scott_tight_2006}.
Here, $\mat P_{\vee^N}$, denotes the projector onto the totally symmetric subspace $\bigvee\!^N \subset \left( \mathbb{C}^n \right)^{\otimes N}$.
Note that $\left(|\vec z \rangle \! \langle \vec z| \right)^{\otimes N} \in \bigvee\!^N$ and, moreover $2 \mathrm{tr} \left( \mat P_{\vee^2} \mat Z^2 \right)= \| \mat Z \|_2^2 + \mathrm{tr} (\mat Z)^2$ for any matrix $\mat Z$, see e.g.\ \cite[Lemma~17]{kueng_low_2016}.
Consequently,
\begin{align*}
  \mathbb{E} \left[ | \langle\vec\alpha,\vec z \rangle|^2 \right]
  =& \mathrm{tr} \left( |\vec z \rangle \! \langle \vec z| \, \mathbb{E} \left[ |\vec\alpha \rangle \! \langle\vec\alpha| \right] \right)
  = \mathrm{tr} \left( |\vec z \rangle \! \langle \vec z| \mathbb{I} \right) = \| \vec z \|_{\ell_2}^2, \\
  \mathbb{E} \left[
  \langle\vec\alpha| Z |\vec\alpha \rangle^2 \right]
  =& \tr \left( \mathbb{E} \left[ (|\vec\alpha \rangle \! \langle\vec\alpha|)^{\otimes 2} \right] \mat Z^{\otimes 2} \right)
  = \frac{n}{n+1} \left( \| \mat Z \|_2^2 + \mathrm{tr}(\mat Z)^2 \right) \geq \frac{n}{n+1} \| \mat Z \|_2^2, \\
  \mathbb{E} \left[ | \langle\vec\alpha, \vec z \rangle |^{2N} \right]
  =& \mathrm{tr} \left(\mathbb{E} \left[ (|\vec\alpha \rangle \! \langle\vec\alpha|)^{\otimes N} \right]  (|\vec z\rangle \! \langle \vec z|)^{\otimes N}  \right)
  = n^N \binom{n+N-1}{N}^{-1} \| \vec z \|_{\ell_2}^{2N} \\
  =& N! \frac{n^N (n-1)!}{(n+N-1)!} \leq N!,
\end{align*}
which implies $C_I=1$, $C_{SI} = \frac{n}{n+1}$ and $C_{SG}=1$.

\subsection{The RECR sampling scheme}

\begin{lemma}[The RECR ensemble is isotropic on $\mathbb{C}^n$]
Suppose that $\vec\alpha$ is chosen from a RECR ensemble with erasure probability $1-p$. Then
\begin{align*}
  \mathbb{E} \left[ | \langle \vec \alpha,\vec z \rangle|^2 \right] = p \| \vec z \|_{\ell_2}^2
  \quad \forall \vec z \in \mathbb{C}^n.
\end{align*}
\end{lemma}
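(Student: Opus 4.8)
The plan is to reduce the claim to two elementary moment computations for a single RECR coordinate and then exploit independence across coordinates. First I would record that the components $\alpha_1,\ldots,\alpha_n$ of $\vec\alpha$ are independent and identically distributed according to \cref{eq:definition_recr}. Only two moments of a single coordinate $\alpha_i$ are needed: its mean $\mathbb{E}[\alpha_i]$ and its second absolute moment $\mathbb{E}[\abs{\alpha_i}^2]$. The four nonzero values $+1,+\ii,-1,-\ii$ each occur with probability $p/4$, so they are arranged symmetrically about the origin and cancel in conjugate/negated pairs, giving $\mathbb{E}[\alpha_i]=0$. For the second absolute moment, $\abs{\alpha_i}^2=1$ whenever $\alpha_i$ is nonzero (which happens with total probability $4\cdot p/4=p$) and $\abs{\alpha_i}^2=0$ otherwise, so $\mathbb{E}[\abs{\alpha_i}^2]=p$.

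Next I would expand the squared overlap into a double sum,
\begin{equation*}
  \abs{\langle \vec\alpha, \vec z \rangle}^2 = \sum_{k,l=1}^n \overline{\alpha_k}\, \alpha_l\, z_k \overline{z_l},
\end{equation*}
and take the expectation term by term. Independence of distinct coordinates factorizes each off-diagonal term as $\mathbb{E}[\overline{\alpha_k}]\,\mathbb{E}[\alpha_l]$, which vanishes because the first moment is zero; only the diagonal terms $k=l$ survive, each contributing $\mathbb{E}[\abs{\alpha_k}^2]\,\abs{z_k}^2 = p\,\abs{z_k}^2$. Summing over $k$ then yields $\mathbb{E}[\abs{\langle\vec\alpha,\vec z\rangle}^2] = p\sum_k \abs{z_k}^2 = p\,\| \vec z \|_{\ell_2}^2$, which is the asserted identity (and, incidentally, identifies the isotropy constant as $C_I=p$).

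Since the computation is entirely elementary, there is no genuine obstacle here; the only points requiring a little care are the symmetry argument establishing $\mathbb{E}[\alpha_i]=0$ and the correct use of independence to discard the cross terms. Note that the placement of the complex conjugation is immaterial: $\abs{\alpha_i}^2$ is real, and the first moment vanishes for both $\alpha_i$ and $\overline{\alpha_i}$, so the result is insensitive to the inner-product convention.
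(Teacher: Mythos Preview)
Your proof is correct and follows essentially the same route as the paper's: expand $|\langle\vec\alpha,\vec z\rangle|^2$ as a double sum, use $\mathbb{E}[\alpha_k]=\mathbb{E}[\overline{\alpha_k}]=0$ together with independence to kill the off-diagonal terms, and use $\mathbb{E}[|\alpha_k|^2]=p$ on the diagonal. The only difference is cosmetic---you spell out the moment computations in slightly more detail than the paper does.
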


\begin{proof}
Let $\alpha_k = \langle \vec e_k, \vec\alpha\rangle$, where $\vec e_1,\ldots,\vec e_n$ is the orthonormal basis with respect to which the RECR vector is defined.
Theses components obey $\mathbb{E}\left[ \alpha_k \right] = \mathbb{E} \left[ \cc{\alpha}_k \right] = 0$, as well as $\mathbb{E} \left[ |\alpha_k|^2 \right] = p$.
For any $\vec z \in \mathbb{C}^n$ we then have
\begin{align*}
  \mathbb{E} \left[| \langle \vec \alpha, \vec z\rangle |^2 \right]
  =& \sum_{i,j=1}^n \mathbb{E} \left[ \cc{\alpha}_i \alpha_j \right] \langle \vec e_i | \vec z \rangle \langle \vec z | \vec e_j \rangle = p \sum_{i=1}^n | \langle \vec e_i, \vec z \rangle|^2 = p \| \vec z \|_{\ell_2}^2.
\end{align*}
\end{proof}

\begin{lemma}[The RECR ensemble is sub-isotropic on $\Hermitian_n$]
  \label{lem:recr_subisotropic}
  Suppose that $\vec\alpha$ is chosen from a RECR ensemble with erasure probability $1-p$. Then
  \begin{equation*}
  \mathbb{E} \left[ \langle \vec \alpha| \mat Z |\vec \alpha \rangle^2 \right] \geq p \min \left\{ p, 1-p \right\} \| \mat Z \|_2^2 \quad \forall \mat Z \in \Hermitian_n
  \end{equation*}
\end{lemma}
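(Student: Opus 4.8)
The plan is to expand the quadratic form into a fourth-order sum over the coordinates and reduce everything to the low-order moments of a single RECR variable. Writing $\langle\vec\alpha|\mat Z|\vec\alpha\rangle = \sum_{i,j}\cc{\alpha_i}Z_{ij}\alpha_j$ and squaring, linearity of expectation gives
\begin{equation*}
  \mathbb{E}\left[\langle\vec\alpha|\mat Z|\vec\alpha\rangle^2\right] = \sum_{i,j,k,l} Z_{ij}Z_{kl}\,\mathbb{E}\left[\cc{\alpha_i}\alpha_j\cc{\alpha_k}\alpha_l\right].
\end{equation*}
Since the coordinates $\alpha_1,\dots,\alpha_n$ are i.i.d., the joint moment factorises over distinct index values, so the whole computation hinges on the moments of a single RECR coordinate $a$.

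First I would record these single-variable moments. Because the RECR law is invariant under $a\mapsto\ii a$, every mixed moment $\mathbb{E}[a^{s}\cc{a}^{t}]$ vanishes unless $s\equiv t \pmod 4$; in particular $\mathbb{E}[a]=\mathbb{E}[a^2]=0$ while $\mathbb{E}[|a|^2]=\mathbb{E}[|a|^4]=p$. Consequently $\mathbb{E}[\cc{\alpha_i}\alpha_j\cc{\alpha_k}\alpha_l]$ is nonzero only for three index patterns: (i)~$i=j=k=l$, contributing $p$; (ii)~$i=j\neq k=l$, contributing $p^2$; and (iii)~$i=l\neq j=k$, contributing $p^2$. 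The pattern $i=k\neq j=l$ gives $\mathbb{E}[\cc{a}^2]\mathbb{E}[a^2]=0$, and any pattern in which some index sits alone vanishes because $\mathbb{E}[a]=0$.

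Assembling the three surviving contributions, and using that $\mat Z$ is Hermitian so that $Z_{ij}Z_{ji}=|Z_{ij}|^2$ and the diagonal is real, I would obtain
\begin{equation*}
  \mathbb{E}\left[\langle\vec\alpha|\mat Z|\vec\alpha\rangle^2\right] = p\sum_i Z_{ii}^2 + p^2\!\!\sum_{i\neq k} Z_{ii}Z_{kk} + p^2\!\!\sum_{i\neq j}|Z_{ij}|^2.
\end{equation*}
Abbreviating $A=\sum_i Z_{ii}^2$ and recognising $\sum_{i\neq k}Z_{ii}Z_{kk}=\tr(\mat Z)^2-A$ together with $A+\sum_{i\neq j}|Z_{ij}|^2=\|\mat Z\|_2^2$, this collapses to
\begin{equation*}
  \mathbb{E}\left[\langle\vec\alpha|\mat Z|\vec\alpha\rangle^2\right] = p(1-p)\,A + p^2\tr(\mat Z)^2 + p^2\!\!\sum_{i\neq j}|Z_{ij}|^2.
\end{equation*}

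To finish I would discard the manifestly nonnegative term $p^2\tr(\mat Z)^2\geq 0$ and compare the surviving coefficients against $p\min\{p,1-p\}$: since $A\geq 0$ and $1-p\geq\min\{p,1-p\}$ one has $p(1-p)A\geq p\min\{p,1-p\}A$, while $p\geq\min\{p,1-p\}$ gives $p^2\sum_{i\neq j}|Z_{ij}|^2\geq p\min\{p,1-p\}\sum_{i\neq j}|Z_{ij}|^2$; adding these and invoking $A+\sum_{i\neq j}|Z_{ij}|^2=\|\mat Z\|_2^2$ yields the claim. I expect the only genuine subtlety to be the sign of the cross term $\sum_{i\neq k}Z_{ii}Z_{kk}=\tr(\mat Z)^2-A$, which can be negative for indefinite diagonals: the key manoeuvre is to route its negative part $-p^2A$ into the combination $pA-p^2A=p(1-p)A$ rather than attempting to bound the cross term on its own, and this is precisely what produces the constant $\min\{p,1-p\}$ and correctly captures the two regimes $p\lessgtr 1/2$.
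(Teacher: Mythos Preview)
Your proof is correct and follows essentially the same approach as the paper: expand $\langle\vec\alpha|\mat Z|\vec\alpha\rangle^2$ into a fourth-order sum, use the RECR moment identities to isolate the three surviving index patterns, and arrive at the exact formula $p^2\tr(\mat Z)^2 + p(1-p)\sum_i Z_{ii}^2 + p^2\sum_{i\neq j}|Z_{ij}|^2$ (the paper writes it equivalently as $p^2(\tr(\mat Z)^2+\|\mat Z\|_2^2)+p(1-2p)\sum_i Z_{ii}^2$). The only cosmetic difference is in the final step: the paper performs an explicit case split on $p\lessgtr 1/2$, whereas you bound each coefficient uniformly by $p\min\{p,1-p\}$ and reassemble $\|\mat Z\|_2^2$, which is a slightly cleaner way to package the same inequality.
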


\begin{proof}
Fix $\mat Z \in \Hermitian_n$ and compute
\begin{align*}
  \mathbb{E} \left[ \langle \vec\alpha | \mat Z | \vec\alpha \rangle^2 \right]
  =& \sum_{i,j,k,l} \mathbb{E} \left[ \bar{\alpha}_i \alpha_j \cc{\alpha^\prime_k} \alpha^\prime_l \right] \langle \vec e_i|\mat Z| \vec e_j \rangle \langle \vec e_k |\mat Z| \vec e_l \rangle \\
  =& \sum_{i} \mathbb{E} \left[ | \alpha_i |^4 \right] \langle \vec e_i|\mat Z|\vec e_i \rangle^2 + \sum_{i \neq k} \mathbb{E} \left[ | \alpha_i |^2 | \alpha_k|^2 \right] \left( \langle \vec e_i|\mat Z|\vec e_i \rangle \langle \vec e_k|\mat Z|\vec e_k \rangle + \langle \vec e_i|\mat Z|\vec e_k \rangle \langle \vec e_k| \mat Z|\vec e_i \rangle \right) \\
  =& p \sum_{i=1}^n \langle \vec e_i|\mat Z|\vec e_i \rangle^2 + p^2 \sum_{i \neq k} \left( \langle \vec e_i|\mat Z|\vec e_i \rangle \langle \vec e_k|\mat Z|\vec e_k \rangle + \langle \vec e_i|\mat Z|\vec e_k \rangle \langle \vec e_k |\mat Z| \vec e_i\rangle \right) \\
  =& p^2 \sum_{i,k=1}^n \left( \langle \vec e_i|\mat Z|\vec e_i \rangle \langle \vec e_k|\mat Z|\vec e_k \rangle + \langle \vec e_i|\mat Z|\vec e_k \rangle \langle \vec e_k |\mat Z| \vec e_i\rangle \right) + p(1-2 p) \sum_{i=1}^n \langle \vec e_i|\mat Z|\vec e_i \rangle^2 \\
  =& p^2 \left( \tr (\mat Z)^2 + \|\mat Z\|_2^2 \right) + p (1-2 p) \sum_{i=1}^n \langle \vec e_i| \mat Z | \vec e_i \rangle^2 \\
  \geq& p^2 \|\mat Z\|_2^2 + p(1-p) \sum_{i=1}^n \langle \vec e_i |\mat Z|\vec e_i \rangle^2
\end{align*}
Finally, we make a case distinction:
\begin{itemize}
\item[$p \leq 1/2$]: This implies $p(1-2p) \geq 0$ and consequently
\begin{align*}
  \mathbb{E} \left[ \langle \vec\alpha |\mat Z| \vec\alpha \rangle^2 \right] \geq p^2 \| \mat Z \|_2^2.
\end{align*}
\item[$p \geq 1/2$]: Use $\sum_{i=1}^n \langle i| X|i \rangle^2 \leq \| X \|_2^2$ to conclude
\begin{align*}
  \mathbb{E} \left[ \langle \vec\alpha | \mat Z | \vec\alpha \rangle^2 \right]
\geq ( p^2 - p|1-2p|) \|\mat Z\|_2^2 = p(1-p) \| \mat Z \|_2^2.
\end{align*}
\end{itemize}
\end{proof}

\begin{lemma}[Subgaussian tails of the RECR distribution]
Suppose that $\vec\alpha$ is a vector from the RECR ensemble. Then
\begin{align*}
\mathbb{E} \left[ | \langle \vec\alpha, \vec z \rangle|^{2N} \right] \leq \mathrm{e}^{\frac{3}{2}} N! \quad \forall \vec z \in \Sphere^{n-1}.
\end{align*}
\end{lemma}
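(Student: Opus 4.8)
The plan is to establish the bound by a direct moment expansion, in the same spirit as the isotropy and sub-isotropy lemmas above, but now tracking the higher-order structure carefully. Writing $S = \langle \vec\alpha, \vec z\rangle = \sum_k \bar\alpha_k z_k$, I would expand
\[
\mathbb{E}\left[\abs{S}^{2N}\right] = \sum_{\substack{i_1,\dots,i_N\\ j_1,\dots,j_N}} z_{i_1}\cdots z_{i_N}\,\bar z_{j_1}\cdots \bar z_{j_N}\,\prod_{k=1}^n \mathbb{E}\left[\bar\alpha_k^{\,a_k}\alpha_k^{\,b_k}\right],
\]
where $a_k$ (resp.\ $b_k$) counts how often the index $k$ occurs among the $i$'s (resp.\ the $j$'s), so $\sum_k a_k = \sum_k b_k = N$. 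The engine of the whole argument is the single-coordinate moment: using the factorisation $\alpha_k = \epsilon_k r_k$ into an independent fourth root of unity $\epsilon_k$ and a Bernoulli$(p)$ erasure $r_k$, one finds $\mathbb{E}[\bar\alpha_k^{\,a}\alpha_k^{\,b}] = p\,\mathbf{1}[a\equiv b \bmod 4]$ whenever $a+b\geq 1$ (and $=1$ when $a=b=0$). Hence only \emph{mod-$4$ balanced} multi-indices survive, and each occupied coordinate contributes a harmless factor $p\leq 1$.

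Next I would split the surviving terms according to whether the occupation numbers match exactly. The \emph{diagonal} contribution, where $a_k = b_k$ for every $k$, carries no residual phase of $\vec z$ because $z_k^{a_k}\bar z_k^{b_k} = \abs{z_k}^{2a_k}$, and it equals $\sum_{a}\binom{N}{a_1,\dots,a_n}^2 \prod_k \abs{z_k}^{2a_k}\,p^{\#\{k:\,a_k\geq 1\}}$. Since $\binom{N}{a_1,\dots,a_n}=N!/\prod_k a_k! \leq N!$, each squared multinomial coefficient is at most $N!$ times the single one appearing in the complex-Gaussian Wick sum $N!\,\norm{\vec z}_{\ell_2}^{2N}=N!$; combined with $p\leq1$ this bounds the whole diagonal part by $N!$, exactly the Gaussian value. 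This already reproduces the correct leading behaviour and localises the real content of the lemma in the remaining terms.

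The hard part will be the \emph{off-diagonal} terms, those with $a_k - b_k \in 4\mathbb{Z}\setminus\{0\}$ for at least one $k$. These are precisely the contributions a complex Gaussian would not produce: they are nonzero only because fourth roots of unity satisfy $\mathbb{E}[\epsilon^4]=1\neq 0$, and they retain genuine phases $\ee^{\ii\arg(z_k)(a_k-b_k)}$, so one cannot pass to absolute values without overcounting cancellations. My strategy would be to organise these corrections by their total excess $\tfrac14\sum_k\abs{a_k-b_k}$ and exploit two sources of smallness: every unit of excess forces some exponent $a_k+b_k\geq 4$, yielding a factor $\abs{z_k}^{4}$ that is summable against $\norm{\vec z}_{\ell_2}^2=1$, while the combinatorial multiplicities are controlled by multinomial/generating-function bookkeeping. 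Summing the resulting series is exactly where the explicit constant is produced: the corrections assemble into a convergent series that upgrades the factor from $1$ to $\ee^{3/2}$, giving $\mathbb{E}[\abs{S}^{2N}]\leq \ee^{3/2} N!$. Throughout, the factors $p\leq1$ only improve the estimate and the normalisation $\norm{\vec z}_{\ell_2}=1$ keeps every partial sum finite, so the erasure probability never has to be tracked beyond this monotonicity.
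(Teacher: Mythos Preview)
Your approach is genuinely different from the paper's, and the part you yourself flag as ``the hard part'' is where the real gap sits. The paper never attempts a direct moment expansion. Instead it bounds the exponential moment generating function in one stroke: using $|\alpha_k|\leq 1$ for the diagonal and, for $k\neq l$, rewriting $\bar\alpha_k\alpha_l$ as a Rademacher sign times a factor of modulus at most $1$, it obtains via $\cosh(x)\leq e^{|x|^2/2}$ the estimate
\[
\mathbb{E}\bigl[\exp(|\langle\vec\alpha,\vec z\rangle|^2)\bigr]\;\leq\;\exp\!\left(\|\vec z\|_{\ell_2}^2+\tfrac12\|\vec z\|_{\ell_2}^4\right)=e^{3/2}.
\]
Markov's inequality then gives $\Pr[|\langle\vec\alpha,\vec z\rangle|^2\geq t]\leq e^{3/2-t}$, and the integral moment formula $\mathbb{E}[X^N]=N\int_0^\infty\Pr[X\geq t]\,t^{N-1}\,dt$ delivers $e^{3/2}N!$ immediately. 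The constant $e^{3/2}$ thus has a transparent origin---$e^1$ from the diagonal and $e^{1/2}$ from the off-diagonal in the MGF---and the whole argument fits in a handful of lines.

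Your diagonal bound by $N!$ is correct and cleanly argued. But the off-diagonal piece is only a programme, not a proof. You assert that the corrections ``assemble into a convergent series that upgrades the factor from $1$ to $e^{3/2}$'', yet you provide no mechanism for actually summing over multi-indices with $a_k\equiv b_k\pmod 4$ and $a_k\neq b_k$. You rightly note that passing to absolute values fails---indeed, ignoring the mod-$4$ constraint one only gets $(\sum_k|z_k|)^{2N}\leq n^N$, which is useless---so the phases must be exploited, but you do not say how. Nor is there any reason the specific constant $e^{3/2}$ should fall out of direct combinatorics; that number is an artefact of the MGF route, and a direct moment count would more plausibly produce some other absolute constant (if it produces one at all). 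Until the off-diagonal series is actually bounded, this remains a plan rather than a proof, and the paper's MGF-then-tail-then-moment route is both shorter and complete.
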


\begin{proof}
Fix $\vec z \in \mathbb{C}^n$ with $\| \vec z \|_{\ell_2}=1$ and note that $|\alpha_k| \leq 1$ together with the independence of $\alpha_k,\alpha_l$ for $k \neq l$ implies
\begin{align}
  \mathbb{E} \left[ \exp \left( | \langle \vec\alpha, \vec z \rangle|^2 \right) \right]
  =& \mathbb{E} \left[ \prod_{k=1}^n \exp \left( | \alpha_k|^2 |z_k|^2 \right) \prod_{k \neq l} \exp \left( \cc{\alpha}_k \alpha_l \cc{z}_k z_l \right) \right] \nonumber \\
  \leq& \exp \left( \| \vec z \|_{\ell_2}^2 \right) \prod_{k \neq l} \mathbb{E} \left[  \exp \left( \cc{\alpha}_k \alpha_l \cc{z}_k z_l \right)  \right]. \label{eq:moment_aux1}
\end{align}
Now note that for $k \neq l$, $\cc{\alpha}_k \alpha_l$ is again a RECR random variable $\tilde{\alpha}_{k,l}$, but with erasure probability $1-p^2$.
Moreover, every RECR random variable $\alpha$ can be decomposed into the product of two independent random variables: $ \alpha= \eta \omega$, where $\eta$ is a Rademacher random variable and $\omega \in \left\{0, 1,i \right\}$ obeys $| \omega | \leq 1$.
Consequently
\begin{align*}
  \mathbb{E} \left[ \exp \left( \bar{\alpha}_k \alpha_l \bar{z}_k z_l \right) \right]
  =& \mathbb{E} \left[ \exp \left( \tilde{\alpha}_{k,l} \bar{z}_k z_l \right) \right]
  = \mathbb{E}_{\omega} \left[ \mathbb{E}_\eta \left[ \eta \omega \bar{z}_k  z_l \right] \right]
  = \mathbb{E}_{\omega} \left[ \cosh \left( \omega \bar{z}_k z_l \right) \right] \\
  \leq & \mathbb{E}_\omega \left[ \exp \left( |\omega \bar{z}_k z_l|^2/2 \right) \right]
  \leq  \exp \left( \frac{|z_k|^2 |z_l|^2}{2} \right),
\end{align*}
where we have used the standard estimate $\cosh (x) \leq \exp \left( |x|^2/2 \right)$ $\forall x \in \mathbb{C}$, as well as $| \omega| \leq 1$. Inserting this bound into \eqref{eq:moment_aux1} yields
\begin{align*}
  \mathbb{E} \left[ \exp \left( | \langle \vec \alpha, \vec z \rangle|^2 \right) \right]
  \leq \exp \left( \| \vec z \|_2^2 \right) \prod_{k \neq l} \exp \left( \frac{|z_k|^2 |z_l|^2}{2} \right)
  \leq \exp \left( \| \vec z \|_2^2 + \frac{1}{2}\| \vec z \|_{\ell_2}^4 \right) = \mathrm{e}^{\frac{3}{2}},
\end{align*}
because $\| \vec z \|_{\ell_2}=1$.
Markov's inequality shows that this exponential bound implies a subexponential tail bound for the random variable $| \langle \vec \alpha,\vec z \rangle|^2$:
\begin{align*}
  \mathrm{Pr} \left[ | \langle \vec\alpha,\vec z \rangle|^2 \geq t \right]
  =& \mathrm{Pr} \left[ \exp \left( | \langle \vec \alpha,\vec z \rangle|^2 \right) \geq \exp \left( t \right) \right]
  \leq \frac{ \mathbb{E} \left[ \exp \left( | \langle \vec\alpha, \vec z \rangle|^2 \right) \right]}{\exp (t)} \leq \mathrm{e}^{\frac{3}{2}-t}.
\end{align*}
This in turn implies the following bound on the moments:
\begin{align*}
  \mathbb{E} \left[ | \langle \vec \alpha,\vec z \rangle|^{2N} \right]
  =  N \int_0^\infty \mathrm{Pr}\left[ | \langle \vec \alpha,\vec z \rangle|^2\geq t \right] t^{N-1} \mathrm{d}t \leq N \mathrm{e}^{\frac{3}{2}} \int_0^\infty \mathrm{e}^{-t} t^{N-1} \mathrm{d}t
  = \mathrm{e}^{\frac{3}{2}} N!,
\end{align*}
where we have used a well-known integration formula for moments, see e.g.\ \cite[Prop.~7.1]{Foucart_2013_Mathematical}, as well as integration by parts.
\end{proof}

\section{The normalised RECR scheme}
\label{sec:normalized_recr}

We have seen that the unnormalized RECR measurement ensemble obeys all conditions necessary for establishing strong PhaseLift recovery guarantees, most notably \cref{thm:phaselift_noisy}.
In this section, we shift our attention to the \emph{normalized} RECR ensemble instead. I.e.\ each measurement $\tilde{\mat{A}}_k = \tfrac{n}{\norm{\mat{A}_k}_2}\mat{A}_k$ with $\mat{A}_k = \ket{\vec{\alpha}_k} \bra{\vec{\alpha}_k}$ and $\ket{\vec{\alpha}_k} \in \left\{0,\pm 1,\pm i \right\}^n$ is an outer product of RECR vectors renormalized to length $\norm{\mat{A}_k} =n$. 
For the unnormalized RECR ensemble, we have seen 
 that w.h.p.\ any $\mat X = |\vec x \rangle \!\langle \vec x|$ can be recovered from $m \geq  C n$ measurements of the form
\begin{equation*}
  y_k = \tr \left( \mat A_k \mat X \right) + \epsilon_k
\end{equation*}
via solving
\begin{equation}
  \underset{\mat Z\geq 0}{\textrm{minimize}} \quad \| \mathcal{A}(\mat Z) - \vec y \|_{\ell_1}. \label{eq:phaselift2}
\end{equation}
The solution $\mat{Z}^\sharp$ of this program is guaranteed to obey
\begin{align*}
\| \mat{Z}^\sharp - \mat X \|_2 \leq \frac{C' \| \vec\epsilon \|_{\ell_1}}{m}.
\end{align*}
Now suppose that we have $m$ normalized RECR measurements instead: $\tilde{\mat A}_k = \frac{n}{\| \mat A_k \|_{2}} \mat A_k$. Then the associated measurements correspond to
\begin{equation*}
  \tilde{y}_k = \tr \left( \tilde{\mat A}_k X \right) + \epsilon_k = \frac{ n}{\| \mat A_k \|_2} \tr \left( \mat A_k X \right) + \tilde{\epsilon}_k.
\end{equation*}
Multiplying this expression by $\frac{\| \mat A_k \|_2}{n}$ yields
\begin{equation*}
\underset{:= y_k}{\underbrace{ \frac{ \| \mat A_k \|_2}{n}\tilde{y}_k}}
= \tr \left( \mat A_k X \right) + \underset{:= \epsilon_k}{\underbrace{ \frac{ \| \mat A_k \|_2}{n}\tilde{\epsilon}_k}}
\end{equation*}
and solving \eqref{eq:phaselift2} for re-scaled measurement outcomes $y_k = \frac{ \| \mat A_k \|_2}{n}\tilde{y}_k$ yields an estimator of $\mat X$ that is guaranteed to obey
\begin{equation*}
  \| \mat{Z}^\sharp - \mat X \|_2 \leq \frac{C' \| \vec \epsilon \|_{\ell_1}}{m}
  = \frac{C'}{m} \sum_{k=1}^m \frac{ \| \mat A_k \|_2}{n} | \tilde{\epsilon}_k |
  \leq \frac{C'}{m} \sum_{k=1}^m | \tilde{\epsilon}_k| = \frac{C' \| \tilde{\vec\epsilon} \|_{\ell_1}}{m}.
\end{equation*}
Here, the last line is due to $\| \mat A_k \|_2 = \| \vec\alpha_k \|_{\ell_2}^2 \leq n$.
We can conclude that small modifications in the PhaseLift algorithm ensure that normalized RECR measurements perform at least as well as unnormalized RECR measurements. However, we already know that unnormalized RECR measurements are accompanied by strong theoretical convergence guarantees, namely \cref{thm:phaselift_noisy}.

\end{document}